\DeclareMathAlphabet{\mathpzc}{OT1}{pzc}{m}{it}
\newcommand{\ketbra}[2]{\lvert #1 \rangle \! \langle #2 \rvert}
\newtheorem{theorem}{Theorem}[section]
\newtheorem{lemma}[theorem]{Lemma}
\newtheorem{definition}[theorem]{Definition}
\newtheorem{corollary}[theorem]{Corollary}
\newcommand{\eq}[1]{\hyperref[eq:#1]{(\ref*{eq:#1})}}
\renewcommand{\sec}[1]{\hyperref[sec:#1]{Section~\ref*{sec:#1}}}
\newcommand{\thm}[1]{\hyperref[thm:#1]{Theorem~\ref*{thm:#1}}}
\newcommand{\lem}[1]{\hyperref[lem:#1]{Lemma~\ref*{lem:#1}}}
\newcommand{\cor}[1]{\hyperref[cor:#1]{Corollary~\ref*{cor:#1}}}
\newcommand{\itm}[1]{\hyperref[itm:#1]{\ref*{itm:#1}}}
\newcommand{\app}[1]{\hyperref[app:#1]{Appendix~\ref*{app:#1}}}
\newcommand{\dfn}[1]{\hyperref[dfn:#1]{Definition~\ref*{dfn:#1}}}
\newcommand{\fig}[1]{\hyperref[fig:#1]{Figure~\ref*{fig:#1}}}
\newcommand{\clm}[1]{\hyperref[clm:#1]{Claim~\ref*{clm:#1}}}
\newcommand{\alg}[1]{\hyperref[alg:#1]{Algorithm~\ref*{alg:#1}}}
\newcommand{\stp}[1]{\hyperref[stp:#1]{Step~\ref*{stp:#1}}}
\newcommand{\asm}[1]{\hyperref[asm:#1]{Assumption~\ref*{asm:#1}}}
\newcommand{\prot}[1]{\hyperref[prot:#1]{Protocol~\ref*{prot:#1}}}
\newcommand{\prob}[1]{\hyperref[prob:#1]{Problem~\ref*{prob:#1}}}
\newcommand{\rmk}[1]{\hyperref[rmk:#1]{Remark~\ref*{rmk:#1}}}
\newcommand{\cons}[1]{\hyperref[cons:#1]{Construction~\ref*{cons:#1}}}
\newcommand{\conj}[1]{\hyperref[conj:#1]{Conjecture~\ref*{conj:#1}}}
\newcommand{\tbl}[1]{\hyperref[tbl:#1]{Table~\ref*{tbl:#1}}}
\newcommand{\tr}[0]{\mathrm{tr}}
\let\originalleft\left
\let\originalright\right
\renewcommand{\left}{\mathopen{}\mathclose\bgroup\originalleft}
\renewcommand{\right}{\aftergroup\egroup\originalright}
\newcommand{\A}[0]{\mathcal{A}}
\newcommand{\D}[0]{\mathcal{D}}
\newcommand{\N}[0]{\mathcal{N}}
\renewcommand{\O}[0]{\mathcal{O}}
\DeclareMathOperator*{\Exp}{\mathbb{E}}
\DeclareMathOperator{\poly}{poly}
\DeclareMathOperator{\polylog}{polylog}
\DeclareMathOperator{\trnc}{trnc}
\newcommand{\class}[1]{\mathsf{#1}}
\renewcommand{\P}[0]{\class{P}}
\newcommand{\sharpP}[0]{{\#\class{P}}}
\newcommand{\BPP}[0]{\class{BPP}}
\newcommand{\BQP}[0]{\class{BQP}}
\newcommand{\NP}[0]{\class{NP}}
\newcommand{\PSPACE}[0]{\class{PSPACE}}
\newcommand{\MA}[0]{\class{MA}}
\newcommand{\PH}[0]{\class{PH}}
\newcommand{\NISQ}[0]{\class{NISQ}}
\newcommand{\RE}[0]{\class{RE}}
\newcommand{\MIP}[0]{\class{MIP}}
\newcommand{\Id}[0]{\mathbbm{1}}
\newcommand{\bit}{\{0,1\}}
 \newcommand{\nai}[1]{} 
\begin{document}

\title{Oracle Separation between Noisy Quantum Polynomial Time and the Polynomial Hierarchy}

\author[1]{ Nai-Hui Chia}
\author[2]{ Min-Hsiu Hsieh}
\author[3]{ Shih-Han~Hung}
\author[2,4]{En-Jui Kuo}

\affil[1]{\footnotesize Department of Computer Science, Rice University}
\affil[2]{\footnotesize Foxconn Research, Taipei, Taiwan}
\affil[3]{\footnotesize Department of Electrical Engineering, National Taiwan University}
\affil[4]{\footnotesize Physics Division, National Center for Theoretical Sciences, Taipei 106319, Taiwan }

\date{}
\maketitle

\begin{abstract}
This work investigates the oracle separation between the physically motivated complexity class of noisy quantum circuits, drawing inspiration from definitions like those in \cite{chen2022complexity}. We establish that with a constant error rate, separation can be achieved in terms of $\NP$. When the error rate is $\Omega(\log n/n)$, we can extend this result to the separation of $\PH$. Notably, our oracles, in all separations, do not necessitate error correction schemes or fault tolerance, as all quantum circuits are of constant depth. This indicates that even quantum computers with minor errors, without error correction, may surpass classical complexity classes under various scenarios and assumptions. We also explore various common noise settings and present new classical hardness results, generalizing \cite{raz2022oracle, bassirian2021certified}, which are of independent interest.
\end{abstract}
\section{Introduction}

\nai{The first part of the introduction is redundant. The introduction shall start with the problem we are going to tackle, which is to understand the power of NISQ devices. I will revise the intro later.}

One of the most important goals in the field of quantum complexity theory is to characterize the computational complexity of the complexity class $\BQP$ \cite{bernstein1993quantum,nielsen2002quantum}, which comprises languages that can be efficiently solved by a quantum computer. Researchers have studied the computational power of $\BQP$ in comparison to classical complexity classes \cite{bernstein1993quantum, adleman1997quantum, fortnow1999complexity}, leading to a chain of inclusions:

\begin{equation}
{\mathsf{P\subseteq BPP\subseteq BQP\subseteq PP\subseteq \P^{\sharpP}\subseteq PSPACE\subseteq EXP}}.
\end{equation}

The inclusion of $\BQP$ in this chain highlights the potential computational advantage of quantum computers over classical computers. 
However, we cannot hope to show $\BPP\neq\BQP$ without proving that $\P\neq\PSPACE$. 
Instead, researchers seek to design problems that are difficult for specific complexity classes to solve but can be solved by other classes relative to an oracle, e.g., oracle separation between $\BQP$ and classical complexity classes. 
These oracle separations provide insights into the relative power between different complexity classes and help us understand the limitations of quantum computers in solving certain problems.
 
Separating $\BQP$ from classical complexity classes relative to an oracle has a long history. 
The Deutsch-Jozsa algorithm provides a separation between $\P$ and $\BQP$ \cite{berthiaume1992quantum} and has been further extended to $\NP$ and $\BQP$ \cite{berthiaume1994oracle}. 
Simon's problem provides a separation between $\BQP$ and $\BPP$ \cite{simon1997power}. 
Later, Bernstein and Vazirani defined a problem called Recursive Fourier Sampling \cite{bernstein1993quantum}, which separates $\BQP$ and $\MA$.
Subsequently, Watrous \cite{watrous2000succinct} gave an oracle $\O$ relative to which $\BQP\not\subset\MA$. 
In 2009, Aaronson defined a problem called Forrelation \cite{aaronson2010bqp} and showed that if the Generalized Linial-Nisan Conjecture were true, then there would exist an oracle that separates $\BQP$ and the polynomial hierarchy. However, the conjecture turned out to be false \cite{aaronson2011counterexample}.
In 2018, Raz and Tal \cite{raz2022oracle} modified the original Forrelation problem into a different form
and proved an oracle separation between $\BQP$ and $\PH$. The proof can be simplified by stochastic calculus \cite{wu2020stochastic}.
On the negative side, Bennett, Bernstein, Brassard, and Vazirani \cite{bennett1997strengths} showed that there exists an oracle relative to which $\NP \not\subset \BQP$. 
In \cite{aaronson2021acrobatics}, Aaronson, Ingram, and Kretschmer have proved numerous $\BQP$-type oracle separations based on \cite{raz2022oracle}. For example, there exists an oracle relative to which $\BQP = \P^{\sharpP}$ and yet $\PH$ is infinite.

$\BQP$ refers to the ideal quantum computation without being subject to any noise.  However, in the real world, physical devices are prone to noise during the initialization of qubits, gates, and measurements, which can prevent longer quantum computations \cite{preskill2018quantum}.
Luckily, fault-tolerant quantum computing and quantum error correction \cite{devitt2013quantum, lidar2013quantum, aharonov1997fault, aharonov1996limitations, raussendorf2007fault, bacon2000universal, shor1996fault} allow to mitigate these errors. 
While fault-tolerant quantum computing has been shown to solve hard computational tasks such as factoring a large number, there are already some devices that seem to outperform classical computation but may not be able to perform universal quantum computation. 
Most of them include sampling-based quantum supremacy experiments such as BosonSampling \cite{aaronson2011computational, zhong2020quantum} and random circuit sampling \cite{arute2019quantum}. 
These devices are often referred to as NISQ devices \cite{preskill2018quantum,bharti2022noisy}, which stands for "Noisy Intermediate-Scale Quantum". They are a type of quantum computer that is currently available, has a limited number of qubits, and experiences a high level of noise and errors in its operations. These types of quantum computers are considered to be intermediate in terms of their capabilities and are not yet able to perform large-scale, error-corrected quantum computations. However, NISQ devices can still be useful for certain types of quantum algorithms including Variational quantum eigensolver (VQE) \cite{kandala2017hardware}, Quantum optimization algorithms (QAOA) \cite{streif2019comparison} 
and generating cryptographically certified random bits \cite{aaronson2023certified}, as well as for exploring the potential of quantum computing. 
Standard NISQ platforms include superconducting qubits \cite{wallraff2004strong, krantz2019quantum, larsen2015semiconductor}, trapped ions \cite{bruzewicz2019trapped}, optical systems \cite{o2007optical}, and Rydberg atoms \cite{wu2021concise}. These platforms represent the most promising technologies for building practical quantum computers with a few dozen qubits, paving the way toward quantum advantage. 

It is intriguing to consider whether similar quantum supremacy can be achieved by noisy quantum circuits. If it can be shown that $\text{Post-(noisy quantum circuit)}=\text{PostBQP}$ and an exact classical simulation exists, then the Polynomial Hierarchy would collapse for a reason akin to that in IQP \cite{lund2017quantum}. Therefore, demonstrating such would provide substantial evidence against the possibility of classical simulation of noisy quantum circuits.

To incorporate such a concept into complexity, \cite{chen2022complexity} defines a computational complexity class called $\NISQ$ by modifying the original $\BQP$ by adding the local depolarization noise after each gate, preparing all zero states and performing error measurement. 
Specifically, an informal yet informative definition of $\BQP_{\lambda}$ could be stated as follows, and its rigorous definition will be provided in \app{nisq}. We will modify their model by writing error function using $\lambda$ explicitly.
\begin{definition}[$\BQP_{\lambda}$ complexity class, informal] $\BQP_{\lambda}$ contains all problems that can be solved
by a polynomial-time probabilistic classical algorithm with access to a noisy quantum device. To
solve a problem of size $n$, the classical algorithm can access a noisy quantum device that can:
\begin{enumerate}
    \item Prepare a noisy $\poly(n)$-qubit all-zero state;
    \item Apply arbitrarily many layers of noisy two-qubit gates;
    \item Perform noisy computational basis measurements on all the qubits simultaneously.
\end{enumerate}  
\end{definition} 

\noindent Here, noisy quantum devices mean that each qubit in each time step is subject to the depolarizing noise with a function $\lambda(n):\mathbb{N} \to \mathbb{R}$ which $1>\lambda(n)> 0$.
Note that the implementation of fault-tolerant quantum computation schemes in such a model is forbidden. \nai{We shall discuss why fault-tolerant quantum computing is forbidden.}
One might think that if $\lambda(n)$ is sufficiently small, we could naturally implement error correction. However, this is prohibited as intermediate measurements are not allowed. The contribution of this paper is based entirely on using the original quantum circuit for the noisy one and then addressing the problem by directly dealing with the noise.

Based on the previous discussion, the following natural question arises: what is the power of such a physically inspired complexity class $\BQP_{\lambda(n)}$ for given $\lambda$? 
Chen, Cotler, Huang, and Li \cite{chen2022complexity} provided a first non-trivial answer from the complexity theory point of view. In particular, they showed that $\NISQ=\cup_{\lambda(n)=\text{constant}>0}\BQP_{\lambda(n)}$ can provide exponential speedup in particular problems such as Simon's problem \cite{simon1997power} and the Bernstein-Vazirani problem \cite{bernstein1993quantum}. 
In terms of oracle separation, they provide an oracle separation between $\NISQ$ and $\BPP$ using a robustified version of Simon's oracle. 
However, this modification is highly nontrivial, as it makes the original Simon's oracle robust to errors without modifying the oracle, or it requires an additional conjecture \cite[Appendix~I]{chen2022complexity}. 
Such modification enhances the original oracle and makes the separation unsurprising since it doesn't make the oracle more useful to classical computers, but a $\NISQ$ machine can enjoy the robustness.

Initially, one may think that when $\lambda$ is sufficiently small, a $\BQP_{\lambda}$ machine can exhibit the full computational power of a noiseless quantum computer. 
However, in relativized worlds, as shown in \cite{chen2022complexity}, this is not the case. The first example is that $\NISQ$ cannot achieve a quadratic speedup on unstructured search. The second example is that $\NISQ$ requires $\Omega((1-\lambda)^{-n})$ copies of $\rho$ to learn $|\tr(P\rho)|$ for all $P \in \{I, X, Y, Z\}^{\otimes n}$ up to a constant error, whereas only $O(n)$ copies are needed for $\BQP$. 
In summary, $\NISQ$ can solve Simon's and Bernstein-Vazirani's problems, which are hard for $\BPP$. But $\NISQ$ can not achieve a quadratic speedup on unstructured search using standard Grover's algorithm. 
This indicates that $\NISQ$ is nontrivially positioned between $\BPP$ and $\BQP$. Namely, we expect that $\BPP \subsetneq \NISQ \subsetneq \BQP$. Therefore, we believe that $\NISQ$ deserves further investigation.

\subsection{Our contributions}
In our work, we intend to address the following question:  
\vspace{-2mm}
\begin{quote}
{\em For a fixed function $\lambda$, can $\BQP_{\lambda}$ solve problems (and find the corresponding oracles that provide the separation) that are hard for classical complexity classes? 
}
\end{quote}

We will show that $\BQP_{\lambda}$ is still very powerful by providing evidence to support this claim. This illustrates how this physically inspired model can still achieve oracle separation and outperform classical complexity classes.

\textbf{Remark:} As described in the definition of $\BQP_{\lambda}$, intermediate measurements are not allowed. Consequently, our main contributions focus on directly addressing the noise.
\begin{theorem}[Informal]\label{thm:sh-1}
There exist a constant $\lambda>0$ and an oracle relative to which $\BQP_{\lambda} \not\subset \NP$.
\end{theorem}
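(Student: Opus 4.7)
The plan is to prove \thm{sh-1} by constructing an oracle $\mathcal{O}$ relative to which a Forrelation-style promise problem $\Pi$ lies in $\BQP_\lambda^\mathcal{O}$ for some constant $\lambda>0$ but not in $\NP^\mathcal{O}$. The backbone is the Forrelation separation of Raz and Tal, modified in two ways: (i) the oracle is built with polynomial redundancy, allowing many parallel shallow-circuit subroutines to share oracle access, and (ii) the quantum algorithm is a constant-depth, parallelized estimator whose signal is provably preserved under per-qubit depolarizing noise.

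First, I would draw $(f,g)\colon \bit^n\to\{\pm 1\}$ either from the Forrelation distribution $\D_{\text{yes}}$ or from the uniform distribution $\D_{\text{no}}$, and define $\mathcal{O}(x,j,\sigma)$ to return $f(x)$ or $g(x)$ independent of the tag $j\in[T]$ with $T=\poly(n)$; the promise problem $\Pi$ is to decide the source distribution. As the $\BQP_\lambda$ algorithm I would run $T$ shallow Forrelation subcircuits in parallel over disjoint registers, each tagging its queries with a distinct~$j$, and compute a bounded-weight observable per copy. Constant depth makes each observable's expectation damped only by a factor that depends on $\lambda$ and depth (both constants), not on~$n$; classical averaging over the $T$ copies amplifies the inverse-polynomial per-copy Forrelation bias into an $\Omega(1)$ bias sufficient to decide $\Pi$ with probability $2/3$.

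For classical hardness, I would observe that the redundancy in $\mathcal{O}$ carries no additional information, so any $\NP^\mathcal{O}$ algorithm can be simulated by an $\NP$-oracle algorithm against the underlying Forrelation oracle with at most a polynomial blowup in queries. The Raz--Tal polynomial-method bound (or its stochastic-calculus refinement) then implies $\Pi\notin\NP^\mathcal{O}$ by the standard averaging argument over random oracles combined with an enumeration over $\NP$-oracle machines. A union bound together with the exponentially small failure probability of the $\BQP_\lambda$ algorithm then fixes a single oracle witnessing both sides of the separation.

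The main obstacle will be the noise analysis for the quantum algorithm. Per-qubit depolarization at constant rate shrinks every non-identity Pauli component of the output state by $(1-\lambda)^{O(\text{wt})}$, so the naive $n$-qubit computational-basis Forrelation estimator is exponentially suppressed. The crux is therefore to choose a bounded-weight observable (or equivalently to split the computation into constantly-many-qubit-wide parallel shallow subcircuits, using the oracle's redundancy to keep their queries ``independent'') whose combined classical average yields a constant-bias estimator despite $\lambda$-noise on linearly many qubits. This is the delicate step: engineering the interaction between the Pauli-weight structure of the observable, the constant circuit depth, and the way redundant oracle blocks feed the parallel copies, so that neither the quantum signal is killed by the noise nor the classical hardness of Forrelation is diluted by the added redundancy.
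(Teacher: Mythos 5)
Your approach diverges fundamentally from the paper's, and it runs into an obstacle that the paper itself identifies as a genuine no-go.

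The paper does not use Forrelation at all for the $\NP$ separation. It uses the Deutsch--Jozsa problem together with the Berthiaume--Brassard diagonalization oracle (\thm{bb}). The reason this choice matters is exactly the issue you flag at the end of your proposal: under constant-rate depolarizing noise, the global $n$-qubit Forrelation estimator's signal is suppressed exponentially in $n$, and there is no known bounded-weight observable that recovers it. The paper's ``Limit of Square Forrelation'' section makes this precise: any covariance matrix $\Sigma$ satisfying the completeness condition for the noisy Forrelation-type algorithm would need $\sum_{ij}\tilde\Sigma_{ij}^2\gtrsim N\exp(n)$ while keeping $\lambda_{\max}(\Sigma)\leq\poly(n)$, which is impossible. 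So at constant $\lambda$ the paper proves Forrelation cannot give the separation, which is why it only derives the $\PH$ result at $\lambda=\Omega(\log n/n)$, and even then only by restricting noise to occur before the oracle.

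Your redundancy/parallelization idea does not escape this. Making the oracle $\O(x,j,\sigma)$ independent of the tag $j$ gives the quantum algorithm nothing it didn't already have (it could always make $T$ calls), and crucially each parallel copy still suffers the same exponential Pauli-weight suppression. Classical averaging over $T=\poly(n)$ copies can amplify an inverse-polynomial per-copy bias, but not an exponentially small one. The step you call ``the delicate step'' --- splitting the Forrelation estimator into constantly-many-qubit-wide subcircuits whose outputs can be classically combined --- is not a detail to be filled in; it is the crux, and the Hadamard-transform structure of Forrelation genuinely resists such a decomposition. You would need a fundamentally different observable, and the paper's no-go lemma suggests none exists within this framework.

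The Deutsch--Jozsa route succeeds for a structural reason worth internalizing: the promise (constant vs.\ balanced) guarantees that the Fourier mass of $f$ either sits entirely on $0^n$ or entirely off it, and this shows up \emph{per output coordinate}. After noise, the $i$-th output bit's expectation is $g(\lambda)+(1-\lambda)^4\bar s_i$ where $\bar s_i=\sum_{s:s_i=1}|\hat f(s)|^2$. For balanced $f$, a union bound forces some $\bar s_i\geq 1/n$, and the noise only shrinks this by the constant factor $(1-\lambda)^4$, not exponentially. Hence $O(n^2\log n)$ repetitions and a per-coordinate majority vote suffice. That per-coordinate, polynomially-large, noise-robust gap is exactly what Forrelation lacks.
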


This theorem has two implications. First, as shown in \cite{chen2022complexity}, one can use a majority vote to solve the Bernstein-Vazirani problem, in which each bit of the hidden string is either $0$ or $1$ and there is only one Fourier component. But this theorem shows that even with exponential Fourier components, but with a promise of $\frac{1}{\poly(n)}$ gaps,  $\BQP_{\lambda(n)=\lambda_0}$ can still distinguish them, even with a constant error rate. This also shows that $\NISQ$ \cite{chen2022complexity} can also enjoy such separation.

This separation is proven by leveraging Deutsch–Jozsa algorithm, where we are given an oracle that computes a function $f\colon \{0,1\}^{n}\to \{0,1\}.$ We are promised that the function is either constant or balanced. Standard $\BQP$ only needs a single query while $\NP$ requires exponential queries \cite{berthiaume1994oracle}. We show that there exists $\lambda_0>0$ constant which that $\BQP_{\lambda(n)=\lambda_0}$ can solve the Deutsch–Jozsa problem using $O(n^2 \log n)$ queries. Hence the theorem follows.

\begin{theorem}[Informal]\label{thm:sh-3} There exist $\lambda: \mathbb{N} \to \mathbb{R} $, where $\lambda(n)=\Omega(\log n/n)$, and an oracle relative to which $\BQP_{\lambda} \not\subset \PH$. 
\end{theorem}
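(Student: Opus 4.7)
The strategy is to lift the Raz--Tal oracle separation between $\BQP$ and $\PH$ via the Forrelation problem \cite{raz2022oracle} to the noisy setting. Raz and Tal already establish that the relevant distribution over Forrelation instances cannot be distinguished by any $\mathsf{AC}^0$-oracle circuit, and hence by any $\PH$ machine, so the $\PH$ lower bound is inherited for free. The new content is only an upper bound: a $\BQP_\lambda$ algorithm must still solve Forrelation when $\lambda = \Theta(\log n/n)$.

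First I recall the noiseless algorithm: prepare $|0^n\rangle$, apply the constant-depth circuit $U = H^{\otimes n} O_g H^{\otimes n} O_f H^{\otimes n}$, measure in the computational basis, and accept on outcome $|0^n\rangle$. The YES/NO bias is $\Omega(1/\poly(n))$ (in fact $\Omega(1)$ in the original promise version). I run the same circuit on the noisy device and bound how much the bias shrinks under depolarizing noise of rate $\lambda$ applied to every qubit after each gate layer.

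The key technical step is a Pauli-contraction estimate. Decomposing each noise layer as $\N_\lambda^{\otimes n} = \sum_{S \subseteq [n]} (1-\lambda)^{|S|}\lambda^{n-|S|}\,\mathrm{id}_S \otimes \mathcal{D}_{\bar S}$ with $\mathcal{D}$ the complete depolarization, the unique ``no-error'' branch across all $d=O(1)$ noise layers has weight $(1-\lambda)^{nd}$ and reproduces the ideal output, while the ``error'' branches completely depolarize at least one qubit at some intermediate time, erasing the oracle's influence on that wire and rendering the branch's contribution to the $|0^n\rangle$ outcome essentially oracle-independent. Combining these observations, one obtains
\[
p_{\mathrm{noisy}}^{\mathrm{YES}} - p_{\mathrm{noisy}}^{\mathrm{NO}} \;\geq\; (1-\lambda)^{O(n)}\cdot\bigl(p_{\mathrm{ideal}}^{\mathrm{YES}}-p_{\mathrm{ideal}}^{\mathrm{NO}}\bigr).
\]
For $\lambda = c\log n/n$ with a sufficiently small constant $c$, the factor $(1-\lambda)^{O(n)} = n^{-O(c)} = 1/\poly(n)$, so the noisy bias is still $1/\poly(n)$. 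Since the $\BQP_\lambda$ model lets the outer classical algorithm invoke the noisy device $\poly(n)$ times with fresh initialization, I amplify by independent repetition and majority vote; a standard Chernoff bound boosts the success probability above $2/3$. This places Forrelation in $\BQP_\lambda$ and yields the claimed oracle separation.

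The main obstacle is controlling the ``error-branch'' contributions to the measurement probability and showing they cancel between YES and NO to leading order. The intuition that sufficiently deep depolarization is oracle-agnostic mirrors the folklore classical-simulability threshold for noisy circuits at noise rate $\Theta(\log n/n)$, but making it rigorous requires either explicit Pauli-weight bookkeeping -- propagating the observable $|0^n\rangle\langle 0^n| = \frac{1}{2^n}\sum_{P\in\{I,Z\}^n} P$ backward through $U$ and tracking the $(1-\lambda)^{|\mathrm{supp}(\cdot)|}$ contractions at each noise layer -- or a trace-norm bound of the form $\|\rho_{\mathrm{noisy}} - (1-\lambda)^{nd}\rho_{\mathrm{ideal}} - (1-(1-\lambda)^{nd})\tau\|_1 = o((1-\lambda)^{nd})$ for some oracle-independent state $\tau$. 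Either route must be executed with care so that the implicit constants in the $O(n)$ exponent are absolute, so that the final threshold $\lambda = \Omega(\log n/n)$ comes out as advertised.
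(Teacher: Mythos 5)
Your proposal aims higher than the paper: you want the same circuit to survive noise \emph{everywhere} (before and after the oracle) at rate $\lambda = \Theta(\log n/n)$, whereas the paper explicitly restricts to noise occurring only \emph{before} the oracle. The paper acknowledges this as a limitation, and the restriction is not cosmetic: it is what allows the completeness analysis to go through cleanly.

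The specific gap is the claim that error branches are ``essentially oracle-independent.'' This is not true. If a qubit is depolarized \emph{before} $O_f$, that wire becomes $I/2$, but the oracle still acts on it and the post-oracle state still depends on $f$ through the shifted Fourier spectrum (the paper computes exactly this: an error pattern $e$ shifts the Fourier sample from $s$ to $s+e$, and the resulting bias is $\frac{1}{N}\sum_i 2\tilde\Sigma_{i,i+e}^2$, which is an oracle-dependent quantity that happens to vanish for $e\neq 0$ when $\Sigma$ is diagonal --- a structural fact about the covariance, not a generic erasure). If a qubit is depolarized \emph{after} $O_f$, that single wire becomes uninformative, but the remaining $n-1$ wires still carry oracle-dependent correlations, so the contribution to $\Pr[\,|0^n\rangle\,]$ in that branch is still oracle-dependent. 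You would need a quantitative argument that the YES$-$NO bias in all such branches does not cancel the no-error term, and you flag this yourself as ``the main obstacle'' without closing it. The paper sidesteps it entirely by assuming the noise is confined to before the single oracle call, so the error branches can be analyzed as a statistical mixture over shifted Fourier samples.

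There are two further mismatches with the paper's route. First, you use the two-query Forrelation circuit ($H O_g H O_f H$, accept on $|0^n\rangle$); the paper instead uses the SquaredForrelation variant of \cite{bassirian2021certified} with one-query Fourier sampling followed by a classical query to $g$, precisely because the depolarizing errors before the oracle can be cleanly absorbed into a shift of the sampled string. Second, the paper does real new work on the hardness side: it introduces a generalized $\mathbf{SQUAREDFORRELATION}(\Sigma)$ framework with an arbitrary PSD covariance $\Sigma$ and re-proves the $AC^0$ lower bound in that setting (\thm{mainUD}), which is needed to express the completeness condition in terms of $\sum_e \mathcal{E}(e)\sum_i \tilde\Sigma_{i,i+e}^2$. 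Your proposal treats the Raz--Tal hardness as ``inherited for free,'' which is only correct if you use their distribution verbatim; the paper's argument needs the generalization. If you want to salvage the stronger all-layers-noisy claim, you would need to do the Pauli-weight bookkeeping you sketch at the end and show, for the \emph{specific} Raz--Tal distribution, that the oracle-dependent parts of all error branches contribute $o((1-\lambda)^{O(n)})$ to the YES$-$NO gap --- this is plausible but is exactly what is missing, and the paper does not prove it.
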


This theorem indicates that quantum computers, even with small errors and without any fault-tolerant scheme, can still outperform the entire polynomial hierarchy, which is pretty remarkable. 
The theorem shows that the entire computational power of quantum computation is really based on the unitary transformation of probability amplitudes and superposition of the initial state and again, with only constant depth. 
From the experimental side, this indicates for particular error regions like $\log n /n$ that even only with $\BQP_{\lambda}$ without fault tolerance can still demonstrate quantum supremacy consisting of a problem that can not be solved in polynomial Hierarchy.

Here we require two limitations, one is that our method can only show such separation in $\Omega(\log n/n)$ and also require the error only happened before the oracle. However, the circuit only is just simply fourier sampling so it only uses oracle only once.

Finally, by leveraging Raz's theorem \cite{raz2022oracle}, Aaronson, DeVon, William can prove more oracle separations \cite{aaronson2021acrobatics}. We can also follow the proof and show an oracle relative to which $\BQP_{\lambda} \not\subset \NP/\mathsf{poly}$ exists.
Here we recall $\lambda(n)= \Omega(\frac{\log n}{n}).$
\begin{theorem}\label{thm:acr}
There exists an oracle relative to which $\BQP_{\lambda} = \P^\sharpP$ and $\PH$ is infinite and there exists an oracle relative to which $\BQP_{\lambda} \not\subset \NP/\mathsf{poly}$.    
\end{theorem}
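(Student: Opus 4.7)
The plan is to mirror the oracle constructions of \cite{aaronson2021acrobatics} for $\BQP$, substituting their noiseless single-query Forrelation decoder with the noisy, constant-depth, single-query Fourier-sampling decoder that underlies Theorem~\ref{thm:sh-3}. The asymmetry that makes this transfer work is that the lower-bound halves of the constructions in \cite{aaronson2021acrobatics} are purely classical --- they are the Raz--Tal bound \cite{raz2022oracle} against $\PH$ and a counting-style strengthening against $\NP/\cpoly$ --- so they do not care whether the upper-bound side is realized by $\BQP$ or by $\BQP_\lambda$, provided the quantum machine can still decide the planted Forrelation instances.

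First, I would repackage the quantum primitive. Theorem~\ref{thm:sh-3} gives, for $\lambda(n)=\Omega(\log n / n)$, a single-query constant-depth $\BQP_\lambda$ algorithm that decides the Raz--Tal variant of Forrelation with inverse-polynomial bias, subject to the caveat that noise is allowed only before the oracle call. Since $\BQP_\lambda$ permits arbitrary classical post-processing outside the noisy device, I can amplify the bias to $1-2^{-\poly(n)}$ by running $\poly(n)$ independent copies of the noisy circuit and taking a majority vote. This is compatible with the ban on intermediate measurements because each repetition is a completely fresh execution.

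For the first statement ($\BQP_\lambda = \P^{\sharpP}$ with $\PH$ infinite), I would take the oracle of \cite{aaronson2021acrobatics}, which bundles three ingredients: a padded $\PSPACE$-complete language queried in such a way that $\BPP$ relative to the oracle already captures $\P^{\sharpP}$; planted Forrelation instances that block $\PH$ from collapsing at any level; and a diagonalization that keeps the hierarchy properly separated. Since $\BPP \subseteq \BQP_\lambda$ (the classical algorithm may simply ignore the noisy device) and $\BQP_\lambda \subseteq \BQP \subseteq \P^{\sharpP}$ (the noise channels can be simulated), the first ingredient alone yields $\BQP_\lambda = \P^{\sharpP}$ relative to the oracle, while the $\PH$-infinity side rests entirely on the classical Raz--Tal lower bound and is unchanged. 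For the second statement, I would feed the noisy decoder into the $\NP/\cpoly$ strengthening in \cite{aaronson2021acrobatics}, which uses a union bound over pairs of (nondeterministic machine, polynomial advice string) to show that a random Forrelation instance defeats every fixed $\NP/\cpoly$ machine; a diagonalization then produces an oracle on which $\BQP_\lambda$ decides the language but no $\NP/\cpoly$ machine does.

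The main obstacle is bookkeeping rather than conceptual: verifying that the restricted $\BQP_\lambda$ model used in Theorem~\ref{thm:sh-3} --- single oracle query per run, constant depth, noise concentrated before the oracle call, and external classical amplification --- survives the embedding of many Forrelation subinstances into a single oracle string that is used throughout \cite{aaronson2021acrobatics}. Each planted subinstance is still queried at most once per quantum run, and amplification happens outside the device, so the only real check is that the oracle indexing used to select a subinstance can be folded into the constant-depth Fourier-sampling circuit without introducing additional queries or intermediate measurements. Once this is verified, both parts of Theorem~\ref{thm:acr} follow by composing Theorem~\ref{thm:sh-3} with the templates of \cite{aaronson2021acrobatics}.
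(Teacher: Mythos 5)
Your proposal is correct and takes essentially the same route as the paper: the paper's own argument simply states that one follows Theorem~29 and Corollary~31 of \cite{aaronson2021acrobatics}, substituting $\BQP$ with $\BQP_\lambda$ (powered by the noisy single-query Forrelation decoder of Theorem~\ref{thm:sh-3}), while the classical lower-bound halves are unchanged. Your additional remarks on amplification outside the noisy device and on the bookkeeping for embedding Forrelation subinstances are consistent elaborations of the same plan, which the paper leaves implicit by omitting the detailed proof.
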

These theorems as \cite{aaronson2021acrobatics} indicated that the randomness is indeed inherent in the quantum computation compared to the standard classical randomized algorithm $\BPP.$ Here we show that even within a small noise region, such randomness can still be preserved and can not be fixed as the standard $\BPP$.

The first part of the proof is straightforward and follows the steps of Theorem 29 in \cite{aaronson2021acrobatics} closely, substituting the complexity class $\BQP$ with the $\BQP_{\lambda}$ model. The second part mirrors Corollary 31 in \cite{aaronson2021acrobatics}  similarly, by replacing $\BQP$ with $\BQP_{\lambda}$. Consequently, we omit the detailed proof.

\subsection{Related work}
Here we briefly discuss some related works. Two particular families of noisy models are considered. First, the noise originates solely from the oracle; this is referred to as the Faulty Oracle Models. Second, the noise comes from the quantum gates \cite{chen2022complexity, hamoudi2024nisq} as well as from our own work. The key difference between the faulty oracle models and the noisy quantum model is that the former assumes noise occurs only inside the oracle, whereas in the noisy quantum model, quantum computation outside of the oracles is also subject to noise from external sources. Moreover, most studies on faulty oracles consider global noise affecting the entire oracle, while the quantum circuit typically suffers from local, qubit-wise noise.   

\textbf{The Faulty Oracle Models:} Several studies \cite{muthukrishnan2019sensitivity, shenvi2003effects, long2000dominant, regev2008impossibility, salas2008noise} have examined different faulty oracles and demonstrated various lower bounds on the performance of quantum algorithms using such oracles, primarily focusing on Grover's search algorithm. \cite{muthukrishnan2019sensitivity} shows that the glued-trees problem demonstrates an exponential quantum speedup only when using quantum annealing with excited state evolution in an oracular setting, making it unique to date. \cite{shenvi2003effects, long2000dominant} examine the robustness of Grover's search algorithm to a random phase error in the oracle and analyze its complexity. \cite{regev2008impossibility} further shows that Grover's unstructured search problem has no quantum speed-up in the setting where each oracle call has some small probability of failing. \cite{salas2008noise} discuss the complexity of Grover's algorithm on various error regions.

\textbf{NISQ models}
Here we discuss some related works on the complexity of noisy quantum circuits. \cite{chen2022complexity} define and study the complexity class $\NISQ$ and give evidence that $\BPP \not \subset \NISQ \not \subset \BQP$. However, in their proof of separation $\BPP \not \subset \NISQ$ by using Simon's algorithm needs to modify the oracle to do the error correction to decrease the error from constant to $\frac{1}{n}.$ 
The basic difference between our model, $\BQP_{\lambda(n)}$, and $\NISQ$ \cite{chen2022complexity} is their focus on the constant error rate. In contrast, we allow the error rate to be a function of the number of qubits, $n$. Formally, $\NISQ = \bigcup_{\lambda(n) = \text{constant}>0} \BQP_{\lambda(n)}$.
\cite{aharonov1996limitations}  introduces the concept of noisy reversible gates, which are reversible gates that can be implemented in the presence of noise. They then prove that there exist functions that cannot be implemented by a polynomial number of noisy reversible gates. This result implies that noisy reversible computation is strictly less powerful than classical computation. \cite{hamoudi2024nisq} considers three models of noisy quantum circuits, including those with constant depth, limited quantum queries, or gates subject to dephasing or depolarizing channels.

In our work, we aim to demonstrate positive outcomes from noisy quantum circuits. On the complementary side, some negative results suggest that the advantages of quantum circuits can be diminished by noise.
\paragraph{Negative results.}
We include a few notable ones. $\MIP^{*}$ Vanishes in the Presence of Noise' \cite{dong2023computational} is compared to the well-known result $\MIP^{*}=\RE$ \cite{ji2021mip}. Additionally, we reference the classical algorithm for noisy random quantum sampling \cite{aharonov2023polynomial}. There is a new paper about NISQ query complexity for collision finding \cite{hamoudi2024nisq}.



Based on the definition of a noisy quantum circuit, we can demonstrate the existence of an oracle $\O$ such that a noisy quantum circuit augmented by $\O$ does not fall under the subsets of $A^{\O}$, where $A=\NP, \PH$, under certain additional assumptions. 
One immediate question is that whether we can enhance our $\PH$ separation for $\lambda=O(1).$ We will provide the reason why we think this is impossible without using quantum error correction or robustified oracle. 
 

\section{Preliminaries}\label{sec:pre}
In this section, we briefly review two well-known standard quantum algorithms: the Deutsch-Jozsa algorithm and the Forrelation algorithm. These algorithms will play important roles in our paper. 

Throughout this paper, we assume that the input size is denoted as $N:=2^n$ and that readers are familiar with basic complexity classes, including $\P, \PH, \NP, \BQP$, and $\NP/\poly$\footnote{See the \href{https://complexityzoo.net/Complexity_Zoo}{Complexity Zoo} for definitions.}. We define $\mathcal{N}_{\mathbb{R}}(0,1)$ to be the normal distribution where the mean is $0$ and the standard deviation is $1.$ We denote $H_N$ or $H$ to represent the standard Hadamard gate of $n$ qubits.

\subsection{Deutsch-Jozsa algorithm}\label{sec:DJ}

In this section, we review the oracle separation between $\BQP$ and $\NP$ by introducing the Deutsch-Jozsa algorithm \cite{berthiaume1992quantum}. We are given an oracle that implements a function: ${\displaystyle f\colon \{0,1\}^{n}\rightarrow \{0,1\}}.$ We are promised that the function is either constant, meaning $f(x)=1$ or $f(x)=0$ for all $x$, or balanced. The task is to determine if $f$ is constant or balanced using the oracle.
The quantum algorithm for this problem requires only a single query, as shown in the circuit in Figure
\ref{Fig:circuitDJ}.
The probability of measuring $\ket{0}^{n}$ is $1$ if $f(x)$ is constant and $0$ if $f(x)$ is balanced.

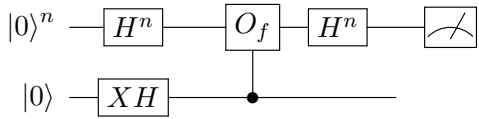
\begin{figure}[!htbp]
\begin{center}
\begin{minipage}{10cm}
\Qcircuit @C=1em @R=1em {
\lstick{\ket{0}^n} & \gate{H^n}  & \qw        & \gate{O_f}      & \gate{H^n}      & \qw & \meter\\
\lstick{\ket{0}} & \gate{XH}      & \qw        & \ctrl{-1}          & \qw      & \qw 
}
\end{minipage}
\end{center}
\caption[Quantum Circuit for Deutsch–Jozsa algorithm]{{Quantum circuit for the Deutsch–Jozsa algorithm.}
}
\label{Fig:circuitDJ}
\end{figure}

On the other hand, Berthiaume and Brassard constructed the following oracle separation using the Deutsch–Jozsa problem:
\begin{theorem}\label{thm:bb}
There exists an oracle $\mathcal{O}$ relative to which $\BQP \not\subset \NP$. More concretely, there is a set that can be recognized in worst-case linear time by a quantum computer, yet any nondeterministic Turing machine that accepts it must take exponential time on infinitely many inputs.
This task can be solved using the Deutsch–Jozsa algorithm. \cite{berthiaume1994oracle, berthiaume1992quantum}
\end{theorem}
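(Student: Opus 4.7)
The plan is to recast the Deutsch--Jozsa promise problem as a language $L(\mathcal{O})$ that depends on an oracle $\mathcal{O}$, then prove two things: (i) $L(\mathcal{O})\in\BQP^{\mathcal{O}}$ for every admissible $\mathcal{O}$ via the circuit of \fig{circuitDJ}, and (ii) a suitable $\mathcal{O}$ can be chosen by diagonalization so that $L(\mathcal{O})\notin\NP^{\mathcal{O}}$. Concretely, I would let $\mathcal{O}$ encode, for each length $n$, a function $f_n\colon\{0,1\}^n\to\{0,1\}$ that is either constant or balanced, and set
\[
L(\mathcal{O})=\bigl\{\,1^n : f_n\text{ is constant}\,\bigr\}.
\]
The BQP upper bound is immediate: the Deutsch--Jozsa circuit makes a single query and uses $O(n)$ gates, and its acceptance probability is $1$ or $0$ according to whether $f_n$ is constant or balanced, so $L(\mathcal{O})\in\BQP^{\mathcal{O}}$ with zero error in linear time.

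For the NP lower bound, the key combinatorial observation is that any nondeterministic oracle machine $M$ running in time $p(n)$ examines at most $p(n)$ bits of $f_n$ along any single computation branch. Fix an NP machine $M$ and a length $n$ large enough that $p(n)<2^{n-1}$. Start with $f_n\equiv 1$ (a constant function), so that $1^n\in L(\mathcal{O})$ and hence $M$ must have \emph{some} accepting branch; let $Q\subseteq\{0,1\}^n$ with $|Q|\le p(n)$ be the set of queries made along this branch. Because $|Q|\le p(n)<2^{n-1}$, I can extend the partial assignment $f_n|_{Q}\equiv 1$ to a \emph{balanced} function $f'_n$ on $\{0,1\}^n$. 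Replacing $f_n$ by $f'_n$ leaves every answer along the chosen branch unchanged, so $M$ still accepts $1^n$; but now $1^n\notin L(\mathcal{O})$, a contradiction.

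To produce one oracle that defeats \emph{all} polynomial-time NP machines simultaneously, I would carry out the standard diagonalization: enumerate NP oracle machines $M_1,M_2,\dots$ together with polynomial time bounds $p_1,p_2,\dots$, and choose an increasing sequence of ``diagonalization lengths'' $n_1<n_2<\cdots$ with $p_i(n_i)<2^{n_i-1}$. At stage $i$, with $f_{n_j}$ already fixed for $j<i$, I run $M_i$ on $1^{n_i}$ tentatively against the all-ones function at length $n_i$; by the argument above, I can then define $f_{n_i}$ to be either balanced or constant in a way that flips $M_i$'s behavior while respecting whatever bits $M_i$ has actually queried. All remaining lengths can be filled in arbitrarily (say, all constant). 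This yields an oracle $\mathcal{O}$ such that for every NP machine $M_i$ there are infinitely many inputs (namely $1^{n_i},1^{n_{i'}},\ldots$ for the infinitely many indices at which a given machine is encountered in the enumeration) on which $M_i$ disagrees with $L(\mathcal{O})$, proving $L(\mathcal{O})\notin\NP^{\mathcal{O}}$.

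The main obstacle, and the place one must be careful, is synchronizing the diagonalization with the query-flipping argument: after each stage $i$ we must lock in the values of $f_{n_i}$ on \emph{all} of $\{0,1\}^{n_i}$ (not merely on the queried set $Q$), and later stages must not retroactively corrupt this choice. Choosing the $n_i$ sparsely enough that $\{n_i\}_{i\ge 1}$ are all distinct and processing one length per stage handles this cleanly. A minor technical point is ensuring the extension of $f_{n_i}|_Q$ to a balanced function exists, which only requires $|Q\cap f_{n_i}^{-1}(1)|\le 2^{n-1}$ and $|Q\cap f_{n_i}^{-1}(0)|\le 2^{n-1}$, both guaranteed by $p_i(n_i)<2^{n_i-1}$. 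The ``infinitely often'' quantifier in the theorem statement then follows because each machine $M_i$ reappears (up to equivalence) at arbitrarily large $n_i$ in the enumeration.
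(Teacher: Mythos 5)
Your proof is correct and follows essentially the same route as the paper (and the cited Berthiaume--Brassard papers): encode a constant-or-balanced Boolean function at each input length, decide constancy via Deutsch--Jozsa for the $\BQP^{\mathcal{O}}$ upper bound, and diagonalize against all $\NP$ oracle machines using the fact that a $p(n)$-time accepting branch can touch at most $p(n)<2^{n-1}$ oracle bits, leaving enough freedom to re-extend those values to a balanced function. The one point you gesture at rather than spell out is the bookkeeping for queries $M_i$ might make at lengths beyond $n_i$; the standard fix (implicit in your ``sparsely enough'') is to take $n_{i+1}>p_i(n_i)$ so $M_i$ cannot even write down such a query.
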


Berthiaume and Brassard \cite{berthiaume1994oracle, berthiaume1992quantum} construct an $\mathcal{O}$ such that a nondeterministic Turing machine would need exponential time to determine $S(\mathcal{O})$ but only takes linear time to determine $S(\mathcal{O})$ using a quantum computer.

We now present a brief proof sketch. Given any oracle $\mathcal{O}$, it can be considered as a subset of $\Sigma = \{0,1\}^*$. We state that $B(\mathcal{O})$ holds if $|\mathcal{O} \cap \{0,1\}^n| = 0$ or $|\mathcal{O} \cap \{0,1\}^n| = 2^{n-1}$. We then define $S(\mathcal{O}) = \{1^n \mid \mathcal{O} \cap \{0,1\}^n = \emptyset\}$. One can determine $S(\mathcal{O})$ by using the Deutsch--Jozsa algorithm to construct a function $f: \{0, 1\}^n \to \{0,1\}$ such that $f(o) = 0$ if $o \notin \mathcal{O}$, and $f(o) = 1$ otherwise. Determining whether $f$ is a constant or balanced function is equivalent to determining whether $1^n \in S(\mathcal{O})$, which shows that such a problem belongs to $\BQP^{\mathcal{O}}$. On the other hand, \cite{berthiaume1994oracle, berthiaume1992quantum} utilize the diagonalization approach to construct a sparse $\mathcal{O}$ such that it takes exponential time to determine $S(\mathcal{O})$. We leave the details to the original papers.

\subsection{Forrelation algorithm}\label{sec:infor}

To construct an oracle separation between $\BQP$ and $\PH$, we introduce the Forrelation problem, which has various versions. The first version was proposed by Aaronson \cite{aaronson2010bqp}. It involves two Boolean functions $f, g: \{0, 1\}^n \rightarrow \{-1, 1\}$ and quantifies their Forrelation $\Phi_{f,g}$ as:
\begin{equation}
\Phi_{f,g} = \frac{1}{2^{\frac{3n}{2}}} \sum_{x, y \in \{0, 1\}^n} f(x)(-1)^{x \cdot y} g(y).
\end{equation}
The challenge of this problem \cite{aaronson2010bqp} lies in deciding whether $|\Phi_{f,g}|$ is negligible (e.g., $\leq \frac{1}{100}$) or significant (e.g., $\geq \frac{3}{5}$), under the promise that one of these scenarios holds true.

Quantum circuits provide an efficient means to solve this problem. One approach uses a control qubit prepared in the state $\ket{+}$ and applies sequences of operations conditioned on the control qubit's state, ultimately measuring it in the $\{\ket{+}, \ket{-}\}$ basis. The acceptance probability of this circuit directly relates to the Forrelation value:
\begin{equation}
    \text{Prob(accept)} = \frac{1+ \Phi_{f,g}}{2}.
\end{equation}

Aaronson and Ambainis \cite{aaronson2015forrelation} further demonstrate that classical randomized algorithms face a lower bound of $\Omega\left(\frac{N}{\log N}\right)$ queries for solving Forrelation, underscoring the quantum advantage in this context. A two-query quantum circuit variant also exists, further illustrating the flexibility in designing quantum solutions to efficiently evaluate Forrelation. We now give the oracle separation between $\BQP$ and $\BPP$ by introducing the \textbf{Real Forrelation problem} \cite{aaronson2015forrelation}.
\begin{definition}[Real Forrelation problem]
    In the Real Forrelation problem, we have oracle access to two real functions $f, g : \{0, 1\}^n \to \mathbb{R}$ and must determine between two scenarios:
    \begin{itemize}
        \item[(i)] Every $f(x)$ and $g(y)$ value is an independent $\mathcal{N}_{\mathbb{R}}(0, 1)$ Gaussian.
        \item[(ii)] Every $f(x)$ value is an independent $\mathcal{N}_{\mathbb{R}}(0, 1)$ Gaussian, and every $g(y)$ value is defined as:
        \begin{equation*}
            \hat{f}(y) = \frac{1}{\sqrt{2^n}} \sum_{x \in \{0, 1\}^n} (-1)^{x \cdot y} f(x).
        \end{equation*}
    \end{itemize}
\end{definition}

Aaronson proved that the Real Forrelation problem belongs to $\BQP$ when accessing such an oracle. The following theorem establishes the lower bound for the Real Forrelation problem for classical randomized algorithms:
\begin{theorem}[Theorem 2 in \cite{aaronson2015forrelation}]\label{thm:aa-ga}
Any randomized algorithm solving the Real Forrelation requires $\Omega\left(\frac{N}{\log N}\right)$ queries. In contrast, the corresponding quantum algorithm only needs $O(1)$ queries. Hence, an oracle $\mathcal{O}$ can be constructed using Real Forrelation such that $\BQP^{\mathcal{O}} \not\subset \BPP^{\mathcal{O}}$.
\end{theorem}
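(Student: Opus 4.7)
The plan is to prove the theorem in three parts: the quantum upper bound, the classical randomized lower bound, and the diagonalization yielding the oracle separation.

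For the quantum $O(1)$ upper bound, I would give an explicit constant-query circuit. Prepare the state $\frac{1}{\sqrt{N}}\sum_{x\in\{0,1\}^n} \ket{x}$ by applying $H^{\otimes n}$ to $\ket{0}^n$, use a single query to an amplitude oracle for $f$ (available after rescaling/truncating the Gaussian oracle to $O(\log N)$ bits of precision) to produce a state proportional to $\sum_x f(x)\ket{x}$, apply $H^{\otimes n}$, query an amplitude oracle for $g$, and finally measure the projection onto $\ket{0}^n$ (equivalently apply one more Hadamard and check the first register). The amplitude of $\ket{0}^n$ in the final state is proportional to $\frac{1}{N}\sum_y \hat{f}(y)g(y)$. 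In case (ii), this is $\frac{1}{N}\sum_y \hat{f}(y)^2$, a chi-square-like quantity concentrating at a constant, while in case (i), $g$ is independent of $\hat{f}$ and this inner product concentrates near $0$ with fluctuations $O(1/\sqrt{N})$. Amplitude estimation with $O(1)$ more queries (or a direct measurement with $O(1)$ repetitions after appropriate renormalization) then distinguishes the two cases with constant advantage.

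For the randomized $\Omega(N/\log N)$ lower bound, I would follow Aaronson--Ambainis's small-influence / hybrid approach, invoking Yao's minimax principle to reduce to deterministic algorithms on the mixture of the two input distributions. Fix a deterministic decision tree of query complexity $q$; its output is a function only of the queried tuple of values $(f(x_1),\dots,f(x_a),g(y_1),\dots,g(y_b))$ with $a+b\le q$. In case (i), this tuple is distributed as a standard Gaussian on $\mathbb{R}^{a+b}$; in case (ii), the $f$-queries are still i.i.d.\ standard Gaussians, but each $g(y_j)=\hat f(y_j)$ is a particular linear combination $\frac{1}{\sqrt{N}}\sum_x (-1)^{x\cdot y_j}f(x)$ of all $N$ Gaussians, of which only $a$ have been observed. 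The key estimate is that, conditioned on the observed $f$-values, each $\hat f(y_j)$ is itself Gaussian with mean $O(a/\sqrt{N})$ times a bounded function of the observed values and variance $1+O(a/N)$. A careful union-bounded bound on the KL divergence (or total variation) between the $q$-dimensional Gaussian vectors in the two cases then gives distance $O(q\sqrt{\log N}/\sqrt{N})$; requiring this to be $\Omega(1)$ forces $q=\Omega(N/\log N)$.

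With both bounds in hand, the oracle separation follows by the standard diagonalization argument. One pads the Real Forrelation promise problem to length $n$ and enumerates all $\BPP$ machines $M_1,M_2,\dots$; for each $M_i$ one chooses, on a sufficiently large input length $n_i$, a specific instance of (i) or (ii) (using the lower bound of the previous paragraph, plus standard Borel--Cantelli / concentration arguments to round the Gaussian oracle to finitely many bits of precision without destroying the separation) on which $M_i$ errs, then fixes the oracle accordingly on block $n_i$. The resulting oracle $\mathcal O$ defines a language in $\BQP^{\mathcal O}$ (by the circuit above) but outside $\BPP^{\mathcal O}$.

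I expect the main obstacle to be the classical lower bound: controlling the total variation between the two $q$-dimensional Gaussian marginals when the dependence of $g$ on $f$ is mediated by the full Hadamard matrix requires carefully tracking the conditional covariances and avoiding losses that would weaken the bound from $\tilde\Omega(N)$ to $\tilde\Omega(\sqrt N)$ (as in the Boolean Forrelation setting). The discretization step for converting the Gaussian oracle into a genuine $\{0,1\}$-valued oracle, while standard, also needs care so that the noise introduced by truncation is small compared to both the quantum signal and the classical indistinguishability bound.
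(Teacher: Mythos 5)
The paper does not prove this statement; it is imported verbatim as a citation from Aaronson and Ambainis, so there is no in-paper argument to compare against. What follows assesses your sketch on its own terms.

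Your outline of the quantum $O(1)$ upper bound and the diagonalization is roughly the standard template, though the ``amplitude oracle for $f$'' is not directly available from a function oracle and requires the usual phase-oracle simulation with care about normalization; this is fixable. The genuine gap is in the classical lower bound, and it is twofold. First, there is an internal inconsistency: if, as you write, the total variation between the $q$-query marginals is $O\!\left(q\sqrt{\log N}/\sqrt{N}\right)$, then setting this to $\Omega(1)$ yields only $q=\Omega\!\left(\sqrt{N/\log N}\right)$, not $q=\Omega(N/\log N)$; the final algebra does not produce the claimed bound. Second, and more fundamentally, a per-query KL/TV hybrid of the kind you sketch cannot deliver $\Omega(N/\log N)$ at all. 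Conditioned on $a$ observed $f$-values, the conditional mean of $\hat f(y)$ at a typical (even adversarially chosen, up to a $\sqrt{\log N}$ factor) $y$ is $\Theta\!\left(\sqrt{a/N}\right)$ and the conditional variance is $1-a/N$, so the per-query KL contribution is $\Theta(a/N)$ and the total KL after $q$ adaptive queries is $O(q^2/N)$; this argument saturates at $q=\Omega(\sqrt N)$. In fact there is a matching $O(\sqrt N)$-query classical algorithm for exactly this Real Forrelation problem: query $a\approx b\approx\sqrt N$ values, compute the signed sample correlation $S=\frac{1}{ab}\sum_{i,j}(-1)^{x_i\cdot y_j}f(x_i)g(y_j)$, whose mean is $0$ in case (i) and $1/\sqrt N$ in case (ii) with standard deviation $\Theta(1/\sqrt{ab})$ in both, so $ab=\Theta(N)$ suffices. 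So the statement you are trying to prove (as copied in this paper) appears to misquote Aaronson--Ambainis: the correct classical lower bound for Forrelation/Real Forrelation in that paper is $\tilde\Omega(\sqrt N)$, not $\Omega(N/\log N)$, which is also exactly what a corrected version of your hybrid estimate gives. Finally, note that Aaronson--Ambainis do not run a straight KL hybrid; they pass through a Gaussian distinguishing lemma and analyze the decision tree's acceptance probability as a smooth functional of the Gaussian vector, which is worth consulting if you want to get the $\log$-factor accounting exactly right.
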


Raz and Tal \cite{raz2022oracle} then modified the distribution proposed by Aaronson and Ambainis \cite{aaronson2015forrelation} into a different distribution and proved an oracle separation between $\BQP$ and $\PH$. Let $\mathcal{D}_1$ and $\mathcal{D}_2$ be two probability distributions over a finite set $X$. An algorithm $\mathcal{A}$ is said to distinguish between $\mathcal{D}_1$ and $\mathcal{D}_2$ with an advantage $\epsilon$ if
\begin{equation}
    \epsilon = \left|\Pr_{x \sim \mathcal{D}_1} [\mathcal{A} \text{ accepts } x] - \Pr_{x' \sim \mathcal{D}_2} [\mathcal{A} \text{ accepts } x']\right|.
\end{equation}

To demonstrate an oracle separation between $\BQP$ and $\PH$, it suffices to identify a distribution $\mathcal{D}$ that is pseudorandom for $AC^0$ circuits, but not for efficient quantum algorithms making $\operatorname{poly}(n)$ queries.

First, we review the distribution \cite{raz2022oracle}$ \mathcal{D}$ using truncated multivariate Gaussians. Let $n \in \mathbb{N}$ and $N = 2^n$. Define $\epsilon = 1/(C \ln N)$ for a sufficiently large constant $C > 0$. Define $\mathcal{G}$ as a multivariate Gaussian distribution over $\mathbb{R}^N \times \mathbb{R}^N$ with mean $0$ and covariance matrix:
\begin{equation*}
\epsilon \cdot \begin{pmatrix}
I_N & H_N \\
H_N & I_N
\end{pmatrix},
\end{equation*}
where $H_N$ is the Hadamard transform. To take samples from $\mathcal{G}$, first sample $X = x_1, \ldots, x_N \sim \mathcal{N}_{\mathbb{R}}(0, \epsilon)$, and let $Y = H_N \cdot X$. Define $\mathcal{G}'$ as the distribution over $Z = (X, Y)$. Let $\operatorname{trunc}(a) = \min(1, \max(-1, a))$. The distribution $\mathcal{D}$ over $\{\pm 1\}^{2N}$ begins by drawing $Z \sim \mathcal{G}'$. Then for each $i \in [2N]$, draw $z'_i = 1$ with probability $\frac{1+\operatorname{trunc}(z_i)}{2}$ and $z'_i = -1$ with probability $\frac{1-\operatorname{trunc}(z_i)}{2}$. We then summarize the main theorem in \cite{raz2022oracle}.

\begin{theorem}[Oracle Separation of $\BQP$ and $\PH$ {\cite{raz2022oracle}}]\label{thm:raz}
The distribution $\mathcal{D}$ over inputs in $\{\pm 1\}^{2N}$ satisfies the following:
\begin{enumerate}
    \item There exists a quantum algorithm that makes one query to the input and runs in time $O(\log N)$, which distinguishes between $\mathcal{D}$ and the uniform distribution with an advantage of $\Omega\left(\frac{1}{\log N}\right) = \Omega\left(\frac{1}{n}\right)$.
    \item No Boolean circuit of quasipolynomial ($\operatorname{quasipoly}(N)$) size and constant depth distinguishes between $\mathcal{D}$ and the uniform distribution with an advantage better than $\frac{\operatorname{polylog}(N)}{\sqrt{N}}$.
\end{enumerate}  
Hence, the separation follows.
\end{theorem}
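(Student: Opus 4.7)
The plan is to prove the two items separately: a direct analysis of a Forrelation circuit for the quantum upper bound, and a Fourier-analytic argument combining moment estimates for the planted Gaussian distribution with the sharp level-$k$ spectral bound for $AC^0$ circuits for the classical lower bound.

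\textbf{Quantum algorithm (item 1).} I would read the input $(z'_1,\ldots,z'_{2N})$ as defining two sign oracles $f,g:\{0,1\}^n\to\{\pm 1\}$ and run the two-query Forrelation circuit recalled in \sec{infor}, whose acceptance probability is $(1+\Phi_{f,g})/2$. Under the uniform distribution $\Exp[\Phi_{f,g}]=0$ by parity, so the task reduces to lower-bounding $\Exp_{\mathcal{D}}[\Phi_{f,g}]$. Using $\Exp_{\mathcal{D}}[z'_x z'_{N+y}] = \epsilon\,(H_N)_{x,y} + O(e^{-\Omega(1/\epsilon)})$ from the Gaussian covariance together with the negligibility of the truncation event, a direct computation collapses the resulting double sum to $\Exp_{\mathcal{D}}[\Phi_{f,g}] = \epsilon + o(1) = \Omega(1/\log N)$. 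Repeating the circuit $O(\log^2 N)$ times and taking a majority vote then converts the expectation gap into the desired distinguishing probability.

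\textbf{Classical lower bound (item 2).} For any Boolean $C:\{\pm 1\}^{2N}\to\{0,1\}$, Fourier expansion gives
\begin{equation*}
\bigl|\Exp_{\mathcal{D}}[C] - \Exp_{U}[C]\bigr| \;\le\; \sum_{k\ge 1}\,\sum_{|S|=k}|\hat C(S)|\cdot\bigl|\Exp_{\mathcal{D}}[\chi_S]\bigr|.
\end{equation*}
The distribution $\mathcal{D}$ is engineered so that each $\Exp_{\mathcal{D}}[\chi_S]$ equals, up to truncation corrections that I would bound separately, a moment of a centered Gaussian whose covariance has operator norm $\epsilon$; Isserlis/Wick gives a $k/2$-fold pairing formula, and summing the resulting estimate against $|\hat C(S)|$ at level $k$ yields an expression controlled by the level-$k$ spectral $\ell_1$ weight $L_{1,k}(C)=\sum_{|S|=k}|\hat C(S)|$. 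I would then invoke Tal's sharp bound $L_{1,k}(C)\le O(\log s)^{(d-1)k}\sqrt{k}$ for depth-$d$ size-$s$ $AC^0$ circuits; for $s=\mathrm{quasipoly}(N)$ we have $\log s=\polylog(N)$, and the choice $\epsilon=1/(C\ln N)$ with $C$ a sufficiently large constant forces the per-level contribution to decay geometrically once the Hadamard structure of the admissible pairings is accounted for. Summing over $k$ yields the claimed $\polylog(N)/\sqrt{N}$ ceiling.

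\textbf{Principal obstacle.} The hard ingredient is the level-$k$ Fourier bound for $AC^0$, which I would take as a black box since reproving it requires a random-restriction and switching-lemma argument of independent length. Assuming that bound, the remaining technical difficulty is the joint control of Gaussian moments, Hadamard-matrix structure, and truncation errors: a naive $\epsilon^{k/2}$ estimate on $|\Exp_{\mathcal{D}}[\chi_S]|$ is not strong enough at large $k$, and one must exploit that admissible pairings route through rows or columns of $H_N$ whose inner products are $N$ or $0$. Handling this carefully, either via the Raz--Tal stochastic calculus argument or via a direct combinatorial pairing argument, is where most of the work of the proof really lives.
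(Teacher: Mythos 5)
The paper does not itself prove \thm{raz}; it cites it from Raz--Tal and then, in Section~\ref{sec:SQ}, proves a generalization (\thm{mainUD}) whose specialization to $\Sigma=\frac{1}{c_1 n}\Id$ recovers the hardness direction of this theorem. So the right comparison is between your proposal and the Raz--Tal argument as reproduced and generalized there.

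On item~1: your expectation computation $\Exp_{\mathcal{D}}[\Phi_{f,g}]\approx\epsilon=\Theta(1/\log N)$ is correct, but the final step ``repeat $O(\log^2 N)$ times and take a majority vote'' is both unnecessary and disallowed. The theorem asserts a one-query algorithm whose distinguishing \emph{advantage} is $\Omega(1/\log N)$; that advantage is already realized by a single run, and repeating would violate the one-query bound. The paper's corresponding algorithm (Section 4.1) is the one-quantum-query Fourier sampler followed by a single classical query to $g$, with no amplification.

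On item~2 there is a genuine gap, and it is precisely the one you flag under ``Principal obstacle,'' but it is not a loose end that can be tightened---it is a different proof strategy that is not known to work. You propose to write $\bigl|\Exp_{\mathcal{D}}[C]-\Exp_U[C]\bigr|\le\sum_{k\ge 1}\sum_{|S|=k}|\hat C(S)|\,|\Exp_{\mathcal{D}}[\chi_S]|$, estimate each moment via Isserlis, and then sum Tal's level-$k$ bound $L_{1,k}(C)\le (c\log s)^{(d-1)k}$ against the moment estimate over all $k$. But the Isserlis bound gives $|\Exp_{\mathcal{D}}[\chi_S]|\le (k/2)!\,(\epsilon/\sqrt{N})^{k/2}$ for $|S|=k$ (only $X$--$Y$ cross pairings survive, each of size $\epsilon/\sqrt{N}$, and there are $(k/2)!$ of them), and the product $L_{1,k}(C)\cdot(k/2)!\,(\epsilon/\sqrt{N})^{k/2}$ does not decay once $k$ grows past roughly $\sqrt N/\polylog(N)$; for $k$ near $2N$ the factorial term dominates. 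Making the sum converge would require exhibiting massive cancellation among the $(k/2)!$ signed pairings through $H_N$ at every level simultaneously. No such estimate is established, and this ``all-levels'' route is essentially the one Aaronson's original Forrelation paper hoped to push through via the Generalized Linial--Nisan conjecture, which turned out to be false \cite{aaronson2011counterexample}.

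What Raz--Tal actually do, and what the paper's proof of \thm{mainUD} generalizes, is to sidestep high-level moments entirely. They write the Gaussian $z\sim\mathfrak{G}$ as a sum of $t=N^c$ small i.i.d. increments and walk from the uniform case to the Forrelated case one increment at a time. The per-step difference in $\Exp[F(\cdot)]$ is rewritten, via the random-restriction identity of Chattopadhyay et al. (\thm{chat}), in terms of restricted functions $F_\rho$; because each increment has tiny variance $\Sigma/t$, only the \emph{level-2} Fourier weight $L_{1,2}(F_\rho)$ of the restrictions contributes at leading order, and this is controlled by Tal's bound at $k=2$ alone. Summing the $t$ steps and adding the truncation corrections (\lem{r2}, \thm{tr}) gives the $\polylog(N)/\sqrt{N}$ bound. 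The upshot is that the ``Hadamard structure'' you invoke is exploited not through a global cancellation estimate on all levels, but through a hybrid argument that never needs levels above two. You should replace your level-by-level summation with this random-walk-plus-restriction scheme; as written, your proof of item~2 does not go through.
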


\cite{bassirian2021certified} defines another problem called the SquaredForrelation problem and shows that the SquaredForrelation problem can also be used to construct the oracle separation of $\mathsf{BQP}$ and $\mathsf{PH}$. Let $\mathcal{D}'$ be defined as follows. We first sample $X = x_1, \ldots, x_N \sim \mathcal{N}_{\mathbb{R}}(0, \epsilon)$, and let $Y = H_N \cdot X$. Define $\mathcal{G}'$ as the distribution over $Z = (X, Y^2 - \epsilon)$. The distribution $\mathcal{D}'$ over $\{\pm 1\}^{2N}$ first draws $Z \sim \mathcal{G}'$. Then for each $i \in [2N]$, draws $z'_i = 1$ with probability $\frac{1+\operatorname{trnc}(z_i)}{2}$ and $z'_i = -1$ with probability $\frac{1-\operatorname{trnc}(z_i)}{2}$. 

\begin{definition}[{SquaredForrelation \cite{bassirian2021certified}}] 
\label{def:SquaredForrelation}
Given oracle access to Boolean functions $f, g: \{0, 1\}^n \to \{ \pm 1\}$, distinguish whether they are sampled according to a Forrelated distribution $\mathcal{D}'$ or uniformly at random.
\end{definition}
\cite{bassirian2021certified} shows that $\mathcal{D}'$ has similar properties as $\mathcal{D}$. 
\begin{theorem}[\cite{bassirian2021certified}]
The distribution $\mathcal{D}'$ over inputs in $\{\pm 1\}^{2N}$ satisfies the following:
\begin{enumerate}
\item There exists a quantum algorithm that makes one query to the input and runs in time $O(\log N)$, which distinguishes between $\mathcal{D}'$ and the uniform distribution with an advantage of $\Omega\left(\frac{1}{\log N}\right) = \Omega\left(\frac{1}{n}\right)$.
\item No Boolean circuit of quasipolynomial ($\operatorname{quasipoly}(N)$) size and constant depth distinguishes between $\mathcal{D}'$ and the uniform distribution with an advantage better than $\frac{\operatorname{polylog}(N)}{\sqrt{N}}$.
\end{enumerate}
\end{theorem}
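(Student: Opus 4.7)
The plan is to follow the two-part template of Raz--Tal (\thm{raz}), transported from the linear distribution $\mathcal{D}$ to the squared distribution $\mathcal{D}'$. For the quantum upper bound, I would exhibit a one-query Forrelation-style circuit: prepare $H^{\otimes n}|0\rangle$, query $f$ as a phase oracle, apply $H^{\otimes n}$, query $g$ as a phase oracle, and then accept on a distinguished measurement outcome. The expected acceptance bias over uniform is proportional to $\Exp[g(y)\cdot Y_y^2]/N$ where $Y=H_N X$; under $\mathcal{D}'$, each $g(y)$ is biased toward $Y_y^2-\epsilon$, so the signal evaluates to $\Theta(\epsilon)=\Theta(1/\log N)$. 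The rounding and $\trnc$ steps in the construction of $\mathcal{D}'$ only introduce $o(\epsilon)$ losses, leaving the $\Omega(1/\log N)$ advantage intact.

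For the classical lower bound, the plan is to bound the distinguishing advantage of a $\mathsf{AC}^0$ circuit $\mathcal{C}$ via its Fourier expansion:
\begin{equation*}
\left| \Pr\nolimits_{\mathcal{D}'}[\mathcal{C} = 1] - \Pr\nolimits_{U}[\mathcal{C} = 1] \right| \le \sum_{S \ne \emptyset} |\hat{\mathcal{C}}(S)| \cdot \left| \Exp\nolimits_{\mathcal{D}'}[\chi_S] \right|,
\end{equation*}
where $\chi_S$ denotes the usual $\pm 1$-valued character. I would then bound $|\Exp_{\mathcal{D}'}[\chi_S]|$ by a Wick-theorem calculation on the Gaussian pair $(X, Y^2-\epsilon)$, producing a decay rapid in $|S|$ with a factor of roughly $\sqrt{\epsilon}$ per coordinate in the second block. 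Combined with Tal's $L^1$ Fourier-level bound for $\mathsf{AC}^0$, namely $\sum_{|S|=k} |\hat{\mathcal{C}}(S)| \le (\polylog N)^k$, the resulting sum collapses to $\polylog(N)/\sqrt{N}$, matching the required advantage bound.

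The main technical obstacle, and the point where \cite{bassirian2021certified} diverges from \cite{raz2022oracle}, is the moment calculation for the squared block $Y^2-\epsilon$. Whereas Raz--Tal directly use $\Exp[X_iY_j]=\epsilon H_{ij}$, the squared variant forces one to analyze $\Exp\bigl[\prod_a(Y_{i_a}^2-\epsilon)\bigr]$, whose Wick pairings must be organized so that each $Y$-factor in the second block contributes $\sqrt{\epsilon}$ rather than $\epsilon$. Tracking these pairings and showing that they still produce the $1/\sqrt{N}$ final bound is the step that requires the most care; once set up correctly, the remainder of the derivation is a direct adaptation of the Raz--Tal framework, and the overall separation between $\BQP$ and $\PH$ follows in the same manner as in \thm{raz}.
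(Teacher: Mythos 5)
The quantum-easiness part of your proposal is essentially fine: the one-query Forrelation circuit (or its equivalent variant that Fourier-samples $f$ and then checks $g$) does give a signal of order $\epsilon = \Theta(1/\log N)$ for the squared variant, with the truncation and rounding losses controllable, just as you say. So part 1 is in good shape.

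The classical-hardness part, however, takes a route that does not actually go through, and it is \emph{not} the route used in \cite{bassirian2021certified} or the paper's own generalization (\thm{mainUD}). You propose the naive term-by-term Fourier bound
\begin{equation*}
\left| \Pr_{\mathcal{D}'}[\mathcal{C}=1] - \Pr_U[\mathcal{C}=1] \right| \le \sum_{S\ne\emptyset}|\hat{\mathcal{C}}(S)|\cdot\left|\Exp_{\mathcal{D}'}[\chi_S]\right|,
\end{equation*}
and then aim to control each moment by a Wick calculation, combined with Tal's $L^1$ level bound $\sum_{|S|=k}|\hat{\mathcal{C}}(S)|\le(\polylog N)^k$. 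This fails for three separate reasons. First, the level-$k$ moment of the (untruncated) Gaussian carries a factorial pairing-count factor: for $|S|=2a$ split across the two blocks, the worst-case moment magnitude scales like $a!\,(\epsilon/\sqrt{N})^a$, and $\sum_a a!\bigl(\epsilon(\polylog N)^2/\sqrt{N}\bigr)^a$ does not converge to $\polylog(N)/\sqrt{N}$ — the $a!$ swamps the geometric decay, so you would need a nontrivial truncation-of-levels argument that the proposal never supplies. Second, $\mathcal{D}'$ is a distribution over $\{\pm1\}^{2N}$ obtained by truncation followed by a biased rounding; $\Exp_{\mathcal{D}'}[\chi_S]$ is \emph{not} a multivariate-Gaussian moment, and the nonlinearity of $\trnc$ introduces level mixing that a bare Wick calculation does not capture. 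Third, the second block of $\mathcal{D}'$ carries $Y'_j=Y_j^2-\epsilon$, a centered chi-squared-type variable, so Isserlis/Wick applies only after expanding the squares, introducing additional pairing structure that the ``one $\sqrt{\epsilon}$ per coordinate'' heuristic understates.

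The actual proof — both in \cite{bassirian2021certified} and in the paper's generalization — avoids all of this by following the Raz--Tal hybrid argument: one interpolates between $\mathcal{D}'$ and uniform along a mixed Gaussian path $z^{i,j}$, applies the random-restriction identity of Chattopadhyay et al.\ (\thm{chat}) to rewrite each hybrid step in terms of a restricted function $F_\rho$, and then uses Tal's $L_{1,2}$ bound on $F_\rho$ \emph{only at level two}. The entire estimate thus reduces to controlling pair correlations $\Exp[x_iy_j']\sim\epsilon/\sqrt{N}$, with the truncation error handled separately via \lem{r2}. This is a genuinely different mechanism from the all-levels Fourier sum, and it is what makes the $\polylog(N)/\sqrt{N}$ bound come out without the factorial blow-up. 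Your final sentence claims the remaining work ``is a direct adaptation of the Raz--Tal framework,'' but the Fourier-sum approach you actually wrote down replaces, rather than adapts, the central random-restriction step of that framework — and without it, the argument does not close.
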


Using the distribution $\D$ or $\D'$, one can construct an oracle such that $\BQP \not\subset \PH$. This proof is adapted from the works of \cite{aaronson2010bqp, fefferman2012beating, raz2022oracle}.

\section{Noisy Quantum Advantage over $\NP$}

In this section, we will first recall our noisy model and provide the formal version of \thm{sh-1}, demonstrating that it implies the desired oracle separation. Afterward, we will delve into the proof details.

Here, we reintroduce our noisy model. Our noise model closely mirrors a quantum circuit but includes a depolarization channel applied after each gate, including the oracle. To solve the Deutsch-Jozsa (DJ) problem, we simply adapt the original quantum circuit of the DJ algorithm. Each gate is subjected to the depolarization channel. We summarize the detailed circuit in Figure \ref{Fig:DJE}. Here, $\textbf{H}$ represents the Hadamard gate, $\textbf{E}$ denotes the depolarization noise, and $\O_f$ denotes the oracle realization of $f$.

\begin{theorem}\label{thm:rigor-4}
There exists a constant $\lambda_c > 0$ such that for all $\lambda < \lambda_c$, a $\BQP_{\lambda}$ algorithm is capable of solving the Deutsch–Jozsa problem within a maximum of $n^2 \log(n)$ queries, achieving a probability greater than $\frac{2}{3}$.

\end{theorem}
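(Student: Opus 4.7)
The plan is to run the standard one-query Deutsch--Jozsa circuit many times under noise and to extract the distinguishing signal from the single-qubit marginals of the output, rather than from the $\ket{0^n}$ acceptance probability (which, as one sees from Pauli propagation, decays exponentially in $n$ under any nonzero constant noise rate and therefore cannot be amplified in $\mathrm{poly}(n)$ queries).

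Concretely, I would compute the marginal $\langle Z_i\rangle$ of the noisy output state via Heisenberg Pauli propagation. Each depolarizing layer $\mathcal{E}_\lambda^{\otimes n}$ multiplies every Pauli operator $P$ by $(1-\lambda)^{\mathrm{wt}(P)}$, while $H^{\otimes n}$ swaps $X_j \leftrightarrow Z_j$. Expanding the oracle conjugation
\begin{equation*}
O_f X_i O_f \;=\; X_i \sum_{w \in \{0,1\}^n} \hat{h}_i(w)\, Z_w, \quad h_i(x) := f(x) \oplus f(x \oplus e_i), \quad \hat{h}_i(w) := \frac{1}{2^n}\sum_x (-1)^{h_i(x) + w\cdot x},
\end{equation*}
and then tracing against $\ket{0^n}\!\bra{0^n}$, only the $w = 0$ term survives because $\langle 0^n \vert X_w \vert 0^n \rangle = \delta_{w,0}$. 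This yields $\langle Z_i\rangle = (1-\lambda)^{c_0}\, c_i$, where $c_0$ is a small constant determined by the (constant) depth of the noisy circuit and $c_i := \hat{h}_i(0) = 1 - 2\,\mathrm{Inf}_i(f)$ is the autocorrelation of $(-1)^f$ at the shift $e_i$, with $\mathrm{Inf}_i(f) := \Pr_x[f(x) \ne f(x \oplus e_i)]$.

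Next, I would aggregate over $i$. For constant $f$ every $c_i$ equals $1$, so $\bar c := n^{-1}\sum_i c_i = 1$; for balanced $f$, the discrete Poincar\'e inequality $\mathrm{Var}(f) \le \tfrac14 \sum_i \mathrm{Inf}_i(f)$ on the Boolean hypercube gives $\sum_i \mathrm{Inf}_i(f) \ge 4 \cdot \tfrac14 = 1$, hence $\bar c \le 1 - 2/n$. Choosing $\lambda_c$ small enough that $(1-\lambda_c)^{c_0} \ge \tfrac12$ then guarantees, for every $\lambda < \lambda_c$, an $\Omega(1/n)$ gap between the expected values of the statistic $S := n^{-1}\sum_i (1 - 2 y_i)$ in the two cases. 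Running the noisy circuit $T$ times and computing the empirical mean $\hat S := (nT)^{-1}\sum_{t,i}(1-2y_i^{(t)}) \in [-1,1]$, Hoeffding's inequality gives $\Pr[|\hat S - \mathbb{E}\hat S| \ge \Omega(1/n)] \le 2\exp(-\Omega(T/n^2))$, and taking $T = n^2 \log n$ drives this below $1/3$. Thresholding $\hat S$ at the midpoint of the two expectations therefore solves Deutsch--Jozsa correctly with probability $> 2/3$ using $n^2 \log n$ queries.

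The main technical obstacle, I expect, is carrying out the Pauli propagation cleanly: the oracle conjugation $O_f X_i O_f$ is a sum over all $2^n$ Paulis of potentially large weight, and one must verify that the combination of the intermediate noise damping $(1-\lambda)^{|\{i\}\cup w|}$ and the $\delta_{w,0}$ projection really does isolate the single $w=0$ term, with no sign cancellations or extra damping factors that could spoil the $\Omega(1/n)$-scale signal. Tracking the exact value of $c_0$ (which depends on whether noise is present after initialization, how many idle steps each qubit experiences, etc.) is finicky, though only its boundedness matters for the asymptotic query count; everything downstream is then standard Boolean Fourier analysis together with a one-line Hoeffding bound.
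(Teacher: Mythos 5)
Your proposal is correct and reaches the same destination as the paper's proof: run the noisy Deutsch--Jozsa circuit $O(n^2\log n)$ times, show that the $i$-th single-qubit marginal encodes the influence $\mathrm{Inf}_i(f)$ damped by a constant power of $(1-\lambda)$, observe that for balanced $f$ the total influence is at least $1$ (so some coordinate exceeds $1/n$), and finish with Hoeffding. The technical route differs at three points, each time equivalently. You compute $\langle Z_i\rangle$ by Heisenberg Pauli propagation, using $O_f X_i O_f = X_i\sum_w \hat h_i(w)Z_w$ and the fact that only $w=0$ survives tracing against $\ketbra{0^n}{0^n}$, so $\langle Z_i\rangle=(1-\lambda)^{c_0}\bigl(1-2\,\mathrm{Inf}_i(f)\bigr)$ with $c_0$ equal to the number of depolarizing layers the weight-one Pauli traverses; the paper works forward in the Schr\"odinger picture, decomposing the pre-oracle noisy state as a classical mixture over $\ket{\pm}$ error patterns $E$ and tracking $\sum_{y,s}\delta_{s+E,y}\hat f(s)\ket{y}$, landing on $y_i=(1-\lambda)^4\bar s_i+g(\lambda)$ with $\bar s_i=\mathrm{Inf}_i(f)$; one checks $1-2g(\lambda)=(1-\lambda)^4$, so the two expressions agree. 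For the influence lower bound you invoke the Poincar\'e inequality $\mathrm{Var}(f)\le\tfrac14\sum_i\mathrm{Inf}_i(f)$, while the paper gets $\sum_i\mathrm{Inf}_i(f)\ge\Pr[\text{Fourier sample}\ne 0^n]=1$ via a union bound over the spectrum---the same inequality, derived more elementarily. Finally, you threshold the aggregate statistic $\hat S$ rather than each $\bar Y_i$ individually; this avoids a union bound over $n$ coordinates in the Hoeffding step and in fact already works with $T=O(n^2)$, whereas the paper's per-coordinate test is where the extra $\log n$ arises. One modeling caveat: your derivation implicitly treats $O_f$ as a phase oracle, whereas the paper's circuit uses the bit-flip oracle on a noisy $\ket{-}$ ancilla and handles the ancilla noise separately in a remark; this only shifts $c_0$ by an additive constant and rescales the gap by another constant, so the asymptotic query count is unchanged.
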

\thm{rigor-4}  implies the following corollary.

\begin{corollary}\label{cor:rigor-4}
 There exist a constant $\lambda>0$ and an oracle $\O$ relative to which $\BQP_{\lambda}^{\O} \not\subset \NP^{\O}$.
\begin{equation}
\BQP_{\lambda}^{\O} \not\subset \NP^{\O}.
\end{equation}
Such a task can be solved using the Deutsch–Jozsa algorithm.   
\end{corollary}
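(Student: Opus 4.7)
The plan is to combine \thm{rigor-4} with the Berthiaume--Brassard oracle construction recalled in \thm{bb}. The Berthiaume--Brassard language $S(\O)=\{1^n : \O\cap\{0,1\}^n=\emptyset\}$, defined over oracles $\O$ satisfying the promise $B(\O)$ (each slice is either empty or exactly half of $\{0,1\}^n$), already has two features we want to reuse verbatim: (i) membership in $S(\O)$ is equivalent to deciding whether the indicator function $f_n(x):=\mathbbm{1}[x\in\O\cap\{0,1\}^n]$ is constant versus balanced, i.e.\ an instance of Deutsch--Jozsa on $n$ bits, and (ii) any nondeterministic Turing machine accepting $S(\O)$ on infinitely many lengths must issue exponentially many queries, since the ``no'' slices are empty and a witness would have to exhibit structure that does not exist. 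The point of the corollary is that the lower bound is purely classical and oracle-independent of the noise model, so it transfers for free.

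With that setup, the proof would proceed as follows. First, I would fix the constant $\lambda_c>0$ supplied by \thm{rigor-4} and choose any $\lambda<\lambda_c$. Second, I would invoke \thm{rigor-4} to obtain a $\BQP_\lambda$ procedure $\mathcal{A}$ that, on inputs of length $n$, decides Deutsch--Jozsa using at most $n^2\log n$ queries with success probability at least $2/3$. Plugging the oracle slice $\O\cap\{0,1\}^n$ into $\mathcal{A}$ (via the standard controlled phase-kickback oracle gate, which by definition is the one the noisy model applies a depolarizing channel around) gives a polynomial-time $\BQP_\lambda^{\O}$ algorithm deciding $S(\O)$ for every oracle respecting the Berthiaume--Brassard promise. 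Third, I would invoke the classical lower bound from \thm{bb} to conclude that for the diagonalized oracle $\O$ constructed there, no $\NP^{\O}$ machine can decide $S(\O)$. This yields $S(\O)\in \BQP_\lambda^{\O}\setminus \NP^{\O}$, proving the separation.

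One small technicality I would be careful about is that the Berthiaume--Brassard diagonalization is carried out against nondeterministic Turing machines with polynomial query bounds, and the resulting sparse $\O$ must be compatible with our noisy upper bound. Since $\mathcal{A}$ still only queries inputs of the form $x\in\{0,1\}^n$ and makes a polynomial number of queries even in the noisy model, the diagonalization step of \thm{bb} can proceed unchanged: one enumerates NTMs and at each stage plants, for a sufficiently large $n$, either the empty slice or a balanced slice so as to defeat the current NTM, while leaving room for the next stage. The noisy quantum algorithm still succeeds on this oracle because its correctness depends only on the promise $B(\O)$, which is preserved throughout the construction.

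I expect no serious obstacle: the genuinely hard content sits inside \thm{rigor-4}, which already handles the noise by repetition and majority vote up to the $O(n^2\log n)$ query budget. The corollary itself is a clean reduction from the noisy Deutsch--Jozsa upper bound to the existing Berthiaume--Brassard oracle separation, so the main work is just verifying that the noisy success probability $\ge 2/3$ is preserved under the standard polynomial amplification used implicitly in \thm{rigor-4}, and that the promise-based diagonalization of \cite{berthiaume1994oracle,berthiaume1992quantum} is model-agnostic on the classical side.
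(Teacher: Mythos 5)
Your proposal is correct and follows the same route as the paper: invoke \thm{rigor-4} for the noisy polynomial-query upper bound on Deutsch--Jozsa, then cite \thm{bb} for the Berthiaume--Brassard diagonalized oracle and classical lower bound, and combine the two. The paper's own proof is a two-line reduction making exactly these two citations; your extra discussion of the language $S(\O)$, the promise $B(\O)$, and the compatibility of the diagonalization with the noisy model is a faithful unpacking of the same argument rather than a different approach.
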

\begin{proof}
\thm{rigor-4} demonstrates that we can solve the DJ problem using $\poly(n)$ queries. Conversely, \thm{bb} indicates that the DJ problem can be leveraged to construct an oracle that cannot be solved in $\NP$ by accessing it.
\end{proof}

\subsection{Proof of \thm{rigor-4}}

Here we first outline the proof steps to demonstrate the solvability of the Deutsch-Jozsa problem using the original quantum circuit even in the presence of noise. We appeal to the following two steps. First, since we assume our noise is a depolarization channel, we can demonstrate that we can view the input before the oracle as a statistical mixture of $\ket{+}$ and $\ket{-}$ states on each qubit. Second, after the oracle and the final Hadamard gate, we measure the output state and perform a majority vote. The technical challenge arises from the fact that the noise affects the measurement result, making distinguishing constant or balanced functions difficult. The compromise is that we need to repeat the process $O(n^2 \log n)$ times instead of using constant iterations. Now let us delve into the detailed steps and introduce the necessary notations.

\begin{figure}[h]
\centering
\includegraphics[scale=0.5]{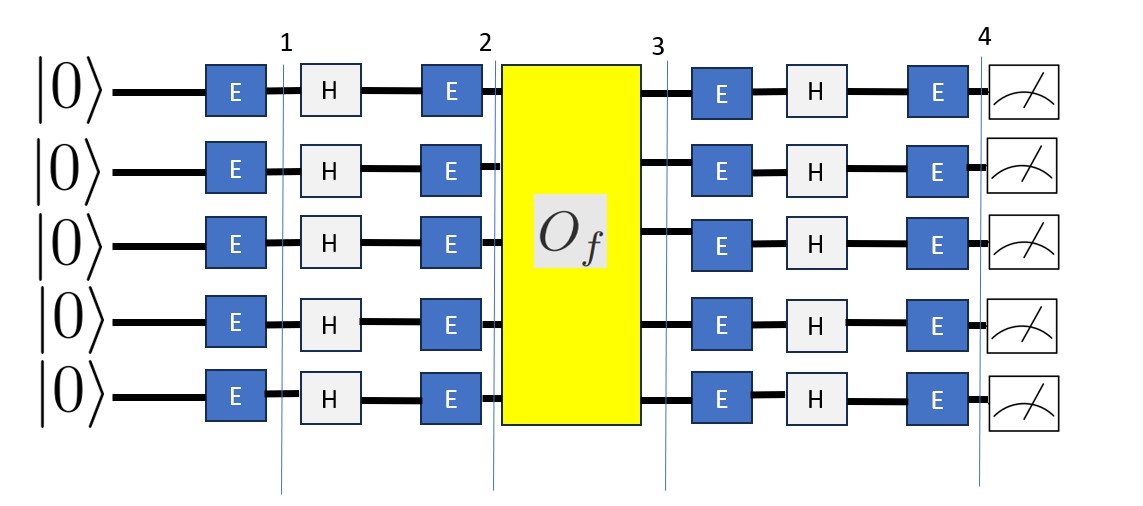}
\caption{DJ algorithm in the presence of local
noise denotes that with probability $\lambda$, an depolarization channel of single-qubit operation is
applied. We have labeled the layers of noise for ease of reference in the proof.
Here $\textbf{H}$ represents the Hadamard gate and $\textbf{E}$ represents the depolarization noise. $\O_f$ indicates the oracle realization of $f.$}
\label{Fig:DJE}
\end{figure}

Given $f:\bit^n\to \{ \pm 1\}$ and denote the Fourier spectrum of $f$ as the set $S_f:=\{s | \hat{f}(s)\neq 0\}$, where $\hat{f}(y)$ represents the Hadamard transform of $y \in \{0,1\}^{n}$. 
We now demonstrate that $\BQP_{\lambda}$ is capable of solving the Deutsch–Jozsa problem. The overarching concept is to perform $M$ measurements on the final output (with the value of $M$ to be determined later) and subsequently compute the average over the obtained bits. 
For a constant function, $S_f=\{0^n\}$; for a balanced function, $0^n\notin S_f$. In order to determine whether $f$ is constant or balanced, we need to ascertain whether $S_f$ exclusively contains $0^n$ or not. Now let us move to the first step showing that the input state before the oracle can be viewed as a statistical mixture of $\ket{+}$ and $\ket{-}$ states on each qubit. .

Consider the following single-qubit quantum circuit, starting with the initialization at $\ket{0}$ and followed by a single Hadamard gate. We can easily calculate the output density matrix.

After the initialization of $\ket{0}\bra{0}$, we have the following mixed state due to the depolarization (stage 1 in Fig \ref{Fig:DJE}):
\begin{equation}
    D_{\lambda}(\ket{0}\bra{0}) = (1-\lambda)\ket{0}\bra{0} + \lambda(\ket{0}\bra{0} + \ket{1}\bra{1}) = (1-\lambda/2)\ket{0}\bra{0} + \lambda/2\ket{1}\bra{1}
\end{equation}
After the Hadamard gate and another depolarization noise (stage 2 in Fig \ref{Fig:DJE}), we have the following mixed state:
\begin{equation}
    D_{\lambda}\left( H((1-\lambda/2)\ket{0}\bra{0} + \lambda/2\ket{1}\bra{1})H^{\dagger} \right) = D_{\lambda}( (1-\lambda/2)\ket{+}\bra{+} + \lambda/2\ket{+}\bra{-} )
\end{equation}
We can combine the results as follows:
\begin{equation}
    (1-\lambda)(1-\lambda/2)\ket{+}\bra{+} + \lambda/2\ket{-}\bra{-} + \lambda/2 (\ket{+}\bra{+} + \ket{-}\bra{-}) = p_1 \ket{+}\bra{+} + p_2\ket{-}\bra{-}
\end{equation}
Here, we use the notation $I = \ket{+}\bra{+} + \ket{-}\bra{-}$.

Thus, we can observe that each qubit, with probability $p_1$, will remain in the $\ket{+}$ state; otherwise, it becomes $\ket{-}$. We define $p_1 = (1-\lambda)(1-\lambda/2) + \lambda/2$.

For the general $n$-qubit state, given a set of error locations $E$ ($|E| = k$) with probability $p_1^k(1-p_1)^{n-k}$, the state before the oracle is $\sum_{x}\frac{1}{2^{n/2}}(-1)^{x \cdot E}\ket{x}$, and after the oracle, the state becomes $\sum_{x}\frac{1}{2^{n/2}}(-1)^{x \cdot E}f(x)\ket{x}$. Here, we view $E$ as an $n$-bit vector, where $E_i=1$ if an error occurs at the $i$-th qubit, and $0$ otherwise.

If there is no error after the oracle, the state becomes $\sum_{x}\frac{1}{2^{n/2}}(-1)^{x \cdot E}f(x)\ket{x}$. After applying another Hadamard gate, the state transforms into
$\sum_{x,y}\frac{1}{2^n}(-1)^{x \cdot E}f(x)(-1)^{x \cdot y}\ket{y}$.
If we express $f(x)$ in terms of its Fourier spectrum as $f(x) = \sum_{s}\hat{f}(s)(-1)^{s \cdot x}$, then we have
\begin{equation}
\sum_{x,y,s}\frac{1}{2^n}(-1)^{x \cdot (E+y+s)}\hat{f}(s)\ket{y} = \sum_{y,s}\delta_{s+E,y}\hat{f}(s)\ket{y}.
\end{equation} This simplifies to
\begin{equation}
\sum_{y,s}\delta_{s+E,y}\hat{f}(s)\ket{y}.
\end{equation}
The measurement outcomes must correspond to $s+E$ with a probability of $|\hat{f}(s)|^2$. Since the input state is a statistical mixture involving $E$, and as $E$ is independent of the oracle, the measurement outcomes will result in $s+E$ with a probability of $|\hat{f}(s)|^2$.

We can proceed with an analysis of the expectation value of the $i$-th bit. If $s_i=1$, then with a probability of $p_1$, the bit remains as is, and with a probability of $1-p_1$, it becomes $0$. Similarly, if $s_i=0$, then with a probability of $p_1$, the bit remains unchanged, and with a probability of $1-p_1$, it becomes $1$. The expectation value of the $i$-th bit is given by:
\begin{equation}
\Exp[X_i] = \sum_{s_i=1}|\hat{f}(s)|^2 p_1 + \sum_{s_i=0}|\hat{f}(s)|^2(1-p_1) = \sum_{s_i=1}|\hat{f}(s)|^2 p_1 + (1-p_1) \left(1 - \sum_{s_i=1}|\hat{f}(s)|^2\right).
\end{equation}
This can be further simplified as follows:
\begin{equation}
\Exp[X_i] = (2p_1-1)\sum_{s_i=1}|\hat{f}(s)|^2 + 1-p_1.
\end{equation}
Now, let's extend our analysis to include the effects of stages 3 and 4 in Fig. \ref{Fig:DJE}, both of which involve single-qubit gates. We summarize this extension in the following lemma.

\begin{lemma}\label{lem:nDJ}
For each $i \in [n]$, with a probability of $(1-\lambda)^2$, the $i$-th output bit is a binary random variable (taking values in $\{0, 1\}$) whose expectation value is given by $\Exp[X_i]$. Otherwise, the output bit is randomly chosen with equal probabilities.
\end{lemma}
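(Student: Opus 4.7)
The plan is to analyze the effect of the post-oracle noise layers (stages~3 and~4 in Figure~\ref{Fig:DJE}) on the marginal distribution of the $i$-th output bit, building directly on the pre-oracle analysis already carried out in the section. I would begin by fixing a pre-oracle error pattern $E \in \{0,1\}^n$ (occurring with probability $p_1^{n-|E|}(1-p_1)^{|E|}$) so that, by the computation immediately preceding the lemma, the state right after the oracle is the pure state $\ket{\phi_E} = 2^{-n/2}\sum_x (-1)^{x\cdot E} f(x)\ket{x}$. Denote by $\sigma_i^{(E)}$ the reduced density matrix of $\ket{\phi_E}\bra{\phi_E}$ on qubit $i$, and let $\sigma_i$ be its mixture over $E$ under the pre-oracle error distribution.

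The next step is to exploit locality. Stages~3 and~4, together with the intermediate Hadamard layer, all act as tensor products of single-qubit operations, so tracing out the qubits $j \neq i$ commutes with each of these channels. Consequently, the marginal on qubit $i$ at the moment of measurement equals $D_\lambda\bigl(H\,D_\lambda(\sigma_i)\,H^\dagger\bigr)$, where $D_\lambda(\rho) = (1-\lambda)\rho + (\lambda/2) I$. Expanding this composition and using $H I H^\dagger = I$ yields the identity
\begin{equation}
D_\lambda\bigl(H\,D_\lambda(\sigma_i)\,H^\dagger\bigr) \;=\; (1-\lambda)^2\, H\sigma_i H^\dagger \;+\; \bigl(1-(1-\lambda)^2\bigr)\,\frac{I}{2},
\end{equation}
which I would read probabilistically: with probability $(1-\lambda)^2$ the $i$-th qubit emerges in the noiseless post-Hadamard state $H\sigma_i H^\dagger$, and with the complementary probability $1-(1-\lambda)^2$ it emerges maximally mixed. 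A computational-basis measurement on $I/2$ produces a uniformly random bit, which accounts for the ``otherwise'' clause of the lemma.

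To finish, I would handle the ``clean'' branch by matching it to the earlier derivation. The expectation $\Exp[X_i] = (2p_1-1)\sum_{s_i=1}|\hat f(s)|^2 + (1-p_1)$ displayed in the body of the section was computed precisely under the assumption that no noise strikes qubit $i$ at stages~3 or~4; equivalently, it is the expectation of $\langle 1 | H\sigma_i H^\dagger | 1\rangle$ after averaging over the pre-oracle error pattern $E$. Hence the ``clean'' branch of the convex decomposition above contributes exactly a $\{0,1\}$-valued random variable with expectation $\Exp[X_i]$, and combining the two branches gives the statement of the lemma.

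I do not anticipate a genuine obstacle here; the main point requiring care is justifying that the marginalization onto qubit $i$ commutes with stages~3--4, which follows because all post-oracle operations are local and a local channel on qubit $i$ acts on the reduced state exactly as the same channel on the full state followed by a partial trace. Once that observation is in hand, the proof reduces to the one-line single-qubit identity above.
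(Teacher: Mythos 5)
Your proof is correct and follows essentially the same approach as the paper: decompose the two post-oracle depolarizing layers (with the intervening Hadamard) as a convex mixture where with probability $(1-\lambda)^2$ the clean state $H\sigma_i H^\dagger$ survives and otherwise the qubit is maximally mixed. Your version is more careful than the paper's terse one-line proof — in particular you explicitly justify the reduction to the single-qubit marginal by noting that tracing out commutes with the local post-oracle channels, and you verify the one-qubit composition identity $D_\lambda(H D_\lambda(\rho) H^\dagger)=(1-\lambda)^2 H\rho H^\dagger + (1-(1-\lambda)^2)\tfrac{I}{2}$ — but the underlying idea is identical.
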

\begin{proof}
Referring to Figure \ref{Fig:DJE}, when considering the $i$-th bit, with a probability of $(1-\lambda)^2$, the expectation value corresponds to $\Exp[X_i]$. Thus, if no error occurs on the $i$-th bit, the expectation value of the $i$-th bit becomes
\begin{equation}
    y_i = (1-\lambda)^2\Exp[X_i] + (1 - (1-\lambda)^2)\frac{1}{2}.
\end{equation}
For later reference, we'll denote $y_i$ as $(1-\lambda)^2\Exp[X_i] + (1 - (1-\lambda)^2)\frac{1}{2}$. Upon detailed calculation, we find
\begin{equation}
    y_i = (\lambda-1)^4\sum_{s_i=1}|\hat{f}(s)|^2 + \frac{1}{2}(-\lambda+2)\lambda((\lambda-2)\lambda+2).
\end{equation}
Here, we introduce the notation $g(\lambda) = \frac{1}{2}(-\lambda+2)\lambda((\lambda-2)\lambda+2)$, where $g(\lambda) > 0$ for $0 \leq \lambda < 1$. Additionally, let's use $\sum_{s_i=1}|\hat{f}(s)|^2 = \bar{s}_i$.
\end{proof}

\lem{nDJ} establishes the following: If $f$ is constant, then the expectation values of the output bits will be a fixed function of $\lambda$. If $f$ is balanced, meaning $0^n \notin S_f$, then the expectation values of the output bits will be equal to a fixed function of $\lambda$ plus $(1-\lambda)^4 \bar{s}$, where $\bar{s}=\sum_{\gamma \in S_f}|\hat{f}(\gamma)|^2\gamma$. Here, $\bar{s}$ must be nonzero when $f$ is balanced. However, it is conceivable that all components of $\bar{s}$ are exponentially small, rendering them challenging to discern from $0$. To circumvent the scenario where all components of $\bar{s}$ are exponentially small, we establish that at least one component of $\bar{s}$ must be greater than or equal to $\frac{1}{n}$ when $f$ is balanced. Consequently, we can perform simple statistics and compute the average to distinguish $f$. Notice we have the following simple fact.

For an integer $n\geq 1$, let $X_1,\ldots, X_n$ be $n$ binary random variables (not necessarily independent or identical), and let $X=(X_1,\ldots, X_n)$. If for every $i\in[n]$, $\Exp[X_i]<1/n$, then $\Pr[X=0^n]>0$. Using a union bound, we have:
\begin{align}
\Pr[X\neq 0^n] \leq \sum_{i=1}^n \Pr[X_i=1] < 1.
\end{align}
For balanced functions, $\Pr[X=0^n]=0$, implying that there exists an $i\in[n]$ such that $\Exp[X_i]\geq 1/n$. Additionally, we observe that the bound is tight.

We are now prepared to demonstrate that the Deutsch–Jozsa problem can be solved using a $\BQP_{\lambda}$ 
approach and show \thm{rigor-4}.
\begin{proof}
We describe our algorithm $\A$ as follows: we run the standard Deutsch–Jozsa algorithm for a total of $M$ times (where the value of $M$ will be determined later). During each run, we measure the $n$ output qubits in the computational basis, resulting in a bit string each time. We then compute the average for each $i \in [n]$. Let $Y_i$ be a random variable representing the $i$-th bit, and $\bar{Y}_i$ denote the average of $M$ independent copies of $Y_i$. According to \lem{nDJ}, we have $\Exp[\bar{Y}_i] = y_i$. The algorithm is structured as follows:
\begin{itemize}
    \item If $\forall i \in [n]$, $\bar{Y}_i \leq (\lambda-1)^4 \frac{1}{2n} + g(\lambda)$, then $\A$ outputs $1$, indicating that $f$ is constant.
    \item Otherwise, $\A$ outputs $0$, signifying that $f$ is balanced.
\end{itemize}

We can apply Hoeffding's inequality (\thm{hoe}) as follows:
\begin{equation}
    \Pr\left[\bar{Y}_i - ((\lambda-1)^4 \frac{1}{2n} + g(\lambda)) > \epsilon\right] \leq e^{-2M\epsilon^2}.
\end{equation}
If $f$ is balanced, then for some $i$, $\bar{s}_i = \sum_{s_i=1}|\hat{f}(s)|^2$ must be greater than $\frac{1}{n}$. Consequently, the probability of $\bar{Y}_i$ being below $(1-\lambda)^4 \frac{1}{2n} + g(\lambda)$ becomes small as $M$ increases. We can set $\epsilon = \frac{(1-\lambda)^4}{2n}$ and apply Hoeffding's inequality as follows:
\begin{equation}
    \Pr\left[(g(\lambda) + (1-\lambda)^4\bar{s}_i) - \bar{Y}_i > \frac{(1-\lambda)^4}{2n} \right] \leq e^{-2M \frac{(1-\lambda)^8}{4n^2}}.
\end{equation}
This implies that
\begin{equation}
    \Pr\left[\bar{Y}_i < (\lambda-1)^4 \frac{1}{2n} + g(\lambda)\right] \leq e^{-2M \frac{(1-\lambda)^8}{4n^2}}.
\end{equation}
Choosing
\begin{equation}
    M = O\left(n^2\frac{1}{(1-\lambda)^8}\right)
\end{equation}
ensures that $e^{-2M \frac{(1-\lambda)^8}{4n^2}} \leq \frac{1}{3}$. Thus, when $f$ is balanced, we have $\Pr[\bar{Y}_i < (\lambda-1)^4 \frac{1}{2n} + g(\lambda), \forall i \in [n]] \leq \frac{1}{3}$ for $M = O(n^2)$.

For the case when $f$ is constant (and hence, $\bar{s}_i = 0$ for all $i$), the probability of failure at any of the $n$ sites can be upper-bounded as follows:
\begin{equation}
    \Pr\left[\max_{i}\bar{Y}_i - g(\lambda) > \epsilon\right] \leq \sum_{i}\Pr\left[\bar{Y}_i - g(\lambda) > \epsilon\right] \leq n e^{-2M \epsilon^2}.
\end{equation}
By setting $\epsilon = \frac{(1-\lambda)^4}{2n}$ as before, we find:
\begin{equation}
    \Pr\left[\max_{i}\bar{Y}_i - g(\lambda) > \frac{(1-\lambda)^4}{2n}\right] \leq \sum_{i}\Pr\left[\bar{Y}_i - g(\lambda) > \frac{(1-\lambda)^4}{2n} \right] \leq n e^{-2M \frac{(1-\lambda)^8}{4n^2}},
\end{equation}
and by selecting $M$ such that
\begin{equation}
    M = 8n^2 \log(n),
\end{equation}
we ensure that the right-hand side is at most $\frac{1}{3}$. Thus, when $f$ is constant, we have $\Pr[\bar{Y}_i < (\lambda-1)^4 \frac{1}{2n} + g(\lambda), \forall i \in [n]] \geq \frac{2}{3}$. Consequently, we can employ $O(n^2 \log(n))$ queries to solve the Deutsch–Jozsa problem.

In summary:
\begin{itemize}
    \item If $f$ is constant, $\Pr[\bar{Y}_i \leq (\lambda-1)^4 \frac{1}{2n} + g(\lambda), \forall i \in [n]] \geq \frac{2}{3}$.
    \item If $f$ is balanced, $\Pr[\bar{Y}_i \leq (\lambda-1)^4 \frac{1}{2n} + g(\lambda), \forall i \in [n]] \leq \frac{1}{3}$.
\end{itemize}
\end{proof}

\section{Noisy Quantum Advantage over $\PH$}\label{sec:SQ}
The goal of this section is to prove \thm{sh-3}. We define our noisy setting identical to $\BQP_{\lambda}$; however, all errors occur before any interaction with the oracle $\mathcal{O}$. This assumption may seem artificial, but it is crucial for our analysis. In our scenario, there is only one query and a single classical query after the Fourier sampling. By addressing this compromise, we aim to demonstrate a weaker separation between $\BQP_{\lambda}$ and $\PH$, where $\lambda(n) = \frac{\log n}{n}$.

We also introduce a more general version of the error model by defining a probability distribution over all error vectors $e$, characterized by $\mathcal{E} : \{0,1\}^n \rightarrow [0,1]$, where $p_e = \mathcal{E}(e)$ represents the probability of error $e$ occurring. The complexity class $\BQP_{\mathcal{E}}$ is indicated by this model. Specifically, for $\BQP_{\lambda(n)}$, we have:
\[
\mathcal{E}(e) = (1-\lambda(n))^{n-|e|} \lambda(n)^{|e|}
\]
where $|e|$ denotes the Hamming weight of $e$. We assume direct access to the phase oracle without utilizing the control qubit. This setup allows us to demonstrate that the error in the control qubit does not influence our results, as discussed in Remark~\ref{rm-f}. To establish the separation, we introduce the problem known as Square Forrelation.
Here we present a framework to modify distributions inspired by \cite{bassirian2021certified}. We call it $\mathbf{SQUAREDFORRELATION} (\Sigma)$ . Let the two distributions be defined as follows:
\begin{itemize}
    \item Yes: $X \sim \N(0,\Sigma)$ and $Y = H_N \cdot X.$ Let $Y_i'=Y_i^2-\Exp[Y_i]^2$ and output $(X,Y').$
    \item No:  $X \sim \N(0,\Sigma)$ and $Y \sim \N(0,H\Sigma H)$ Let $Y_i'=Y_i^2-\Exp[Y_i]^2$ and output $(X,Y').$
\end{itemize}
Notice that if we take $\Sigma=\epsilon=O(1/n)$ which recovers \cite{raz2022oracle, bassirian2021certified}.
What \cite{bassirian2021certified} shows can be restated as follows.
There exists a large enough $c_1>0$ such that:
\begin{itemize}
    \item There is a $\BQP$ algorithm which takes 1 query with $\O(\log n)$ time solving \\
    $\mathbf{SQUAREDFORRELATION} (\frac{1}{c_1n}\Id).$
    \item  For any $AC^{0}$ circuit can not solve \\
    $\mathbf{SQUAREDFORRELATION} (\frac{1}{c_1n}\Id).$
\end{itemize}
such statement can be re-intepretated as the oracle separation between $\BQP$ and $\PH$ \cite{aaronson2010bqp}.

By modifying $\Sigma,$ we can deduce the other kind of Separation. We use the following simple fact or Linear Transformation of Gaussian Random Variable.
\begin{equation}
X \sim \N(\mu, \Sigma), HX \sim \N(H\mu, H\Sigma H^{T})
\end{equation}
In terms of the covariance matrix, in the Yes case, the covariance matrix: \begin{equation}
    (X,Y) \sim \begin{pmatrix}
\Sigma & \Sigma H \\
H\Sigma^{T} & H \Sigma H
\end{pmatrix}
\end{equation}
In the No case, the covariance matrix: \begin{equation}
    (X,Y) \sim \begin{pmatrix}
\Sigma & 0\\
0 & H \Sigma H
\end{pmatrix}
\end{equation}.
The goal is to distinguish which is the case with constant successful probability. Notice that We draw $(X,Y')$ and truncate each $(X,Y')_i$ to the interval $[-1, 1]$ by applying the function $\trnc(a)=\min(1,\max(-1, a)).$  Then, in order to obtain values in $\{ \pm 1 \}$ independently
for each $i \in [2N]$ we assign it to $1$ with probability $\frac{1+\trnc(X_i)}{2}$ or $\frac{1+\trnc(Y'_i)}{2}$)
and $-1$ with probability
$\frac{1+\trnc(X_i)}{2}$ or $\frac{1+\trnc(Y'_i)}{2}$. We output $(X,Y') \in \{ \pm 1 \}^{2N}.$ Let's assume there is no error happening after $O_{f}.$ We will calculate the probability using the None-truncated version and then show the difference between the truncated version and the None-truncated version is very close as \cite{raz2022oracle, bassirian2021certified}.

We denote $\mathfrak{G}, \mathfrak{U}$ to be the distribution for yes/no instances, and $\mathscr{G}, \mathscr{U}$ to be their truncated versions. We use the following diagram to indicate their relationship and provide the proof steps.

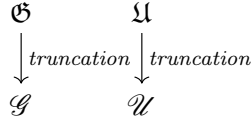
\begin{figure}[h]
    \centering
    \begin{tikzcd}
    \mathfrak{G} \arrow[d, "\textit{truncation}"] & \mathfrak{U} \arrow[d, "\textit{truncation}"] \\
    \mathscr{G} & \mathscr{U}
    \end{tikzcd}
    \caption{Illustration of relationship between distribution}
\end{figure}

\begin{enumerate}
    \item We will choose a particular $\Sigma$ such that there exists a $\BQP_{\mathcal{E}}$ algorithm $Q$ making one query and running in time $O(\log N)$ such that 
    \[
    \Exp_{u \sim \mathscr{G}}[Q(u)] - \Exp_{u \sim \mathscr{U}}[Q(u)] \geq \frac{1}{\poly(n)}.
    \]

    \item We show the previous statement by demonstrating that the truncation error can be neglected for both $\mathfrak{G}$ and $\mathfrak{U}$. We will show that for any multilinear function $f$:
    \[
    \left| \Exp_{u \sim \mathfrak{G}}[f(u)] - \Exp_{u \sim \mathscr{G}}[f(u)] \right| \leq \frac{1}{4N^c}, \quad \left| \Exp_{u \sim \mathfrak{U}}[f(u)] - \Exp_{u \sim \mathscr{U}}[f(u)] \right| \leq \frac{1}{4N^c}.
    \]
    for some $c>0.$

    \item Finally, we show the new classical hardness: Let $A: \{\pm 1\}^{2N} \to \{\pm 1\}$ be a Boolean circuit of size $\exp(\log^{O(1)}(N))$ and depth $O(1)$. Then
    \[
    \left| \Exp_{u \sim \mathscr{G}}[A(u)] - \Exp_{u \sim \mathscr{U}}[A(u)] \right| \leq \frac{\polylog(N)}{N^c}
    \]
    for some $c > 0$.
\end{enumerate}

\begin{theorem}\label{thm:summary}
Here we present our main results inspired by \cite{bassirian2021certified}. Consider the problem $\mathbf{SQUAREDFORRELATION} (\Sigma)$, where $\Sigma$ is an $N \times N$ covariance matrix and $H_N$ is the standard $N \times N$ Hadamard matrix. Define two distributions as follows:
\begin{itemize}
    \item Yes: $X \sim \mathcal{N}(0,\Sigma)$ and $Y = H_N \cdot X$. Let $Y_i' = Y_i^2 - \mathbb{E}[Y_i^2]$ and output $(X,Y')$.
    \item No:  $X \sim \mathcal{N}(0,\Sigma)$ and $Y \sim \mathcal{N}(0, H_N \Sigma H_N)$. Let $Y_i' = Y_i^2 - \mathbb{E}[Y_i^2]$ and output $(X,Y')$.
\end{itemize}
Define $\tilde{\Sigma} = H_N \Sigma H_N$. We are also given a class $\BQP_{\mathcal{E}}$.

\begin{enumerate}
    \item \textbf{Completeness:} If $\lambda_{\max}(\Sigma) \leq \frac{1}{\mathrm{poly}(n)}$ and 
    \begin{equation}
      \frac{2}{N}\sum_{e \in \{0,1\}^n}\mathcal{E}(e)\sum_{i \in \{0,1\}^n}\tilde{\Sigma}_{i,i+e}^2 \geq \frac{1}{\mathrm{poly}(n)}
    \end{equation}
    then there exists a $\BQP_{\mathcal{E}}$ algorithm that can distinguish the two distributions, namely\\
    $\mathbf{SQUAREDFORRELATION} (\Sigma)$ is in $\BQP_{\mathcal{E}}$.

    \item \textbf{Soundness:} If $\|H_N \Sigma\|_{\infty} \leq O\left(\frac{\epsilon_0}{N^m}\right)$ for some $\epsilon_0 = O\left(\frac{1}{n}\right)$ and $\max \{\|\Sigma\|_{\infty}, \|H_N \Sigma H_N\|_{\infty}\} \leq \epsilon_0$ and $1/2 \geq m > 0$, then for any $A: \{\pm 1\}^{2N} \to \{\pm 1\}$, a Boolean circuit of size $\exp(\log^{O(1)}(N))$ and depth $O(1)$, 
    \begin{equation}
    \left| \mathbb{E}_{z \sim \text{yes}}[A(x,y')] - \mathbb{E}_{z \sim \text{no}}[A(x,y')] \right| 
    \leq \frac{\mathrm{polylog}(N)}{N^m}.        
    \end{equation}

\end{enumerate}
If such $\Sigma$ exists and satisfies all the requirements above with known techniques \cite{raz2022oracle, aaronson2010bqp}, then we can use such $\Sigma$ to construct a decision problem and an oracle $\mathcal{O}$ such that this problem cannot be solved in $\PH^{\mathcal{O}}$ but can be solved using $\BQP_{\mathcal{E}}^{\mathcal{O}}$ with access to the same oracle.
\end{theorem}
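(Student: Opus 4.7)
The plan is to prove the two parts of the theorem separately, adapting the Fourier-sampling framework of Raz--Tal~\cite{raz2022oracle} and Bassirian~et~al.~\cite{bassirian2021certified} in two directions: (i) from the rank-one off-diagonal covariance $\epsilon H_N$ to a general $\Sigma$, and (ii) from noiseless $\BQP$ to the noisy model $\BQP_{\mathcal{E}}$ in which all errors occur before the oracle. Throughout I would work with the Gaussian distributions $\mathfrak{G}, \mathfrak{U}$ first and later absorb the $\{\pm 1\}$-truncation as a controlled additive error, exactly along the diagram preceding the theorem.

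For completeness, I would execute the one-query Fourier-sampling algorithm of \cite{bassirian2021certified} on a noisy input. Because the noise is applied before the oracle, the mixed-state analysis of Section~3 applies verbatim: after the noisy Hadamard preparation and one query to the phase oracle for $X$, the state is a statistical mixture over error vectors $e \sim \mathcal{E}$ whose amplitudes pick up a phase $(-1)^{e \cdot x}$, which after the noise-free second Hadamard translates into an $e$-shift of the index subsequently read by the $Y'$ oracle. Computing the acceptance bias, expanding the yes-case expectation via Isserlis' theorem for the jointly Gaussian $(X, H_N X)$, and cancelling the terms that also appear in the no case, leaves a signal of the schematic form $\frac{2}{N}\sum_e \mathcal{E}(e) \sum_i \tilde\Sigma_{i,i+e}^2$. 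The hypotheses of the theorem then give an inverse-polynomial gap; the bound $\lambda_{\max}(\Sigma) \leq 1/\poly(n)$ controls via Gaussian tails the additional error from truncating the oracle values to $\{\pm 1\}$, and $\poly(n)$-fold amplification boosts the gap to constant.

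For soundness, I would follow the three-step program itemised before the theorem. Step one bounds the truncation error: since $\max\{\|\Sigma\|_\infty, \|H_N \Sigma H_N\|_\infty\} \leq \epsilon_0 = O(1/n)$, every marginal has variance $O(1/n)$, so Gaussian-tail and hybrid arguments across the $2N$ coordinates yield the $1/(4N^c)$ bound for any multilinear test function bounded by $1$. Step two Fourier-expands the constant-depth circuit $A$ and reduces the problem to
\begin{equation}
\bigl|\Exp_{\mathscr{G}}[A] - \Exp_{\mathscr{U}}[A]\bigr| \;\leq\; \Bigl|\sum_{S} \widehat A(S)\bigl(\Exp_{\mathfrak{G}}[\chi_S] - \Exp_{\mathfrak{U}}[\chi_S]\bigr)\Bigr| + \tfrac{1}{2N^c},
\end{equation}
in which only characters touching both the $X$- and $Y'$-blocks contribute, since the marginals on each block separately agree between yes and no. Step three uses Tal's level-$k$ inequality for $AC^0$ as in \cite{raz2022oracle} to bound the level-$k$ contribution by $\polylog(N) \cdot \|H_N \Sigma\|_\infty^{k/2}$, and summing the geometric series with $\|H_N \Sigma\|_\infty = O(\epsilon_0/N^m)$ delivers the stated $\polylog(N)/N^m$ upper bound.

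The main obstacle is step two of the soundness analysis. In \cite{raz2022oracle} the Gaussian--Fourier coefficients $\Exp_{\mathfrak{G}}[\chi_S]$ factor cleanly because the covariance is proportional to $H_N$; for a general $\Sigma$ they become combinatorial sums of products of entries of $H_N \Sigma$ together with diagonal terms from $\Sigma$ and $H_N \Sigma H_N$, and one must check that the ``squared'' structure $Y' = Y^2 - \Exp[Y^2]$ does not inflate these sums beyond what the level-$k$ inequality can absorb. A parallel subtlety is simultaneously satisfying the completeness and soundness hypotheses when choosing $\Sigma$: the former wants large weighted mass $\sum_e \mathcal{E}(e)\sum_i \tilde\Sigma_{i,i+e}^2$ after convolution with $\mathcal{E}$, while the latter insists on uniformly tiny entries in $H_N \Sigma$. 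It is precisely this tension — needing the noise $\mathcal{E}$ to place enough weight on low Hamming-weight error vectors to preserve the signal without spoiling the off-diagonal smallness — that forces the error rate $\lambda(n) = \Omega(\log n/n)$ appearing in \thm{sh-3}.
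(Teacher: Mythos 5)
Your completeness sketch is essentially the paper's argument: run the one-query Fourier-sampling circuit, observe that the pre-oracle noise only shifts the sampled index by the error vector $e$, and compute the acceptance bias $\frac{2}{N}\sum_e \mathcal{E}(e)\sum_i \tilde\Sigma_{i,i+e}^2$ via Isserlis' theorem for the jointly Gaussian $(X, H_N X)$, controlling truncation by the $\lambda_{\max}(\Sigma)\le 1/\poly(n)$ tail bound. That matches \thm{nisq}, \thm{tr}, and \lem{r2}.

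The soundness direction has a real gap. You propose to Fourier-expand $A$, bound the level-$k$ weight $\sum_{|S|=k}|\hat A(S)|$ by Tal's inequality, bound each moment difference $|\Exp_{\mathfrak G}[\chi_S]-\Exp_{\mathfrak U}[\chi_S]|$ by something like $\polylog(N)\cdot\|H_N\Sigma\|_\infty^{k/2}$, and sum a geometric series. This is not what \cite{raz2022oracle} does, and it does not go through as stated. By Isserlis' theorem a level-$k$ mixed moment is a sum over $\Theta((k)!!)$ pairings, so the honest upper bound on $|\Exp_{\mathfrak G}[\chi_S]-\Exp_{\mathfrak U}[\chi_S]|$ carries a factor on the order of $k!\,\epsilon_0^{k/2}\cdot\|H_N\Sigma\|_\infty/\epsilon_0$ (this is exactly the $(s+t)!$ appearing in the paper's Lemma following \thm{chat}). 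Against Tal's $\polylog(N)^{O(k)}$ level-$k$ weight, the product $k!\,(\epsilon_0\polylog(N)^{O(1)})^{k/2}$ is summable only for $k$ up to roughly $\sqrt{n}/\polylog(n)$, whereas $k$ ranges up to $2N$; there is no geometric series to sum. This is precisely why the paper, following Raz--Tal, runs the interpolation argument of \thm{D7}: it splits the Gaussian into $t=N^c$ increments of variance $\Sigma/t$, applies the random-restriction identity of \thm{chat} to linearize each increment, and at each step only ever needs the level-$2$ quantity $L_{1,2}(F_\rho)\le\polylog(N)$ together with the tiny per-step variance $\epsilon_0/t$ to kill all higher moments, accumulating $O(\epsilon_0 l/(tN^m))$ per step over $t$ steps. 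Without that interpolation-and-restriction machinery your step two and step three do not bound the high-degree terms, so the proposal is not a valid proof of soundness. Your closing remark about the tension between the completeness and soundness hypotheses and the forced rate $\lambda(n)=\Omega(\log n/n)$ is correct and does match the paper's discussion, but it does not repair the missing hybrid argument.
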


Now we can use it to prove \thm{sh-3}:
\begin{corollary}
By choosing $\Sigma = \frac{1}{c_1 n}$ for a small enough $c_1 > 0$, there exists a function $\lambda: \mathbb{N} \to \mathbb{R}$ where $\lambda(n) = \Omega(\log n / n)$, and $\mathcal{O}$ relative to which $\BQP_{\lambda} \not\subset \PH$.
\end{corollary}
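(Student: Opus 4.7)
The plan is a direct instantiation of Theorem \ref{thm:summary} with $\Sigma = \frac{1}{c_1 n} I$ and the noise distribution $\mathcal{E}(e) = (1-\lambda(n))^{n-|e|}\lambda(n)^{|e|}$ arising from the $\BQP_{\lambda}$ model (so that $\BQP_{\lambda} = \BQP_{\mathcal{E}}$). Once both the completeness and soundness hypotheses are verified for this choice, the desired oracle separation follows from the standard conversion of a distributional gap between yes/no instances into an oracle separation, as done in \cite{aaronson2010bqp, raz2022oracle, bassirian2021certified}.

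The key observation is that $\tilde{\Sigma} := H_N \Sigma H_N = \frac{1}{c_1 n} I$ (using $H_N^2 = I$), so $\tilde{\Sigma}_{i,i+e}$ vanishes unless $e = 0$ and the completeness sum collapses to
\begin{equation*}
\frac{2}{N}\sum_{e \in \{0,1\}^n}\mathcal{E}(e)\sum_{i \in \{0,1\}^n}\tilde{\Sigma}_{i,i+e}^2 \;=\; \frac{2\mathcal{E}(0)}{c_1^{2} n^{2}}.
\end{equation*}
For this to be at least $1/\poly(n)$, we need $\mathcal{E}(0) = (1-\lambda(n))^n \geq 1/\poly(n)$, equivalently $\lambda(n) \leq c \log n / n$ for some constant $c$; taking $\lambda(n) = \Theta(\log n / n)$ with $c$ balanced against $c_1$ yields $\Theta(n^{-(c+2)})$ on the right-hand side, which is $1/\poly(n)$ as required. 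The remaining completeness hypothesis, $\lambda_{\max}(\Sigma) = 1/(c_1 n) = 1/\poly(n)$, is immediate. This is exactly the step in which the lower bound $\lambda(n) = \Omega(\log n/n)$ appears.

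For the soundness side I would use $|(H_N)_{ij}| = 1/\sqrt{N}$ to obtain $\|H_N \Sigma\|_{\infty} = 1/(c_1 n \sqrt{N})$, matching the required form $\epsilon_0 / N^{m}$ with $\epsilon_0 = 1/(c_1 n) = O(1/n)$ and $m = 1/2$. The other two bounds $\|\Sigma\|_{\infty} = \|H_N \Sigma H_N\|_{\infty} = 1/(c_1 n) \leq \epsilon_0$ are immediate since both matrices equal $\epsilon_0 I$. Theorem \ref{thm:summary} then delivers a distinguishing algorithm in $\BQP_{\lambda}$ together with a soundness bound $\polylog(N)/N^{1/2}$ against quasipolynomial-size constant-depth circuits, and the Aaronson-style oracle construction lifts this into an oracle $\mathcal{O}$ with $\BQP_{\lambda}^{\mathcal{O}} \not\subset \PH^{\mathcal{O}}$. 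The main obstacle I do \emph{not} see how to overcome is the diagonality of $\tilde{\Sigma}$: it forces only the zero-error pattern to contribute to completeness, so $\mathcal{E}(0)$ itself must be $1/\poly(n)$, which directly caps $\lambda(n)$ at $O(\log n / n)$. Pushing to constant $\lambda$ would require a covariance matrix whose off-diagonal entries are tuned to the depolarizing noise, precisely the issue flagged at the end of the introduction.
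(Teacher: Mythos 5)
Your proposal is correct and follows essentially the same route as the paper: verify the completeness hypothesis of Theorem \ref{thm:summary} by noting that $\tilde{\Sigma} = \Sigma$ is diagonal so only the $e = 0^n$ term survives, giving $2(1-\lambda)^n/(c_1^2 n^2)$, which is $1/\poly(n)$ precisely when $\lambda(n) = O(\log n / n)$; then check soundness with $m = 1/2$. You actually spell out the soundness verification ($\|H_N\Sigma\|_\infty = 1/(c_1 n\sqrt{N})$ from $|(H_N)_{ij}| = 1/\sqrt N$) more explicitly than the paper does, and your closing remark on the diagonality obstruction mirrors the paper's subsequent no-go lemma for constant $\lambda$.
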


\begin{proof}
We check these two conditions one by one. Since $\Sigma = \frac{1}{c_1 n}$, we have $\lambda_{\max}(\Sigma) = \frac{1}{c_1 n} \leq \frac{1}{\mathrm{poly}(n)}$. We now have:
\begin{equation}
  \frac{2}{N} \sum_{e \in \{0,1\}^n} \mathcal{E}(e) \sum_{i \in \{0,1\}^n} \tilde{\Sigma}_{i,i+e}^2 \geq \frac{2}{N} \sum_{e = \{0,1\}^n} \mathcal{E}(0^n) \sum_{i \in \{0,1\}^n} \tilde{\Sigma}_{i,i}^2 = \frac{2 (1 - \lambda(n))^n}{c_1^2 n^2} \approx \Omega\left(\frac{1}{n^3}\right).
\end{equation}
Here we use $\mathcal{E}(0^n) = (1 - \lambda(n))^n.$
Furthermore, it is trivial to see that
\[
(1 - \log n / n)^n = \frac{1}{n} + o\left(\frac{1}{n^2}\right).
\]
Hence condition (1) is satisfied. Condition (2) is a simple calculation by choosing $m = \frac{1}{2}.$
\end{proof}

One might question the need to introduce this formalism by modifying the standard Forrelation \cite{bassirian2021certified} to $\Sigma$. Although demonstrating separation when the error rate is $\Omega(\log n / n)$ does not require modifying this problem for other noise models, such a framework can still prove useful. Moreover, it introduces a new aspect of classical hardness that has intrinsic interest. Below, we address the limitations of this argument.
\paragraph{Limit of Square Forrelation}
Now we show the following no-go theorem, claiming that \thm{summary} cannot provide separation when \(\lambda = \lambda_0\) is a constant. Essentially, we will demonstrate that there is no valid covariance matrix \(\Sigma\) that satisfies condition (1) in \thm{summary}.
First, we present the following no-go theorem in matrix analysis:
\begin{lemma}
There is no symmetric positive semi-definite (PSD) \(N \times N\) matrix \(C\) such that \(\sum_{ij} c_{ij}^2 \geq N \exp(n)\) and the largest eigenvalue \(\leq \poly(n)\).
\end{lemma}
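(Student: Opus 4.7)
The plan is to observe that the hypothesis $\sum_{ij} c_{ij}^2 \geq N \exp(n)$ is simply a lower bound on the squared Frobenius norm of $C$, and then to show that for a symmetric matrix, the squared Frobenius norm is controlled by the eigenvalues alone. Since $C$ is symmetric, it is diagonalizable with real eigenvalues $\lambda_1, \ldots, \lambda_N$, and the standard identity
\begin{equation*}
\sum_{i,j} c_{ij}^2 = \|C\|_F^2 = \tr(C^T C) = \tr(C^2) = \sum_{i=1}^N \lambda_i^2
\end{equation*}
holds. This reduces the problem to bounding $\sum_i \lambda_i^2$.

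Next, I would invoke the hypothesis $\lambda_{\max}(C) \leq \poly(n)$ together with the PSD assumption (which forces $0 \leq \lambda_i \leq \lambda_{\max}$ for every $i$) to get the crude but sufficient bound
\begin{equation*}
\sum_{i=1}^N \lambda_i^2 \leq N \cdot \lambda_{\max}(C)^2 \leq N \cdot \poly(n).
\end{equation*}
Strictly speaking, the PSD assumption is not even needed for this step: $\lambda_i^2 \leq \lambda_{\max}^2$ for any symmetric matrix once we also know $|\lambda_{\min}| \leq \lambda_{\max}$, but PSD lets us skip any such worry and keeps the argument transparent.

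Finally, comparing with the hypothesized lower bound, we would need $N \cdot \poly(n) \geq N \exp(n)$, i.e.\ $\poly(n) \geq \exp(n)$, which is false for all sufficiently large $n$. This contradiction establishes the lemma. The whole argument is essentially two lines once the Frobenius-eigenvalue identity is written down; there is no genuine obstacle, and the only substantive observation is the gap between $\poly(n)$ and $\exp(n)$ that the hypothesis exploits. I would present the proof in this order (Frobenius identity, eigenvalue bound, polynomial-vs-exponential contradiction) and explicitly note in the paper that this is the reason Square Forrelation with constant noise rate $\lambda = \lambda_0$ cannot yield the desired completeness condition of \thm{summary}, since condition (1) essentially demands a covariance matrix $\Sigma$ that is simultaneously bounded in operator norm and enormous in Frobenius norm.
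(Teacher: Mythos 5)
Your proof is correct and follows essentially the same route as the paper: reduce the Frobenius norm of $C$ to $\sum_i \lambda_i^2$, bound that by $N\lambda_{\max}^2$, and derive a polynomial-versus-exponential contradiction. The only cosmetic difference is that you invoke the identity $\|C\|_F^2 = \tr(C^2) = \sum_i \lambda_i^2$ directly rather than deriving it by expanding the spectral decomposition $C = \sum_k \lambda_k u_k u_k^T$ and using orthonormality, which is what the paper does; both are standard.
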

\begin{proof}
Since \(C\) is symmetric, we can use its spectral decomposition:
\[ C = \sum_{k=1}^{N} \lambda_k u_k u_k^T, \]
where \(\lambda_k\) are the non-zero eigenvalues, and \(u_k\) are the corresponding unit norm eigenvectors. We can write:
\begin{align}
c_{ij} &= \sum_{k} \lambda_k u_{ki} u_{kj}, \\
c_{ij}^2 &= \sum_{k,l} \lambda_k \lambda_l u_{ki} u_{kj} u_{li} u_{lj} \implies \sum_{ij} c_{ij}^2 = \sum_{k,l} \lambda_k \lambda_l \sum_{i} u_{ki} u_{li} \sum_{j} u_{kj} u_{lj} = \sum_{k} \lambda_k^2.
\end{align}
Note that, by the spectral theorem, \(u_{ki}\) is an \(N \times N\) unitary matrix, hence \(\sum_{i} u_{ki} u_{li} = \delta_{kl}\). Consequently, we have:
\begin{align}
N \exp(n) \leq \sum_{ij} c_{ij}^2 \leq N \lambda_{\max}^2,
\end{align}
which implies \(\lambda_{\max} \geq \exp(n).\)
\end{proof}

This shows that if \(\lambda_{\max}(n) \leq \poly(n)\), then it implies that \(\sum_{ij} c_{ij}^2 \leq N \poly(n)\). Now, we can write the following equation:
\begin{align*}
  \frac{2}{N}\sum_{e \in \{0,1\}^n}\mathcal{E}(e)\sum_{i \in \{0,1\}^n}\tilde{\Sigma}_{i,i+e}^2 = \frac{2}{N}\sum_{k=0}^{n}\lambda_0^{k}(1-\lambda_0)^{n-k}\sum_{e \in \{0,1\}^n, |e|=k}\sum_{i \in \{0,1\}^n}\tilde{\Sigma}_{i,i+e}^2.
\end{align*}
Without loss of generality, we can assume \(0 < \lambda_0 \leq \frac{1}{2}\) which implies \(\lambda_0^{k}(1-\lambda_0)^{n-k} \leq (1-\lambda_0)^n\). Now we have:
\begin{align*}
& \frac{2}{N}\sum_{k=0}^{n}\lambda_0^{k}(1-\lambda_0)^{n-k}\sum_{e \in \{0,1\}^n, |e|=k}\sum_{i \in \{0,1\}^n}\tilde{\Sigma}_{i,i+e}^2 \\
& \leq \frac{2(1-\lambda_0)^{n}}{N}\sum_{k=0}^{n}\sum_{e \in \{0,1\}^n, |e|=k}\sum_{i \in \{0,1\}^n}\tilde{\Sigma}_{i,i+e}^2 \\
& \leq \frac{2(1-\lambda_0)^{n}}{N}\sum_{e \in \{0,1\}^n}\sum_{i \in \{0,1\}^n}\tilde{\Sigma}_{i,i+e}^2 \\
& \leq 2(1-\lambda_0)^{n} \poly(n).
\end{align*}
This expression will always be exponentially small in terms of \(n\) if \(\lambda_0\) is a constant.

We remark that in this section, we only consider the Fourier sampling algorithm which makes only one query to $f$. 
After obtaining the samples, the algorithm makes one query to $g$ to determine the decision. 
In this setting, we may view the problem as the task of distinguishing the Fourier spectrum of a Forrelated $f$ affected by bit-flipping noise with constant probability $p$ from the uniform distribution using the oracle $g$. 
When the error rate $p=0$, Bassirian, Bouland, Fefferman, Gunn, and Tal showed that $g$ serves as a good distinguisher that determines which case with inverse polynomial advantage \cite{bassirian2021certified}. 

Although we cannot prove the hardness even against one-query algorithms, we observe that there is an obstacle toward proving the existence of a distinguisher given oracle access to $g$ and the samples. 
Indeed, the existence of a distinguisher only making oracle access to $g$ may potentially lead to a black-box reduction for solving Learning Parity with Noise (LPN) in the regime of constant noise rate to the error-free case, which is known to be easy. \ 
The problem is believed to be hard for quantum algorithms \cite{pietrzak2012cryptography}.
We believe that any algorithm that distinguishes the samples would require substantial uses of the structure from Forrelation.

Now let us move to the completeness proof, namely the quantum easiness part in \thm{summary}.
\subsection{Quantum Easiness}
We use the following algorithm $A$ to distinguish between functions sampled from $\mathscr{G}$ and $\mathscr{U}$:
\begin{enumerate}
    \item Apply the Fourier transform on $f$ and sample $x$ from the distribution induced by $\hat{f}$.
    \item Accept if $g(x) = 1$, and reject otherwise.
\end{enumerate}
Notice that this algorithm only uses one quantum oracle. We discuss the criterion for distinguishing the above two distributions using the noisy quantum model. Let $\tilde{\Sigma} = H \Sigma H$. 

\begin{theorem}\label{thm:nisq}
When error $e$ occurs at initialization, we have:
\begin{equation}
\mathbb{E}_{\text{Yes}}[\psi_e(X,Y')] - \mathbb{E}_{\text{No}}[\psi_e(X,Y')] = \frac{1}{N}\sum_{i \in \{0,1\}^n} 2\tilde{\Sigma}_{i,i+e}^2.
\end{equation}
\end{theorem}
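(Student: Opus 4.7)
The plan is to (i) propagate the initialization error $e$ through the one-query Fourier sampling circuit and obtain a closed-form expression for $\psi_e(X,Y')$ as a polynomial in the Gaussians $(X,Y')$, and then (ii) evaluate $\Exp[\psi_e]$ under the Yes and No ensembles using Isserlis' formula for centred Gaussian fourth moments, exploiting that $Y'_y=Y_y^2-\tilde\Sigma_{y,y}$ is mean-zero.

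\textbf{Step 1: what the noisy circuit computes.}
An initialization error that flips $|0^n\rangle$ to $|e\rangle$ makes the Hadamard$\to U_f\to$Hadamard sandwich deposit amplitude $\tfrac{1}{N}\sum_x f(x)(-1)^{x\cdot(y+e)}$ on each $|y\rangle$, where $f$ is the $\pm 1$ realisation of $X$. Hence the Fourier sampler outputs $y$ with probability $(HX)_{y+e}^{\,2}/N$, and the ``accept iff $g(y)=1$'' step has expected bias, in the non-truncated Gaussian lift,
\begin{equation*}
\psi_e(X,Y') \;=\; \frac{1}{N}\sum_{y\in\{0,1\}^n} (HX)_{y+e}^{\,2}\, Y'_y.
\end{equation*}

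\textbf{Step 2: Yes case via Isserlis.}
In the Yes case $Y=H_N X$, so $(HX)_{y+e}=Y_{y+e}$ and $(X,Y)$ is jointly centred Gaussian with $Y\sim\mathcal{N}(0,\tilde\Sigma)$. Applying Isserlis' formula to the pair $(Y_y,Y_{y+e})$ gives
\begin{equation*}
\Exp\bigl[Y_{y+e}^{\,2}\, Y_y^{\,2}\bigr] \;=\; \tilde\Sigma_{y,y}\,\tilde\Sigma_{y+e,y+e} \,+\, 2\,\tilde\Sigma_{y,y+e}^{\,2},
\end{equation*}
and the first summand cancels exactly against $\tilde\Sigma_{y,y}\cdot\Exp[Y_{y+e}^{\,2}]=\tilde\Sigma_{y,y}\tilde\Sigma_{y+e,y+e}$ contributed by the constant inside $Y'_y$. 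Hence $\Exp[Y_{y+e}^{\,2}\,Y'_y]=2\tilde\Sigma_{y,y+e}^{\,2}$, and summing over $y$ yields $\Exp_{\text{Yes}}[\psi_e]=\tfrac{2}{N}\sum_y\tilde\Sigma_{y,y+e}^{\,2}$.

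\textbf{Step 3: No case by independence, and conclusion.}
In the No case $Y\sim\mathcal{N}(0,\tilde\Sigma)$ is drawn independently of $X$, so the Fourier weights $(HX)_{y+e}^{\,2}$ (functions of $X$ alone) are independent of each $Y'_y$. Factorising and using $\Exp[Y'_y]=\Exp[Y_y^{\,2}]-\tilde\Sigma_{y,y}=0$ gives $\Exp_{\text{No}}[\psi_e]=0$. Subtracting the two expectations and invoking the symmetry $\tilde\Sigma_{y,y+e}=\tilde\Sigma_{y+e,y}$ yields the claimed identity $\Exp_{\text{Yes}}[\psi_e]-\Exp_{\text{No}}[\psi_e]=\tfrac{1}{N}\sum_i 2\tilde\Sigma_{i,i+e}^{\,2}$. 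The delicate step is Step~1: identifying the noisy circuit's acceptance bias with a clean polynomial in the \emph{raw} Gaussians $(X,Y')$ rather than their $\pm1$-truncated realisations. We will defer this to the truncation-error estimates flagged in the paper's own diagram (the $1/(4N^c)$ bounds), which guarantee that the leading-order Yes-vs-No gap is captured by the Gaussian moment calculation carried out above.
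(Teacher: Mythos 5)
Your proposal is correct and follows essentially the same route as the paper's own proof: identify $\psi_e$ as the (index-shifted) quadratic form $\tfrac{1}{N}\sum_i (HX)_i^2\,Y'_{i+e}$, observe $\Exp_{\text{No}}[\psi_e]=0$ by independence of $X$ and $Y'$ together with $\Exp[Y'_y]=0$, and compute $\Exp_{\text{Yes}}[\psi_e]$ by Isserlis so that the Wick-pairing term $\tilde\Sigma_{y,y}\tilde\Sigma_{y+e,y+e}$ cancels against the $-\tilde\Sigma_{y,y}\,\Exp[Y_{y+e}^2]$ correction, leaving $2\tilde\Sigma_{y,y+e}^2$. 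Your Step~1, deriving $\psi_e$ from the noisy Fourier-sampling circuit, is useful context that the paper treats as given, and your caveat about deferring truncation to the separate $1/(4N^c)$ bounds is exactly how the paper structures the argument.
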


\begin{proof}
For an error vector $e \in \{0,1\}^n$, define $\psi_e(X,Y') = \frac{1}{N}\sum_i \left(\sum_j H_{ij}X_j\right)^2 Y'_{i+e}$, where $N = 2^n.$ We use the following simple facts:
\begin{itemize}
    \item $\mathbb{E}[X_i^2] = \Sigma_{ii},$
    \item $\mathbb{E}[X_i^4] = 3\Sigma_{ii}^2,$
    \item $\mathbb{E}[X_iX_j] = 2\Sigma_{ij}^2 + \Sigma_{ii}\Sigma_{jj}.$
\end{itemize}
Now we have
    \begin{itemize}
     \item  For the No case, $X$ and $Y'$
are independent, and $\Exp[Y'_i]=0$. This
implies that $\Exp_{No}[\psi_{e}(X,Y')]=0.$
\item  For the Yes case:
\begin{equation}
   \psi_{e}(X,Y')= \frac{1}{N}\sum_{i}Y_i^2Y_{i+e}^2-\frac{1}{N}\sum_{i}Y_i^{2} 
 \tilde{\Sigma}_{i+e,i+e}.
\end{equation}
We  have 
\begin{equation}
   \Exp_{yes}\psi_{e}(X,Y')= \frac{1}{N}(\sum_{i}(2\tilde{\Sigma}_{i,i+e}^2+\tilde{\Sigma}_{i,i}\tilde{\Sigma}_{i+e,i+e}-\tilde{\Sigma}_{ii}
\tilde{\Sigma}_{i+e,i+e})=
\frac{1}{N}\sum_{i}2\tilde{\Sigma}_{i,i+e}^2 
\end{equation}
\end{itemize}
\end{proof}

We also show that the truncation error can be neglected in the above cases when $\lambda_{\max}(\Sigma) \leq \frac{1}{\mathrm{poly}(n)}$. We first derive a simple bound 
on the probability $\Pr[\forall i, z_i \in [-1,1]]$. Let $z \sim \mathcal{N}(0, \Sigma)$.
We assume $\epsilon=\Sigma_{ii}$ for all $i$.
We have
\begin{align*}
    \Pr[z_i > 1 \cup z_i < -1] &= \frac{2}{\sqrt{2\pi \epsilon}}\int_{1}^{\infty}e^{-z_i^2/(2 \epsilon)}dz_i \\
    &\leq \frac{2}{\sqrt{2\pi \epsilon}}\int_{1}^{\infty}z_ie^{-z_i^2/(2 \epsilon)}dz_i  \\
    &\leq \sqrt{\frac{2\epsilon}{\pi}}e^{-1/(2\epsilon)} \leq e^{-1/(2\epsilon)}.
\end{align*}
By the union bound, we have $\Pr[\forall i, z_i \in [-1,1]] \geq 1 - 2Ne^{-1/(2\epsilon)} \geq 1 - 1/N^c$ for any $c>0$ provided that $\epsilon = 1/(c_1n)$ for large enough $c_1$.
Next, we introduce the multivariate Gaussian tail bound.

\begin{theorem}
Let $(X_1, \dots, X_N) \sim \mathcal{N}(0, \Sigma)$.
Let $M = \Sigma^{-1}$ and $\forall i, C_i > 0$, we have 
\begin{equation}
    \Pr[X_1 > C_1 \land \dots \land X_N > C_N] \leq \frac{(\lambda_{\max}(\Sigma))^N}{(\lambda_{\min}(\Sigma))^{N/2}} e^{-\frac{1}{2} \frac{||C||^2}{\lambda_{\max}(\Sigma)}}.
\end{equation}
\end{theorem}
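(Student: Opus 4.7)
The plan is to bound the multivariate Gaussian density pointwise using spectral data and then integrate over the positive orthant $\{x : x_i > C_i\ \forall i\}$, exploiting that on this region $\|x\|^2 \geq \|C\|^2$. The density is $p(x) = (2\pi)^{-N/2} (\det \Sigma)^{-1/2} \exp(-\tfrac{1}{2} x^T \Sigma^{-1} x)$. The first step is to apply the two spectral inequalities $x^T \Sigma^{-1} x \geq \|x\|^2/\lambda_{\max}(\Sigma)$ (since the smallest eigenvalue of $\Sigma^{-1}$ equals $1/\lambda_{\max}(\Sigma)$) and $\det \Sigma \geq \lambda_{\min}(\Sigma)^N$ to obtain the uniform upper bound
\[
p(x) \leq \frac{1}{(2\pi \lambda_{\min}(\Sigma))^{N/2}} \exp\!\left(-\frac{\|x\|^2}{2\lambda_{\max}(\Sigma)}\right).
\]

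Second, I would translate the integration domain to the positive orthant via $y = x - C$. Since $y_i \geq 0$ and $C_i > 0$, the expansion $\|x\|^2 = \|y\|^2 + 2 y^T C + \|C\|^2 \geq \|y\|^2 + \|C\|^2$ holds throughout the region, which lets the factor $\exp(-\|C\|^2/(2\lambda_{\max}(\Sigma)))$ be pulled out of the integral. Third, the remaining integral $\int_{\mathbb{R}^N_{>0}} \exp(-\|y\|^2/(2\lambda_{\max}(\Sigma)))\,dy$ factors coordinate-wise into $N$ half-Gaussian integrals, each equal to $\sqrt{\pi \lambda_{\max}(\Sigma)/2}$, contributing a total constant of $(\pi\lambda_{\max}(\Sigma)/2)^{N/2}$. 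Assembling the three pieces gives a bound of the shape $(\lambda_{\max}(\Sigma)/\lambda_{\min}(\Sigma))^{N/2} \cdot 4^{-N/2} \cdot \exp(-\|C\|^2/(2\lambda_{\max}(\Sigma)))$, which is at least as tight as the statement (the stated $\lambda_{\max}^N$ can in fact be slackened to $\lambda_{\max}^{N/2}$), so the claim follows.

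The main obstacle, though modest, is that the density simultaneously depends on $\Sigma^{-1}$ through the quadratic form and on $\Sigma$ through the determinant normalizer, so no single eigenbasis change of variables simplifies both cleanly; the two ingredients have to be bounded separately, one picking up $\lambda_{\max}$ and the other $\lambda_{\min}$. A natural sanity check is $\Sigma = I$, where my bound collapses to $\exp(-\|C\|^2/2)$, matching the standard product of one-dimensional Gaussian tails up to a harmless constant. A tempting alternative is a Chernoff-style derivation via $\Pr[X_i > C_i\ \forall i] \leq \inf_{t > 0} e^{-t \mathbf{1}^T C}\,\mathbb{E}[e^{t \mathbf{1}^T X}]$ using $\mathbf{1}^T X \sim \mathcal{N}(0, \mathbf{1}^T \Sigma \mathbf{1})$, but that route loses the $\lambda_{\min}(\Sigma)$ factor in the prefactor and yields a weaker exponent (with $\|C\|_1^2/N$ in place of $\|C\|^2$), so I would stick with the density-integration approach above.
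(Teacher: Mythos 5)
Your derivation of the bound $\left(\tfrac{\lambda_{\max}(\Sigma)}{4\lambda_{\min}(\Sigma)}\right)^{N/2} e^{-\|C\|^2/(2\lambda_{\max}(\Sigma))}$ is correct and cleaner than the paper's: translating to the positive orthant and using $\|x\|^2 \ge \|y\|^2 + \|C\|^2$ gives a valid bound for every $C_i > 0$, whereas the paper bounds each one-dimensional tail integral by $\lambda_{\max} e^{-C_i^2/(2\lambda_{\max})}$ (written there as an equality, which it is not), a step that actually requires $C_i \ge 1$ and fails for small $C_i$. So on the mechanics you are arguably on firmer ground than the original.

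The gap is in your final claim of tightness. You assert that your prefactor $\lambda_{\max}^{N/2}/(4^{N/2}\lambda_{\min}^{N/2})$ improves on the stated $\lambda_{\max}^{N}/\lambda_{\min}^{N/2}$ because ``$\lambda_{\max}^{N}$ can be slackened to $\lambda_{\max}^{N/2}$,'' but this reverses direction exactly in the regime this theorem is used. Comparing the two prefactors, yours is the smaller one iff $\lambda_{\max}^{N/2}/4^{N/2} \le \lambda_{\max}^{N}$, i.e.\ iff $\lambda_{\max}(\Sigma) \ge 1/4$. Throughout this section, however, $\Sigma$ has all eigenvalues of order $1/(c_1 n)$, so $\lambda_{\max}(\Sigma) \ll 1/4$, and then $\lambda_{\max}^{N/2} \gg \lambda_{\max}^{N}$: your bound is \emph{weaker} than the stated one by the factor $(4\lambda_{\max})^{-N/2}$, which is exponentially large. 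Consequently your argument does not establish the literal inequality in the theorem statement; it establishes a different, incomparable one. (Your bound is perfectly adequate for how the lemma is later used in \thm{tr} --- there the whole prefactor is simply required to be $<1$, and $(\lambda_{\max}/(4\lambda_{\min}))^{N/2}$ is indeed $\ll 1$ when $\Sigma$ is close to a small multiple of $I$ --- but the logical step ``hence the stated bound follows'' is not justified, and the sanity check ``matching up to a harmless constant'' glosses over the fact that the discrepancy is $(4\lambda_{\max})^{-N/2}$, not a constant.)
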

\begin{proof}
We compute the probability from the integral of the probability distribution:
\begin{align*}
    \Pr[X_1 > C_1 \land \dots \land X_N > C_N] &\leq \frac{|M|^{1/2}}{(2\pi)^{N/2}} \int_{C_1}^\infty \cdots \int_{C_N}^\infty \exp\left[-\frac{1}{2} X^\top M X \right] dX_1 \cdots dX_N \\
    &\leq \prod_{i=1}^{N} \int_{C_i}^\infty e^{-\frac{1}{2} \frac{X_i^2}{\lambda_{\max}(\Sigma)}}dX_i \\
    &= \prod_{i=1}^{N} \lambda_{\max}(\Sigma) e^{-\frac{1}{2} \frac{C_i^2}{\lambda_{\max}(\Sigma)}} \\
    &\leq (\lambda_{\max}(\Sigma))^N e^{-\frac{1}{2} \frac{||C||^2}{\lambda_{\max}(\Sigma)}}.
\end{align*}
Here $||C||^2 = \sum_{i=1}^{N} C_i^2$.
\begin{equation}
    |M|^{1/2} (\lambda_{\max}(\Sigma))^N \leq \frac{(\lambda_{\max}(\Sigma))^N}{(\lambda_{\min}(\Sigma))^{N/2}}.
\end{equation}
In addition, if $\frac{(\lambda_{\max}(\Sigma))^N}{(\lambda_{\min}(\Sigma))^{N/2}} < 1$, we have 
\begin{equation}
    \Pr[X_1 > C_1 \land \dots \land X_N > C_N] \leq e^{-\frac{1}{2} \frac{||C||^2}{\lambda_{\max}(\Sigma)}}. 
\end{equation}
\end{proof}

We can use a similar proof as in \cite{raz2022oracle, bassirian2021certified} to show that the truncation error can be neglected.
\begin{theorem}\label{thm:tr}
If all eigenvalues of $\tilde{\Sigma}$ (or $\Sigma$) are bounded above by $1/\poly(n)$, and $(\lambda_{\max}(\Sigma))^N / (\lambda_{\min}(\Sigma))^{N/2} < 1,$ then the truncation procedure will work effectively. Here, we write $\lambda_{\max}(\Sigma) \leq \frac{1}{c_4 n^{c_3}}$ for some $c_3 > 0, c_4 > 0.$ Notice that since $H$ is unitary, the eigenvalues of $\tilde{\Sigma}$ are the same as those of $H\tilde{\Sigma}H.$ The truncation results in:
\begin{align}
& \Exp_{(x, y)\sim \mathfrak{G}}\left[\prod_{i=1}^{N}\max(1,|x_i|)\prod_{i=1}^{N}\max(1,|y_i'|)\mathbf{1}_{(x,y)\neq (\text{trnc}(x),\text{trnc}(y))}\right] \leq \frac{1}{N^c}, \\
& \Exp_{(x, y)\sim \mathfrak{U}}\left[\prod_{i=1}^{N}\max(1,|x_i|)\prod_{i=1}^{N}\max(1,|y_i'|)\mathbf{1}_{(x,y)\neq (\text{trnc}(x),\text{trnc}(y))}\right] \leq \frac{1}{N^c}, 
\end{align}
for some $c > 0.$
\end{theorem}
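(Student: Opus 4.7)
The plan is to adapt the truncation argument of \cite{raz2022oracle, bassirian2021certified} to the general-covariance setting. The key observation is that we should not try to bound the full second moment of $\prod_i \max(1,|x_i|)\max(1,|y_i'|)$ directly (which is far too large when $\lambda_{\max}(\Sigma)\cdot N$ is not small), but instead decompose the integrand by the set of ``violated'' coordinates. Using $\max(1,|x_i|)\leq 1+|x_i|\mathbf{1}_{|x_i|>1}$ and analogously for $y_i'$, expansion of the product turns the expectation into a sum over pairs $(S,T)$ of subsets of $[N]$ indexing which $x$- and $y'$-coordinates exceed $1$ in absolute value. Writing $A_{S,T}=\{|x_i|>1\ \forall i\in S\}\cap\{|y_j'|>1\ \forall j\in T\}$, the quantity to be bounded is at most $\Pr[E]+\sum_{(S,T)\neq(\emptyset,\emptyset)}\Exp\bigl[\prod_{i\in S}|x_i|\prod_{j\in T}|y_j'|\,\mathbf{1}_{A_{S,T}}\bigr]$.

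The first step is to apply Cauchy-Schwarz to each summand, splitting it as $\sqrt{\Exp[\prod_{i\in S}x_i^2\prod_{j\in T}(y_j')^2]}\cdot\sqrt{\Pr[A_{S,T}]}$. The joint moment is bounded by Wick's (Isserlis') theorem for Gaussians: since $|\Sigma_{ij}|\leq\sqrt{\Sigma_{ii}\Sigma_{jj}}\leq\lambda_{\max}(\Sigma)$, and analogously for $\tilde\Sigma$, the $2k$-th moment with $k=|S|+|T|$ quadratic factors is bounded by $(2k-1)!!\cdot\lambda_{\max}(\Sigma)^{k}$ up to an absolute constant. The joint tail probability is controlled by the multivariate Gaussian tail bound stated immediately above \thm{tr}, yielding $\Pr[A_{S,T}]\leq e^{-k/(2\lambda_{\max}(\Sigma))}$ under the determinantal hypothesis $(\lambda_{\max}(\Sigma))^N/(\lambda_{\min}(\Sigma))^{N/2}<1$ that is already assumed.

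The second step is to sum the per-summand bound over $(S,T)$ of total size $k\geq 1$. Using $\binom{2N}{k}\leq(2eN/k)^{k}$ and $(2k-1)!!\leq(2k)^{k}$ collapses the sum into a geometric series
\begin{equation*}
\sum_{k\geq 1}\bigl(C\cdot N\sqrt{\lambda_{\max}(\Sigma)}\,e^{-1/(4\lambda_{\max}(\Sigma))}\bigr)^{k}
\end{equation*}
for an absolute constant $C$. Substituting $N=2^n$ and $\lambda_{\max}(\Sigma)\leq 1/(c_4 n^{c_3})$, the geometric ratio is at most $1/(2N^{c+1})$ once $c_3\geq 1$ and $c_4$ is large enough in terms of $c$, so the total is at most $1/N^c$; the leftover $\Pr[E]$ is even smaller by the union bound. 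The argument transfers verbatim to $\mathfrak{U}$, since under $\mathfrak{U}$ the $X$- and $Y$-marginals are independent Gaussians with covariances $\Sigma$ and $H\Sigma H$, both inheriting the same eigenvalue bound.

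The main obstacle is the analysis of the $y_i'$-moments: because $y_i'=y_i^2-\Exp[y_i^2]$ is a centered degree-$2$ Gaussian polynomial rather than a Gaussian itself, Wick's theorem does not apply to $y_i'$ directly. I would handle this by expanding $(y_j')^{2}=y_j^4-2\tilde\Sigma_{jj}y_j^2+\tilde\Sigma_{jj}^{2}$ and then invoking Wick on the resulting degree-$4|T|$ Gaussian polynomial in $y$; the combinatorial cancellations among the three pieces ensure each $y_j$-factor contributes only $O(\lambda_{\max}(\tilde\Sigma)^{2})$ to the moment, consistent with $\Exp[(y_j')^2]=2\tilde\Sigma_{jj}^{2}$. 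The tail event $\{|y_j'|>1\}$ reduces to $\{|y_j|>\sqrt{1+\tilde\Sigma_{jj}}\}\subseteq\{|y_j|>1\}$ and inherits the single-variable Gaussian tail estimate derived in the paragraphs just above \thm{tr}, so the geometric-series argument goes through with only a slightly worse constant in the exponent.
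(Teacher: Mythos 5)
Your proposal is correct in strategy and proves the right statement, but it takes a genuinely different route from the paper. The paper follows the Raz--Tal template: it partitions the sample space into events $E_a\cap E_b$ where $a_i\le|x_i|\le a_i+1$ and $b_i\le|y_i'|\le b_i+1$ for $a,b\in\mathbb{N}^N$. On each cell the weight $\prod\max(1,|x_i|)\prod\max(1,|y_i'|)$ is deterministically at most $\prod(1+a_i)\prod(1+b_i)$, so the only probabilistic input needed is the tail $\Pr[E_a\cap E_b]\le\sqrt{\Pr[E_a]\,\Pr[E_b]}$; the sum over $a,b$ with $\sum(a_i+b_i)=J$ is counted crudely by $(2N)^J$. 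No Gaussian moments are ever computed. Your subset decomposition is coarser, so you must keep the random weight $\prod_{i\in S}|x_i|\prod_{j\in T}|y_j'|$ and Cauchy--Schwarz it against the tail, paying with Isserlis; in exchange you sum over $\binom{2N}{k}$ subsets instead of over integer compositions. Both work, but the paper's route sidesteps the entire $y_i'$-moment analysis that you yourself flag as the main obstacle, since those moments never appear in it.

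There is one genuine technical gap in your argument: under $\mathfrak{G}$ the joint $(X,Y)$ has singular covariance (because $Y=H_NX$ exactly), so the multivariate tail bound stated before \thm{tr} --- whose proof integrates $\exp(-\tfrac12 X^\top\Sigma^{-1}X)$ and hence requires $\Sigma^{-1}$ to exist --- cannot be applied to $\Pr[A_{S,T}]$ directly. You first need $\Pr[A_{S,T}]\le\sqrt{\Pr[A_S]\,\Pr[B_T]}$ (the same Cauchy--Schwarz the paper applies to $\Pr[E_a\cap E_b]$), and then estimate each factor via the marginal covariances $\Sigma$ for $X$ and $H\Sigma H$ for $Y$, both non-degenerate; the determinantal hypothesis passes to the relevant principal submatrices by eigenvalue interlacing. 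A smaller point: your Wick bound $(2k-1)!!\,\lambda_{\max}^k$ undersells the combinatorics when $T\neq\emptyset$ (the Gaussian polynomial has degree $2|S|+4|T|$, not $2k$, so the pairing count is larger). This is benign here because the factor $e^{-k/(4\lambda_{\max})}$ with $\lambda_{\max}\le 1/(c_4 n^{c_3})$ absorbs any $k^{O(k)}$ blow-up once $c_3\ge 1$ and $c_4$ is taken large, so your geometric-series conclusion still holds, but the bound should be stated honestly.
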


\begin{proof}
Let $z = (x,y) \sim \mathfrak{G}$. For every sequence of non-negative integers $a = (a_1, \dots, a_N)$, we consider the event $\forall i \in [N] : a_i \leq |x_i| \leq a_{i+1}$, denoted by $E_a$. For every sequence of non-negative integers $b = (b_1, \dots, b_N)$, we consider the event $\forall i \in [N] : b_i \leq |y_i'| \leq b_{i+1}$, denoted by $E_b$. We have:
\begin{equation}
\Pr[E_a] \leq e^{-c_4 \frac{1}{2} n^{c_3} \sum_{i=1}^{N} a_i^2}.
\end{equation}
$\Pr[E_b] = \Pr[\forall i, b_i + \Exp[y_i^2] \leq y_i^2] \leq \Pr[\forall i, b_i \leq y_i^2] \leq 2^N \Pr[\forall i, \sqrt{b_i} \leq y_i] \leq e^{-c_4 n^{c_3} \sum_{i=1}^{N} b_i}.$ We have (notice $y_i' = y_i - \Exp[y_i^2]$):
\begin{align}
& \Exp_{(x, y) \sim \mathfrak{G}}\left[\prod_{i=1}^{N}\max(1,|x_i|)\prod_{i=1}^{N}\max(1,|y_i'|)\mathbf{1}_{(x,y) \neq (\text{trnc}(x), \text{trnc}(y'))}\right] \\
& \leq \sum_{a \in \mathbb{N}^N, b \in \mathbb{N}^N, (a,b) \neq 0^{2N}} \Pr[E_a \cap E_b] \prod_{i=1}^{N}(1+a_i)\prod_{i=1}^{N}(1+b_i) \\
& \leq \sum_{a \in \mathbb{N}^N, b \in \mathbb{N}^N, (a,b) \neq 0^{2N}} \sqrt{\Pr[E_a] \Pr[E_b]} \prod_{i=1}^{N}(1+a_i)\prod_{i=1}^{N}(1+b_i) \\
& \leq \sum_{J=1}^{\infty} e^{-2c_4 J n^{c_3}} |\{(a,b) : a \in \mathbb{N}^N, b \in \mathbb{N}^N, (a,b) \neq 0^{2N}, \sum_{i=1}^{N}(a_i + b_i) = J\}| \\
& \leq \sum_{J=1}^{\infty} e^{-2c_4 J n^{c_3}} (2N)^J \leq \sum_{J=1}^{\infty} N^{-3J} (2N)^J \leq 4N^{-2}.
\end{align}
A similar proof works for $\mathfrak{U}$ since our proof does not use any dependence between $X$ and $Y$. Notice that we can pick $c_4$ to be arbitrarily large to get any $1/N^c$ for any $c > 0.$
\end{proof}

Finally, we use the following simple facts \cite{raz2022oracle}. Let $F: \mathbb{R}^{2N} \to \mathbb{R}$ be a multilinear function that maps $\{\pm 1\}^{2N}$ to $[-1, 1]$. Let $z = (z_1, \dots, z_{2N}) \in \mathbb{R}^{2N}$. Then, $|F(z)| \leq \prod_{i=1}^{2N} \max(1, |z_i|)$. We then use the following lemma to show the truncation error can be neglected.

\begin{lemma}[{\cite[Claim~5.3]{raz2022oracle}}]\label{lem:r2}
For the same previous $\Sigma$, let $0 \leq p, p_0$ such that $p + p_0 \leq 1$. Let $F: \mathbb{R}^{2N} \to \mathbb{R}$ that maps $\{\pm 1\}^{2N}$ to $[-1, 1]$ and $F$ is multilinear. Let $z_0 \in [-p_0, p_0]^{2N}$, then:
\[
 \Exp_{z \sim \mathfrak{G}}[|F(\text{trnc}(z_0 + p \cdot z)) - F(z_0 + p \cdot z)|] \leq \frac{1}{8N^2}.
\]
\end{lemma}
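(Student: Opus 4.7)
The plan is to reduce the truncation-error bound for the affinely shifted and scaled input $z_0 + p \cdot z$ back to the truncation-error bound for $z$ itself, which is exactly the content of \thm{tr}. The only two ingredients needed are (a) the standard envelope $|F(w)| \leq \prod_{i=1}^{2N}\max(1,|w_i|)$ valid for any multilinear $F : \mathbb{R}^{2N} \to \mathbb{R}$ with $F(\{\pm 1\}^{2N}) \subseteq [-1,1]$ (already invoked immediately before the lemma), and (b) a coordinatewise comparison showing that the affine map $z \mapsto z_0 + p \cdot z$ neither enlarges the event on which truncation has any effect, nor enlarges the envelope.

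First I would observe that $F(\trnc(w)) = F(w)$ whenever $w \in [-1,1]^{2N}$, so the integrand vanishes outside the event $E := \{\exists i : |z_{0,i} + p z_i| > 1\}$. On $E$ the integrand is bounded by $|F(\trnc(w))| + |F(w)| \leq 1 + \prod_i \max(1,|w_i|)$ with $w = z_0 + p \cdot z$. Next I would do a short coordinatewise check. Since $|z_{0,i}| \leq p_0$ and $p + p_0 \leq 1$, one has $|z_{0,i} + p z_i| \leq p_0 + p|z_i|$: if $|z_i| \leq 1$ this is at most $p_0 + p \leq 1$, and if $|z_i| > 1$ this is at most $(p_0 + p)|z_i| \leq |z_i|$. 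In either case $\max(1,|z_{0,i} + p z_i|) \leq \max(1,|z_i|)$, and the same case split shows $E \subseteq \{z \neq \trnc(z)\}$.

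Combining these two observations and writing $z = (x, y')$ as in the Gaussian $\mathfrak{G}$, the integrand is dominated by
\[
2\,\prod_{i=1}^{N}\max(1, |x_i|)\,\prod_{i=1}^{N}\max(1, |y'_i|)\,\mathbf{1}_{(x,y') \neq (\trnc(x), \trnc(y'))},
\]
so \thm{tr} bounds its expectation under $\mathfrak{G}$ by $2/N^c$, and $c \geq 3$ (achieved by picking the constant $c_4$ in \thm{tr} large enough) gives the required $1/(8N^2)$ bound. I do not anticipate any real obstacle: the lemma is essentially the statement that truncation is benign under a convex-combination shift, and the only point worth flagging in the writeup is that the constants in \thm{tr} are tuned so the polynomial tail is small enough, which is already built into how $c_4$ enters its proof.
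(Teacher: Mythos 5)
Your proposal matches the paper's proof essentially step for step: both bound $|F(\trnc(w)) - F(w)|$ by $(1 + |F(w)|)\mathbf{1}_{\trnc(w)\neq w} \leq 2\prod_i \max(1,|w_i|)\mathbf{1}_{\trnc(w)\neq w}$ with $w = z_0 + p\cdot z$, then use the coordinatewise domination $\max(1,|(z_0)_i + p z_i|) \leq \max(1,|z_i|)$ together with $\{\trnc(z_0+pz)\neq z_0+pz\}\subseteq\{\trnc(z)\neq z\}$ to reduce to \thm{tr}, tuning the constant there to absorb the numerical factors. The only cosmetic difference is that you spell out the elementary case split behind the coordinatewise inequality, which the paper leaves implicit.
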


\begin{proof}
The original proof works without any change with the modified Claim 5.3.
\begin{align*}
    \Exp_{z \sim \mathfrak{G}}[|F(\text{trnc}(z_0 + p \cdot z)) - F(z_0 + p \cdot z)|] &\leq \Exp_{z \sim \mathfrak{G}}[(1 + |F(z_0 + p \cdot z)|) \cdot \mathbf{1}_{\text{trnc}(z_0 + p \cdot z) \neq z_0 + p \cdot z}] \\
    &\leq \Exp_{z \sim \mathfrak{G}}[2 \cdot \prod_{i=1}^{2N} \max(1, |(z_0)_i + p \cdot z_i|) \cdot \mathbf{1}_{\text{trnc}(z) \neq z}].
\end{align*}
Finally, we use $\prod_{i=1}^{2N} \max(1, |(z_0)_i + p \cdot z_i|) \leq \prod_{i=1}^{2N} \max(1, |z_i|)$.
\begin{equation}
     \Exp_{z \sim \mathfrak{G}}[|F(\text{trnc}(z_0 + p \cdot z)) - F(z_0 + p \cdot z)|] \leq \Exp_{z \sim \mathfrak{G}}[2 \cdot \prod_{i=1}^{2N} \max(1, |z_i|) \cdot \mathbf{1}_{\text{trnc}(z) \neq z}] \leq \frac{1}{4N^2}.
\end{equation}
\end{proof}
So we have shown that the truncate error can be neglected.

\subsection{Classical Hardness}

We now describe the sufficient condition for $\mathbf{SQUAREDFORRELATION} (\Sigma)$ to be classically hard. 

For two vectors $R, Q \in \mathbb{R}^{2N}$, we denote by $R \circ Q \in \mathbb{R}^{2N}$ their point-wise product, that is $(R \circ Q)_i = R_i \cdot Q_i$ for all $i \in [2N]$.

In this section, we will generalize the results from \cite{raz2022oracle} in a more comprehensive manner. Namely, we will give sufficient conditions for a general $\Sigma$ such that distinguishing $\mathfrak{G}$ and $\mathscr{U}$ is hard for a constant depth $AC^0$ circuit. The main goal of this section is to show \thm{mainUD}. In \cite{bassirian2021certified}, we have the standard example.

\begin{theorem}\label{thm:raz}
Let $\Sigma = \frac{\Id}{c_1 n}$ for a sufficiently large $c_1$. Let $A: \{ \pm 1\}^{2N} \to \{\pm 1\}$ be a Boolean circuit of size $\exp(\log^{O(1)}(N))$ and depth $O(1)$. Then the absolute difference between the expectations $|\Exp_{z \sim \mathfrak{G}}[A(x,y')] - \Exp_{z \sim \mathfrak{U}}[A(x,y')]|$ is at most $\frac{\polylog(N)}{\sqrt{N}}$. When the truncation error can be neglected, $|\Exp_{z \sim \mathscr{G}}[A(x,y')] - \Exp_{z \sim \mathscr{U}}[A(x,y')]|$ is also at most $\frac{\polylog(N)}{\sqrt{N}}$.
\end{theorem}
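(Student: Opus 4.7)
The plan is to adapt the classical-hardness framework of \cite{raz2022oracle} as extended to squared Forrelation in \cite{bassirian2021certified}; for $\Sigma=\epsilon\Id$ with $\epsilon=1/(c_1 n)$ this is precisely the setting treated there, so the argument largely transcribes. First, I would replace $A$ by its multilinear Fourier extension $\tilde A(z)=\sum_{S\subseteq[2N]}\hat A(S)\prod_{i\in S}z_i$, which agrees with $A$ on $\{\pm1\}^{2N}$ and is well defined on all of $\mathbb{R}^{2N}$. The rounding and truncation stages are absorbed by \thm{tr} and \lem{r2} at cost $O(1/\poly(N))$, so it suffices to bound
\begin{equation*}
\Delta \;=\; \Bigl|\sum_{S\subseteq [2N]}\hat A(S)\bigl(\Exp_{\mathfrak{G}}[\chi_S(X,Y')] - \Exp_{\mathfrak{U}}[\chi_S(X,Y')]\bigr)\Bigr|,
\end{equation*}
where $\chi_S(z)=\prod_{i\in S}z_i$.

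Second, I would stratify by Fourier level $k=|S|$ and split $S=S_X\sqcup S_Y$ into its $X$- and $Y'$-parts. The moments are computed via Isserlis' theorem applied to the joint Gaussian $(X,Y)$: block-diagonal $\epsilon\Id$ under $\mathfrak{U}$, versus the Forrelated block matrix with off-diagonal $\epsilon H$ under $\mathfrak{G}$. Each Hadamard cross-pairing contributes a factor $\epsilon/\sqrt N$, and because every $Y_j'=Y_j^2-\epsilon$ is a centered quadratic, each $Y'$-index must be paired at least once or else the contributions to $\mathfrak{G}$ and $\mathfrak{U}$ cancel identically. Carefully enumerating these pairings, as in \cite{bassirian2021certified}, yields a level-$k$ moment bound whose net size is on the order of $\epsilon^{k/2}/\sqrt N$ up to polynomial combinatorial factors in $k$.

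Third, I would combine this with Tal's $L_1$-Fourier-growth bound for $\mathsf{AC}^0$ circuits of quasipolynomial size and constant depth: $L_{1,k}(A)=\sum_{|S|=k}|\hat A(S)| \leq \polylog(N)^{O(k)}$. Plugging in $\epsilon=1/(c_1 n)=\Theta(1/\log N)$, the level-$k$ contribution becomes at most $\polylog(N)^{O(k)}\cdot (c_1 \log N)^{-k/2}\cdot N^{-1/2}$; for $c_1$ large enough the series in $k$ is geometric and the total is dominated by $\polylog(N)/\sqrt N$, as claimed. The hard part will be the squared structure of $Y'$: the Isserlis combinatorics differs materially from the linear Forrelation case of \cite{raz2022oracle}, because $Y$--$Y$ pairings internal to the $Y'$-block behave unlike $X$--$Y$ cross pairings through $H$, and it is exactly this refined pairing accounting that \cite{bassirian2021certified} carries out. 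For the specific covariance $\Sigma=\Id/(c_1 n)$ their argument applies essentially verbatim, so the present theorem is effectively a restatement of their classical-hardness result in the notation of this paper, which in turn implies the bound for the truncated distributions $\mathscr{G},\mathscr{U}$ via the absorbing estimates noted above.
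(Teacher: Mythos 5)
Your proposal replaces the paper's actual argument with a direct Fourier-level-by-level bound, and that route has a genuine gap. The paper (following Raz--Tal \cite{raz2022oracle} as adapted in \cite{bassirian2021certified}) does \emph{not} bound $\sum_S \hat A(S)\bigl(\Exp_{\mathfrak{G}}[\chi_S]-\Exp_{\mathfrak{U}}[\chi_S]\bigr)$ directly. Instead it proves \thm{mainUD} by walking a hybrid path $z^{t,0}\sim\mathfrak{G}\to z^{0,t}\sim\mathfrak{U}$ in $t=N^c$ tiny steps, and at each step applies the random-restriction identity of \thm{chat} to rewrite $F(z_0+p\,z)-F(z_0+p\,z')$ as an average of restricted functions $F_\rho$ evaluated on the small perturbation alone. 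In each such step the relevant moment differences (\thm{f7}) carry a factor $(\epsilon_0/t)^{(|S|+|T|)/2}$ with $t=N^c$, so all contributions above degree two are negligible; the whole argument then only uses the level-two Fourier mass $L_{1,2}(F_\rho)\le l$ supplied by Tal's bound, summed over $t$ steps. The theorem in question is obtained as the corollary $\Sigma=\Id/(c_1 n)$, $m=1/2$.

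The step that fails in your sketch is the claim that the level-$k$ contribution is ``on the order of $\epsilon^{k/2}/\sqrt N$ up to polynomial combinatorial factors in $k$.'' The Isserlis pairing count is not polynomial: the paper's own moment lemma gives a bound of the form $2(s+t)!\,\epsilon^{(s+t)/2}/N^m$, i.e.\ a factor of $k!$ (or at least $(k-1)!!\sim (k/e)^{k/2}$) at level $k=s+t$. Multiplying by Tal's $L_{1,k}(A)\le(c\log s)^{(d-1)k}=\polylog(N)^{k}$ and using $\epsilon=1/(c_1\log N)$ yields a level-$k$ term of order $\polylog(N)^{k}\cdot k!\cdot(\log N)^{-k/2}/\sqrt N$, which grows without bound in $k$: the factorial dwarfs any geometric decay coming from $\epsilon^{k/2}$, and enlarging $c_1$ only rescales the base by a constant while the $\polylog(N)^{k}/(\log N)^{k/2}$ ratio already diverges for any circuit depth $d\ge 2$. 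This is precisely the obstruction that stalled pre-Raz--Tal attempts at a direct moment-matching proof; the hybrid-path/random-restriction machinery is what circumvents it by ensuring the factorial is always paired with $(\epsilon_0/N^c)^{k/2}$ rather than $\epsilon_0^{k/2}$, so that only the degree-two term survives and only $L_{1,2}$ is ever needed. To repair your proposal you would have to import \thm{chat}, set up the interpolating path $z^{i,j}$, and prove the per-step bound of \thm{f7} before summing over the $t$ steps, as the paper does in the proof of \thm{D7}.
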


\begin{theorem}\label{thm:mainUD}
Assume $\Sigma$ is positive semidefinite and satisfies $|\Sigma H|_{\infty} \leq \frac{C \epsilon_0}{N^m}$ for some $m > 0$ and $\max \{ |\Sigma_{ij}|, |(H\Sigma H)_{ij}|\} \leq \epsilon_0$ for some $C > 0$, where $\epsilon_0 = \frac{1}{c_1 n}$ for a sufficiently large $c_1$. Let $A: \{ \pm 1\}^{2N} \to \{\pm 1\}$ be a Boolean circuit of size $\exp(\log^{O(1)}(N))$ and depth $O(1)$, then $|\Exp_{z \sim \mathfrak{G}}[A(x,y')] - \Exp_{z \sim \mathfrak{U}}[A(x,y')]|$ is at most $\frac{\polylog(N)}{N^m}$. Here we assume $0< m \leq \frac{1}{2}$.
\end{theorem}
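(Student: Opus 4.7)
The plan is to adapt the Gaussian-moment / Fourier-weight framework of Raz--Tal \cite{raz2022oracle}, as refined for the squared variant in \cite{bassirian2021certified}, to the present setting in which the covariance $\Sigma$ need not be $\epsilon \Id$. By \thm{tr} and \lem{r2}, together with the multilinearity of the polynomial extension $\tilde A$ of any Boolean $A$, it suffices to control the untruncated difference
\begin{equation*}
\Delta \;=\; \Exp_{z\sim\mathfrak{G}}[\tilde A(z)] - \Exp_{z\sim\mathfrak{U}}[\tilde A(z)],
\end{equation*}
since the truncation errors are already $O(1/N^c)$ under the hypothesis $\max\{|\Sigma_{ij}|,|(H\Sigma H)_{ij}|\}\leq \epsilon_0 = 1/(c_1 n)$, which in particular forces $\lambda_{\max}(\Sigma)\leq 1/\poly(n)$.

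First, I would expand $\tilde A$ in its multilinear Fourier basis on $\{\pm1\}^{2N}$ as $\tilde A(x,y') = \sum_{S}\hat A(S)\,z_S$ and split $S = A\sqcup B$ with $A\subseteq[N]$ indexing the $x$-side and $B\subseteq[N]$ the $y'$-side. The triangle inequality then gives
\begin{equation*}
|\Delta| \;\leq\; \sum_{A,B}|\hat A(A\cup B)|\cdot\bigl|\Exp_{\mathfrak{G}}[X^A(Y')^B] - \Exp_{\mathfrak{U}}[X^A(Y')^B]\bigr|.
\end{equation*}
The marginal distribution of $X$ alone and of $Y'$ alone agree between $\mathfrak{G}$ and $\mathfrak{U}$ (both depend only on $\Sigma$ and on $H\Sigma H$), while $X,Y'$ are independent under $\mathfrak{U}$; hence only terms with $|A|\geq 1$ and $|B|\geq 1$ contribute to $\Delta$.

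Second, I would expand $Y'_j = \sum_{p,q}H_{jp}H_{jq}X_p X_q - \tilde\Sigma_{jj}$ and apply the Wick/Isserlis formula to the resulting degree-$(|A|+2|B|)$ polynomial in the centered Gaussian vector $X$. Pairings internal to the $A$-indices give factors of $\Sigma_{ii'}$; pairings between distinct $Y'_j, Y'_{j'}$ give $(H\Sigma H)_{jj'}^2$ contributions; and self-pairings inside a single $Y'_j$ cancel exactly against the centering constant $\tilde\Sigma_{jj}$. All of these appear identically under $\mathfrak{U}$. The distinguishing \emph{cross-pairings}, between an $X_i$ with $i\in A$ and an $X_p$ produced by some $Y'_j$-expansion, sum to $\sum_p H_{jp}\Sigma_{ip} = (\Sigma H)_{ij}$, which is bounded by $C\epsilon_0/N^m$ in absolute value. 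Parity of the underlying $X$-moment forces every non-vanishing Yes-minus-No diagram on $(a,b)=(|A|,|B|)$ to use an even number $c\geq 2$ of such cross edges, yielding the moment-difference bound
\begin{equation*}
\bigl|\Exp_{\mathfrak{G}}[X^A(Y')^B]-\Exp_{\mathfrak{U}}[X^A(Y')^B]\bigr| \;\leq\; (a+b)^{O(a+b)}\cdot (C\epsilon_0/N^m)^{2}\cdot\epsilon_0^{(a+2b-4)/2},
\end{equation*}
whose dominant contribution comes from $c=2$.

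To conclude, I would substitute this estimate into the Fourier sum and invoke the $AC^0$ level-$k$ $L^1$ Fourier bound $\sum_{|S|=k}|\hat A(S)|\leq (C\log s)^{(d-1)k}$ for size-$s$ depth-$d$ circuits used in \cite{raz2022oracle,bassirian2021certified}. The calibration $\epsilon_0 = 1/(c_1 n) = \Theta(1/\log N)$ absorbs the per-level $\polylog(N)$ factor, so the sum over $k\geq 3$ (the smallest level admitting the required parity) is geometric and dominated by $k=3$ (i.e. $a=2,\,b=1$), producing $|\Delta|\leq \polylog(N)\cdot(\epsilon_0/N^m)^{2} \leq \polylog(N)/N^m$, as desired. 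The main obstacle will be the combinatorial bookkeeping in the Wick step: one must rigorously cancel the self-pairings of each $Y'_j$ against $\tilde\Sigma_{jj}$ and verify the precise count of $(\Sigma H)$-edges in every surviving diagram, generalizing the covariance-matrix analysis of \cite[Section~5]{raz2022oracle} and \cite[Appendix]{bassirian2021certified} from the scalar case $\Sigma = \epsilon_0\Id$ to an arbitrary PSD $\Sigma$. A secondary subtlety is that the calibration $\epsilon_0 = \Theta(1/n)$, rather than merely $1/\poly(n)$, is essential so that Tal's per-level $(\log s)^{(d-1)k}$ factors are absorbed by the $\epsilon_0^{k/2}$ decay of higher moments; the restriction $m\leq 1/2$ in the statement ensures that the leading $k=3$ term governs the final bound.
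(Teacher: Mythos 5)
Your proposal correctly identifies the moment‐comparison structure, the role of the cross-pairings $(\Sigma H)_{ij}$, and the Isserlis/Wick expansion with cancellation of self-pairings against $\tilde\Sigma_{jj}$, and these ingredients do show up in the paper's proof. However, the overall architecture — a direct sum over all Fourier levels of $A$, bounding each level by Tal's $L^1$ bound times a moment difference — has a fatal convergence problem that the paper sidesteps.

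Concretely, at Fourier level $k=a+b$ (with $b\geq 1$), your estimate gives a contribution of order $(c\log s)^{(d-1)k}\cdot k^{O(k)}\cdot (\epsilon_0/N^m)^2\cdot \epsilon_0^{(k-3)/2}$. With $\epsilon_0 = \Theta(1/n)=\Theta(1/\log N)$ and $s=\exp(\log^{O(1)}N)$, the per-level ratio $(c\log s)^{(d-1)}\cdot k^{O(1)}\cdot\epsilon_0^{1/2}=\polylog(N)/\sqrt{\log N}\cdot k^{O(1)}$ is not bounded by a constant less than $1$ — for $d\geq 2$ it actually grows — so the sum over $k$ diverges rather than being a geometric series dominated by $k=3$. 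This is exactly the obstacle that led Raz and Tal to introduce the interpolation-plus-random-restriction method: taking $c_1$ large makes $\epsilon_0$ smaller only by a constant factor and cannot overcome the $\polylog(N)^{(d-1)k}$ growth. The paper avoids this by decomposing the Gaussian difference into a path of $t=N^c$ increments each of covariance $\Sigma/t$ (so the effective moment parameter at each step is $\epsilon_0/t$, polynomially small in $N$), and then applying the random-restriction identity of \thm{chat} so that only the level-$2$ Fourier weight $L_{1,2}(F_\rho)\leq l=\polylog(N)$ enters; the per-step bound $O(\epsilon_0 l/(tN^m))$ from \thm{f7} then sums cleanly over $t$ steps to $\polylog(N)/N^m$. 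Your Wick computation is essentially the one the paper performs inside \thm{f7}, but it must be applied to the tiny increments $\Delta x,\Delta y$, not directly to $z\sim\mathfrak{G}$ versus $z\sim\mathfrak{U}$. Without the interpolation and the reduction to level-$2$ spectral weight via random restrictions, the proof does not close.
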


We mention some useful lemmas.
\begin{lemma}[{\cite[equation (2) in section 5]{raz2022oracle}}]\label{lem:r1}
Given a multilinear function $F: \mathbb{R}^{2N} \to \mathbb{R}$ that maps $[-1, 1]^{2N}$ to $[-1, 1]$, it has a similar expectation under both $\mathfrak{G}$ and $\mathscr{G}$.
\end{lemma}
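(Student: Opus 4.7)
The plan is to reduce the claim to Lemma~\ref{lem:r2} by exploiting the multilinear structure of $F$ and the product form of the sampling $\mathfrak{G} \to \mathscr{G}$. Recall that a draw $z' \sim \mathscr{G}$ is produced by first drawing $z \sim \mathfrak{G}$ and then, independently for each coordinate $i \in [2N]$, setting $z'_i = 1$ with probability $\tfrac{1+\trnc(z_i)}{2}$ and $z'_i = -1$ with probability $\tfrac{1-\trnc(z_i)}{2}$, so that $\Exp[z'_i \mid z] = \trnc(z_i)$.

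The first step is a conditional-expectation computation. Because $F$ is multilinear in its $2N$ arguments and because the coordinates $z'_i$ are conditionally independent given $z$, one can move the expectation coordinate-by-coordinate through each monomial of $F$ to obtain
\begin{equation*}
\Exp_{z' \sim \mathscr{G}}[F(z')] \;=\; \Exp_{z \sim \mathfrak{G}}\bigl[\,\Exp[F(z') \mid z]\,\bigr] \;=\; \Exp_{z \sim \mathfrak{G}}[F(\trnc(z))],
\end{equation*}
where $\trnc$ is applied coordinatewise. This step is the only place where the multilinearity hypothesis on $F$ is actually used.

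The second step is to compare $\Exp_{z \sim \mathfrak{G}}[F(\trnc(z))]$ with $\Exp_{z \sim \mathfrak{G}}[F(z)]$. Apply Lemma~\ref{lem:r2} with the choice $z_0 = 0$ and $p = 1$ (for which the admissibility condition $p + p_0 \leq 1$ is satisfied with $p_0 = 0$). Since $F$ also maps $[-1,1]^{2N}$ into $[-1,1]$, the lemma directly gives
\begin{equation*}
\bigl|\Exp_{z \sim \mathfrak{G}}[F(\trnc(z))] - \Exp_{z \sim \mathfrak{G}}[F(z)]\bigr| \;\leq\; \Exp_{z \sim \mathfrak{G}}\bigl[|F(\trnc(z)) - F(z)|\bigr] \;\leq\; \tfrac{1}{8N^{2}}.
\end{equation*}
Chaining these two displays yields $|\Exp_{z \sim \mathfrak{G}}[F(z)] - \Exp_{z' \sim \mathscr{G}}[F(z')]| \leq \tfrac{1}{8N^{2}}$, which is the precise quantitative meaning of ``similar expectation'' needed by the subsequent soundness argument in Theorem~\ref{thm:mainUD}.

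There is no substantive obstacle here beyond bookkeeping; the only subtlety is step~1, where one must verify that multilinearity together with the conditional product structure of the $z'_i$ legitimately collapses $\Exp[F(z') \mid z]$ to $F(\trnc(z))$. This is routine once $F$ is expanded in its monomial basis, since each monomial involves distinct coordinates and the conditional independence makes the expectation of a monomial factorize into $\prod_i \trnc(z_i)$. Everything else is a direct invocation of Lemma~\ref{lem:r2}, which is why the statement is framed informally as $F$ having ``similar expectation'' under the two distributions.
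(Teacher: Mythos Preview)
Your proposal is correct and your Step~1 is exactly what the paper gives as the content of this lemma: the paper simply records the identity
\[
\Exp_{z' \sim \mathscr{G}}[F(z')] \;=\; \Exp_{z \sim \mathfrak{G}}[F(\trnc(z))]
\]
citing Raz--Tal, with no further argument. Your Step~2 (invoking Lemma~\ref{lem:r2} with $z_0=0$, $p=1$ to get the $\tfrac{1}{8N^2}$ bound) is sound but goes beyond what the paper packages into Lemma~\ref{lem:r1}; in the paper, the phrase ``similar expectation'' is really just shorthand for the displayed identity, and the quantitative truncation estimate via Lemma~\ref{lem:r2} is deferred to the hybrid argument in Theorem~\ref{thm:D7} rather than folded in here. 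So you have interpreted the informal statement more strongly than the authors did, but your stronger reading is proved correctly and with the same ingredients the paper uses elsewhere.
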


We can apply the following theorem without modification:
\begin{equation}
\Exp_{z' \sim \mathscr{G}}[F(z')] = \Exp_{z' \sim \mathfrak{G}}[F(\text{trnc}(z'))].
\end{equation}

\begin{lemma}[\cite{tal2017tight}]
There exists a universal constant $c > 0$ such that the following holds. Let $A: \{\pm 1\}^{2N} \to \{\pm 1\}$ be a Boolean circuit with at most $s$ gates and depth at most $d$. Then, for all $k \in \mathbb{N}$, we have 
\[
\sum_{S \subset [2N] : |S| = k} |\hat{A}(S)| \leq (c \cdot \log s)^{(d-1)k}.
\]
\end{lemma}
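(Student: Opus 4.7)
The plan is to prove this $L_1$ Fourier tail bound for $AC^0$ by induction on the depth $d$, using random restrictions together with Håstad's switching lemma, which is the template underlying Tal's tight bound. I would set $p = 1/(c_0 \log s)$ for a sufficiently large constant $c_0$ and consider restrictions $\rho$ in which each input variable is independently kept free with probability $p$ and otherwise assigned a uniformly random $\pm 1$ value.

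The first ingredient is a Fourier contraction inequality: for any Boolean function $f$ and any $k$,
$$\Exp_\rho\Big[\sum_{|T|=k}|\hat{f_\rho}(T)|\Big] \;\geq\; p^k \sum_{|S|=k}|\hat f(S)|.$$
This follows by expanding $\hat{f_\rho}(T)$ as a linear combination of $\{\hat f(S)\}_{S\supseteq T}$ with coefficients that are products of the fixed $\pm 1$ values; taking expectations, only the term with $S=T$ survives, giving $\Exp_\rho[\hat{f_\rho}(T)]=p^{|T|}\hat f(T)$, and then Jensen's inequality $\Exp|X|\ge|\Exp X|$ plus summing over $|T|=k$ yields the bound. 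The second ingredient is Håstad's switching lemma in its quantitative form: with $p=1/(c_0\log s)$, a depth-$d$ circuit of size $s$ collapses under $\rho$ to a depth-$(d-1)$ circuit of size at most $s$ except with probability at most $s^{-\omega(1)}$.

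The base case $d=1$ is a direct calculation for an AND/OR of at most $s$ literals, whose level-$k$ $L_1$ Fourier mass is bounded by $O(1)^k$ and absorbed into the base of the induction. For the inductive step, apply $\rho$ to $A$. On the switching-lemma good event the inductive hypothesis gives $\sum_{|T|=k}|\hat{A_\rho}(T)|\le(c\log s)^{(d-2)k}$, while on the bad event the trivial estimate $\sum_{|T|=k}|\hat{A_\rho}(T)|\le\binom{2N}{k}^{1/2}$ (from Cauchy-Schwarz together with $\|A_\rho\|_2\le 1$) is killed by the $s^{-\omega(1)}$ failure probability. Combining with the contraction inequality gives $p^k\sum_{|S|=k}|\hat A(S)|\le(c\log s)^{(d-2)k}+o(1)$; dividing by $p^k=(c_0\log s)^{-k}$ and re-absorbing the constant into $c$ delivers $\sum_{|S|=k}|\hat A(S)|\le(c\log s)^{(d-1)k}$.

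The main obstacle is obtaining the \emph{tight} $(c\log s)^{(d-1)k}$ dependence rather than a weaker $(c\log s)^{(d-1)k}\cdot\mathrm{poly}(k)$ bound; this is precisely where Tal's proof requires the multi-switching lemma (following Impagliazzo-Matthews-Paturi and Håstad) in place of the naive iterated switching I sketched above, since naive iteration loses an extra $k^{\Theta(1)}$ factor at each depth reduction that accumulates into a $k^{O(d)}$ overhead. A secondary technical point is handling large $k$ in the bad-event accounting: since the trivial level-$k$ bound is at most $N^{O(k)}$ and the switching failure probability decays as $s^{-\omega(1)}$, the contribution is comfortably negligible in the regime $s=\exp(\log^{O(1)}N)$ relevant to the paper, but the calculation must be done uniformly in $k$ to obtain a clean statement.
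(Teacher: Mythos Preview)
The paper does not prove this lemma; it is quoted verbatim as a black-box result from Tal~\cite{tal2017tight} and used only as an input to the subsequent bounds. There is therefore no ``paper's own proof'' to compare against.

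That said, your outline is an accurate high-level sketch of Tal's argument, and you have already correctly identified its main subtlety: the naive depth-reduction via a single application of H{\aa}stad's switching lemma at each level loses a multiplicative $k^{O(1)}$ per round, which is why Tal instead uses the multi-switching lemma of H{\aa}stad/Impagliazzo--Matthews--Paturi to collapse the circuit to a small-depth decision tree in one shot. One small correction to your Fourier contraction step: the inequality $\Exp_\rho\big[\sum_{|T|=k}|\widehat{f_\rho}(T)|\big]\ge p^k\sum_{|S|=k}|\widehat f(S)|$ does not follow from $\Exp_\rho[\widehat{f_\rho}(T)]=p^{|T|}\widehat f(T)$ and Jensen alone, because the cross-terms $S\supsetneq T$ do not vanish inside the absolute value before taking expectation; one instead argues via a sign-randomization (dualizing the $L_1$ norm) or via Tal's explicit decomposition. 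This is a standard fix and does not affect the overall structure you describe.
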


\begin{theorem}[\cite{chattopadhyay2019pseudorandom}, Claim A.5]\label{thm:chat}
Let $f$ be a multi-linear function on $\mathbb{R}^{N}$ and $x \in [-1/2, 1/2]^{N}$. There exists a distribution over random restrictions $R_x$ such that for any $y \in \mathbb{R}^{N}$,
\[
f(x+y) - f(y) = \Exp_{\rho \sim R_x}[f_{\rho}(2 \cdot y) - f_{\rho}(0)].
\]
This implies that 
\[
f(x+y) - f(x+y') = \Exp_{\rho \sim R_x}[f_{\rho}(2 \cdot y) - f_{\rho}(2 \cdot y')].
\]
\end{theorem}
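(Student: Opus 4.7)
The plan is to construct the random restriction distribution $R_x$ explicitly via independent coordinate-wise Bernoulli fixing, and then verify the identity by expanding $f$ in the monomial basis and matching coefficients using the multilinearity of $f$. For each coordinate $i \in [N]$, independently declare $i$ to be ``free'' with probability $1/2$, and otherwise ``fixed to the value $\rho_i = 2 x_i$''. The hypothesis $x \in [-1/2, 1/2]^N$ guarantees $|\rho_i| \leq 1$, so $\rho$ is an honest restriction in the standard combinatorial sense. The restricted function $f_\rho$ is then the multilinear polynomial in the free coordinates obtained by substituting $\rho_i$ into $f$ at each fixed $i$; equivalently, $f_\rho(w)$ replaces $w_i$ by $\rho_i$ whenever $i$ is fixed.

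Because $f$ is multilinear, write $f(z) = \sum_{T \subseteq [N]} c_T \prod_{i \in T} z_i$, and by linearity in $c_T$ it suffices to verify the identity separately for each monomial $M_T(z) = \prod_{i \in T} z_i$. For a fixed outcome of $\rho$ with free set $S$, one has $(M_T)_\rho(2y) - (M_T)_\rho(0) = \prod_{i \in T \cap S} 2 y_i \cdot \prod_{i \in T \setminus S} 2 x_i$ whenever $T \cap S \neq \emptyset$, and the difference vanishes otherwise (the two terms agree on the entirely-fixed monomial and so cancel). Averaging over the coordinate-independent $R_x$, each coordinate in $T$ contributes $\tfrac12 \cdot 2 y_i = y_i$ in expectation when free and $\tfrac12 \cdot 2 x_i = x_i$ when fixed. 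Summing over the nonempty partitions of $T$ into its free subset and its fixed complement, this matches the expansion of the claimed right-hand side term by term, with the ``$T$ entirely fixed'' slice precisely cancelled by $f_\rho(0)$, producing the desired identity.

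The companion identity $f(x+y) - f(x+y') = \Exp_\rho[f_\rho(2y) - f_\rho(2y')]$ then follows immediately: applying the base identity separately to $y$ and to $y'$ and subtracting makes the $f_\rho(0)$ expectation cancel (it depends only on $R_x$), and likewise eliminates any $y$-independent term on the left, leaving precisely the claimed equality. This is the form actually invoked later in the paper, since it converts an additive shift by $x$ into an averaged random restriction, which can then be bounded using $AC^0$ level-$k$ Fourier inequalities.

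Main obstacle: The argument is essentially combinatorial bookkeeping once the restriction is chosen, so the only real design choice is the joint tuning of the fixing probability $p$ and the rescaling factor $c$. The pair $(p,c) = (1/2, 2)$ is forced by the two simultaneous requirements $p \cdot c = 1$ (so that a free coordinate evaluated at $2 y_i$ averages to $y_i$) and $(1-p) \cdot c = 1$ (so that a fixed coordinate set to $c x_i$ averages to $x_i$); and the hypothesis $\|x\|_\infty \leq 1/2$ is exactly what is needed for the fixed values $2 x_i$ to remain in $[-1,1]$, keeping $f_\rho$ inside the standard Boolean restriction formalism. Care must be taken with the all-fixed slice of the monomial expansion to avoid double-counting, but this is the only subtle bookkeeping point.
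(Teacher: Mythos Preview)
The paper does not prove this statement at all; it is cited verbatim from \cite{chattopadhyay2019pseudorandom}, so there is no in-paper argument to compare against. Your construction (independent Bernoulli$(1/2)$ fixing to the value $2x_i$, then monomial-by-monomial verification) is exactly the standard proof of that cited claim.

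There is one genuine point to flag, however. The first displayed identity as printed in the paper is a typo: it should read $f(x+y)-f(x)$, not $f(x+y)-f(y)$. Your own computation in fact proves the corrected version, not the printed one. Concretely, for the monomial $M_T$ your average over $R_x$ gives
\[
\Exp_\rho\bigl[(M_T)_\rho(2y)-(M_T)_\rho(0)\bigr]=\sum_{\emptyset\neq A\subseteq T}\prod_{i\in A}y_i\prod_{i\in T\setminus A}x_i
=M_T(x+y)-M_T(x),
\]
whereas $M_T(x+y)-M_T(y)=\sum_{A\subsetneq T}\prod_{i\in A}y_i\prod_{i\in T\setminus A}x_i$; already for $T=\{1,2\}$ the two differ by $x_1x_2-y_1y_2$. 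Your sentence ``eliminates any $y$-independent term on the left'' shows you were implicitly using the corrected left-hand side $f(x)$, not $f(y)$, without saying so.

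The second displayed identity (the one the paper actually invokes downstream) is correct and follows from the corrected base identity exactly as you derive it. Note that it would \emph{not} follow from the printed version: subtracting the two instances would leave a residual $f(y')-f(y)$, which does not vanish. So your proof is fine once you replace $f(y)$ by $f(x)$ in the base identity; you should state that correction explicitly rather than silently working with it.
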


Before we start, we provide a useful bound on the moments of $\mathfrak{G}$ and $\mathfrak{U}$ by \thm{is}. We aim to upper bound the following expression, where $S$ and $T$ are multisets that may contain duplicate elements. Assuming we have good control over pair correlations, we say $\Exp[x_i y_j] \leq \frac{1}{N^m}$ and $\Exp[x_i x_j] \leq \epsilon$, $\Exp[y_i y_j] \leq \epsilon$. \thm{is} should provide a bound on such a problem.

Here we notice that $\Exp_{(x, y) \sim \mathfrak{G}}\left[\prod_{i \in S} x_i\right] \Exp_{(x, y) \sim \mathfrak{G}}\left[\prod_{j \in T} y_j\right] = \Exp_{(x, y) \sim \mathfrak{U}}\left[\prod_{i in S} x_i \prod_{j \in T} y_j\right].$
\begin{lemma}
   \begin{equation}
    \left| \Exp_{(x, y) \sim \mathfrak{G}}\left[\prod_{i \in S} x_i \prod_{j \in T} y_j\right] - \Exp_{(x, y) \sim \mathfrak{G}}\left[\prod_{i \in S} x_i\right] \Exp_{(x, y) \sim \mathfrak{G}}\left[\prod_{j \in T} y_j\right] \right| \leq 2(s+t)!\frac{\epsilon^{(s+t)/2}}{N^m}.
   \end{equation}
Here, we denote $|S| = s$ and $|T| = t$.
\end{lemma}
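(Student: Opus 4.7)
The strategy is to apply Isserlis' (Wick's) theorem to the centered Gaussian $(x,y)\sim \mathfrak{G}$ and classify the resulting perfect matchings by how many cross pairs $(x_i,y_j)$ they contain. First, if $s+t$ is odd both sides vanish identically, so I may assume $s+t$ is even. Writing the combined covariance as
\begin{equation*}
C \;=\; \begin{pmatrix} \Sigma & \Sigma H \\ H\Sigma & H\Sigma H \end{pmatrix},
\end{equation*}
Isserlis' theorem expresses $\Exp_{\mathfrak{G}}[\prod_{i \in S} x_i \prod_{j \in T} y_j]$ as the sum, over all perfect matchings $M$ of the multiset $S \sqcup T$, of the product $\prod_{(a,b)\in M} C_{a,b}$ of pairwise covariances. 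I would then split these matchings into \emph{block-diagonal} ones, whose edges lie entirely within $S$ or entirely within $T$, and \emph{mixed} ones, containing at least one cross edge. The block-diagonal matchings are exactly the terms produced by applying Isserlis separately to the two marginals, so their sum equals $\Exp[\prod_{i\in S} x_i]\cdot \Exp[\prod_{j\in T} y_j]$, and the difference appearing in the lemma reduces to the Isserlis sum over mixed matchings only.

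The key estimate uses the three inputs supplied by \thm{mainUD}: same-type covariances $|\Sigma_{ij}|$ and $|(H\Sigma H)_{ij}|$ are at most $\epsilon$, while cross covariances satisfy $|\Exp[x_i y_j]| = |(\Sigma H)_{ij}| \leq C\epsilon/N^m$ for an absolute constant $C$. A matching with $k \geq 1$ cross edges therefore contributes at most
\begin{equation*}
\epsilon^{(s+t)/2 - k}\cdot (C\epsilon/N^m)^k \;=\; C^k\,\frac{\epsilon^{(s+t)/2}}{N^{mk}} \;\leq\; C^k\,\frac{\epsilon^{(s+t)/2}}{N^m}.
\end{equation*}
The extra factor of $\epsilon$ hidden inside the cross-pair bound is precisely what yields the exponent $(s+t)/2$ rather than $(s+t)/2 - 1$ in the final estimate. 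Bounding the total number of perfect matchings of $s+t$ points by $(s+t-1)!! \leq (s+t)!$ and summing over $k \geq 1$ then gives the advertised bound.

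The main obstacle is not conceptual but bookkeeping. One must (a) verify that the block-diagonal matchings in the Isserlis expansion of the joint moment are in clean bijection with, and evaluate identically to, the union of the two Isserlis expansions of the marginals, so that the cancellation is exact without spurious double counting; and (b) check that the combinatorial factor $(s+t-1)!!/2^{(s+t)/2}$, the constant $C$ from the $\Sigma H$ hypothesis, and the geometric sum $\sum_{k\geq 1} C^k/N^{m(k-1)}$ together fit inside the stated prefactor $2$. A secondary subtlety is that $S$ and $T$ are multisets, so when the same index appears multiple times in $S$ or $T$ one must make sure to treat Isserlis at the level of ordered tuples so that diagonal covariances $\Sigma_{ii}$ and $(H\Sigma H)_{jj}$ are correctly counted on both sides of the cancellation.
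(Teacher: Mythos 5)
Your proposal is correct and takes essentially the same route as the paper: both apply Isserlis' theorem, observe that the block-diagonal (same-type) matchings exactly reproduce the product of marginals and cancel, and then bound the mixed matchings by counting cross $(x_i,y_j)$ pairs, using $|\Exp[x_iy_j]|\leq\epsilon/N^m$ versus $|\Exp[x_ix_j]|,|\Exp[y_iy_j]|\leq\epsilon$ and summing a geometric series in the number of cross pairs. The only difference is cosmetic bookkeeping — the paper organizes the count by choosing exactly $w$ cross pairs and bounding each such combinatorial coefficient by $(s+t)!$ before summing the geometric series, whereas you bound the overall matching count; as you yourself flag, your phrasing of simultaneously bounding the total number of matchings and summing over $k$ needs to be replaced by a per-$k$ count to land cleanly on the stated prefactor.
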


\begin{proof}
So now we can apply Isserlis' theorem \ref{thm:is}. Suppose we categorize the elements of sets $S$ and $T$ into three groups: $x$ paired with $x$, $y$ paired with $y$, and $x$ paired with $y$. For the term to be non-zero, we must have at least one pair of $x$ and $y$. Suppose we have $w$ pairs of $x$ and $y$, and each pair contributes $\frac{1}{N^m}$. The remaining $x$ and $y$ pairs contribute at most $\epsilon$.

Here we define $f(p) = \frac{p!}{2^{p/2} (p/2)!}$. We have:
\begin{align*}
    &\left| \mathfrak{G}(S,T) - \Exp_{(x, y) \sim \mathcal{U}}\left[\prod_{i \in S}x_i\right]\Exp_{(x, y) \sim \mathcal{U}}\left[\prod_{j \in T}y_j\right] \right| \nonumber \\
    &\leq \sum_{w=1}^{\min(|S|, |T|)} \binom{|S|}{w} \binom{|T|}{w} \frac{\epsilon^{|S|/2 + |T|/2 - w} \epsilon^w w!}{N^{mw}} f(|S|-w) f(|T|-w)
\end{align*}

Let $|S| = s$ and $|T| = t$, we have:
\begin{align*}
     & \binom{s}{w} \binom{t}{w} \frac{\epsilon^{s/2 + t/2 - w} \epsilon^w w!}{N^{mw}} f(s-w) f(t-w)  \nonumber \\
     & \leq \frac{s! t! \epsilon^{(s+t)/2}}{w! N^{mw} 2^{(s+t)/2-w} (s/2-w/2)! (t/2-w/2)!} \nonumber \\
    & \leq \frac{(s+t)! \epsilon^{(s+t)/2}}{N^{mw}}. 
\end{align*}

We now have:
\begin{equation}
\sum_{w=1}^{\min(s, t)} \frac{(s+t)! \epsilon^{(s+t)/2}}{N^{mw}} \leq 2(s+t)! \frac{\epsilon^{(s+t)/2}}{N^m}.
\end{equation}
\end{proof}

The key tool is the following:
\begin{theorem}\label{thm:f7}
Let \( F: \mathbb{R}^{2N} \to \mathbb{R} \) be a multilinear function where for all random restrictions \( L_{1,2}(F_\rho) \leq l \). Then
\begin{align*}
     (*) &= \Exp_{z = (\Delta x, \Delta y) \sim \mathfrak{G}}[F(x + \Delta x, (y + \Delta y)^2 - i H \Sigma H / t)] \\
     &\quad - \Exp_{z = (\Delta x, \Delta y) \sim \mathfrak{U}}[F(x + \Delta x, (y + \Delta y)^2 - i H \Sigma H / t)] \\
      &\leq O\left(\frac{\epsilon l}{t N^m}\right).
\end{align*}
Here, \( \Delta x \sim \mathcal{N}(0, \Sigma/t) \) and \( \Delta y = H \Delta x \).
\end{theorem}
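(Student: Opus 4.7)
The strategy is to exploit the single structural difference between $\mathfrak{G}$ and $\mathfrak{U}$: both distributions have identical marginals on $\Delta x$ and on $\Delta y$, but differ only in the joint cross-covariances $\mathrm{Cov}(\Delta x_i, \Delta y_j) = (\Sigma H)_{ij}/t$, which are nonzero under $\mathfrak{G}$ and zero under $\mathfrak{U}$. Consequently, only terms that mix $\Delta x$-variables with $\Delta y$-variables can contribute to the difference $(*)$, and Isserlis' theorem pins down exactly which cross-pairings arise and how large each can be, with each cross-pair bounded by $\|\Sigma H\|_\infty/t \leq C\epsilon/(tN^m)$.

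First, use the multilinearity of $F$ to expand
\[
F = \sum_{S,T} \hat{F}(S,T)\,\prod_{i \in S}(x_i + \Delta x_i) \prod_{j \in T}\bigl((y_j + \Delta y_j)^2 - (H\Sigma H)_{jj}/t\bigr).
\]
Decomposing each quadratic factor as $y_j^2 + 2 y_j \Delta y_j + (\Delta y_j^2 - (H\Sigma H)_{jj}/t)$ produces a polynomial expansion in the Gaussian variables $\Delta x_i, \Delta y_j$, with coefficients depending on the fixed $x, y$ and on $\hat{F}(S,T)$. Second, subtract the $\mathfrak{U}$-expectation from the $\mathfrak{G}$-expectation and apply Isserlis' theorem termwise: any Wick pairing consisting purely of $\Delta x$--$\Delta x$ or $\Delta y$--$\Delta y$ matches is identical under both distributions and cancels. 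What survives is a sum over Isserlis diagrams containing at least one $\Delta x_i$--$\Delta y_j$ cross edge.

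Third, observe that the dominant contribution comes from diagrams with exactly one cross edge, since each additional cross edge contributes a further factor of $C\epsilon/(tN^m)$ and is therefore strictly subdominant. Fix such a cross edge between some $\Delta x_i$ and $\Delta y_j$; once the remaining $\Delta$-variables are integrated out via the remaining Wick pairings (each contributing a same-type covariance bounded by $\epsilon/t$), the resulting combinatorial sum over the choice of $(i,j)$ becomes precisely a level-$2$ Fourier sum on a restricted function $F_\rho$, where $\rho$ encodes the integrated-out coordinates. Applying the hypothesis $L_{1,2}(F_\rho) \leq l$ bounds this sum by $l$, and multiplying by the single cross-edge bound $C\epsilon/(tN^m)$ gives $(*) = O(\epsilon l/(tN^m))$.

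The main obstacle is the combinatorial bookkeeping that matches the Isserlis diagram sum, after extracting one cross edge, to $L_{1,2}(F_\rho)$ cleanly, rather than to some weaker quantity that would blow up with the number of matchings. A related technical subtlety is that the centering $-(H\Sigma H)_{jj}/t$ is exactly what kills the self-pairing $\Delta y_j$ with $\Delta y_j$ inside each quadratic factor, so that no constant-in-$\Delta y$ contribution survives and the relevant Fourier level remains exactly $2$. Higher-order cross-pairings are straightforward to control by iterating the single-edge estimate, since each extra cross edge saves an additional factor of $\epsilon/N^m$.
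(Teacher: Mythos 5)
Your conceptual skeleton matches the paper's: exploit that $\mathfrak{G}$ and $\mathfrak{U}$ share all same-type covariances and differ only in the cross-covariances $\mathrm{Cov}(\Delta x_i,\Delta y_j)=(\Sigma H)_{ij}/t$, apply Isserlis so that only pairings with at least one cross edge survive, argue the single-cross-edge diagrams dominate, and close by invoking $L_{1,2}(F_\rho)\leq l$. However, there is a genuine gap at exactly the step you flag as "the main obstacle": the identification of the post-Isserlis combinatorial sum with a level-$2$ Fourier sum of a restricted function $F_\rho$.

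The issue is that your direct multilinear expansion produces factors $\prod_{i\in S}(x_i+\Delta x_i)$, which interleave the fixed path coordinates $x_i$ with the Gaussian increments $\Delta x_i$. After integrating out the remaining increments by Wick pairing, the coefficient attached to a chosen cross edge $(i,j)$ is a sum over all $S\ni i$, $T\ni j$ of $\hat F(S,T)$ weighted by products of the fixed $x$, $y$ values together with combinatorial multiplicities from the remaining pairings; it is not a priori the Fourier coefficient $\hat F_\rho(\{i,j\})$ of any single restriction $\rho$, nor is it obvious the sum over $(i,j)$ equals (or is dominated by) $L_{1,2}(F_\rho)$ for the notion of random restriction appearing in the hypothesis. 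The paper resolves this precisely by invoking the random-restriction lemma of Chattopadhyay, Hatami, Hosseini, and Lovett (the paper's Theorem~\ref{thm:chat}, i.e.\ Claim~A.5 of \cite{chattopadhyay2019pseudorandom}), which rewrites $F(x+\Delta x,\cdot)-F(\cdot)$ as $\Exp_{\rho\sim R_x}[F_\rho(2\Delta x,\cdot)-F_\rho(0)]$. After this rewriting, for each fixed $\rho$ the argument of the expectation is a clean multilinear function of the Gaussian increments whose Fourier coefficients are exactly $\hat{F_\rho}(V)$, and one can then expand, apply the Isserlis-based moment-difference bound, and extract $\sum_{|V|=2}|\hat{F_\rho}(V)|\leq L_{1,2}(F_\rho)\leq l$ from the dominant term. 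Without this lemma (or an inline re-derivation of it, carefully controlling the products of $|x_i|,|y_j|\leq 1/2$ so the sum over $S,T$ does not blow up), the claim "becomes precisely a level-2 Fourier sum on $F_\rho$" is not established, and the proof as written is incomplete.
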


\begin{proof}
We can express the $(*)$ value using random restrictions of $F$ applying \thm{chat}. We aim to prove the following:
Let \( t = N^{c} \) for a sufficiently large universal constant \( c \), and let \( (x,y) \) be in \( [-1/2, 1/2]^{2N} \) where \( L_{1,2}(F_\rho) \leq l \):
\begin{equation}
    (*) = \Exp_{z = (x,y) \sim \mathfrak{G}, \rho}[F_{\rho}(2 \Delta x, 2 \Delta y')] - \Exp_{z = (x,y) \sim \mathfrak{U}, \rho}[F_{\rho}(2 \Delta x, 2 \Delta y')].
\end{equation}
Here, \( \Delta y' = 2 y \Delta y + \Delta y^2 - \frac{\Sigma}{t} \), where \( x(i) = x(i-1) + \Delta x \), \( y(i) = y(i-1) + \Delta y \), and \( \Delta x \sim \mathcal{N}(0, \Sigma/t) \) and \( \Delta y = H \Delta x \) when \( \mathfrak{G} \); otherwise, \( \Delta y \sim \mathcal{N}(0, H\Sigma H/t) \).

We first write:
\begin{equation}
    \left|\prod_{j \in T} \left(y_j^2 + 2yy_j - \frac{(H\Sigma H)_j}{t}\right)\right| = \sum_{T' \subset T} \prod_{j \in T/T'} \left(-\frac{(H\Sigma H)_j}{t}\right) \sum_{T_2 \subset T'} \prod_{j \in T_2} y_j^2 \prod_{j \in T'/T_2} 2yy_j.
\end{equation}
Suppose we want to compute 
\begin{align*}
    & \left| \Exp_{(x,y) \in \mathfrak{G}}[\prod_{i \in S}x_i\prod_{j \in T}(y_j^2+2yy_j-(H\Sigma H)_j/t)]-    \Exp_{(x,y) \in \mathfrak{U}}[\prod_{i \in S}x_i\prod_{j \in T}(y_j^2+2yy_j-(H\Sigma H)_j/t)] \right| \nonumber \\ 
    & \leq \sum_{T' \subset T}\prod_{j \in T/T'}\left|(\frac{H\Sigma H_j}{t})\right|\sum_{T_2 \subset T'}|2y|^{|T'|-{T_2}} \left( \Exp_{(x,y) \in \mathfrak{G}}[\prod_{i \in S}x_i\prod_{j \in T_2}y_j^2 \prod_{j \in T'/T_2}y_j]-\Exp_{(x,y) \in \mathfrak{U}}[\prod_{i \in S}x_i\prod_{j \in T_2}y_j^2 \prod_{j \in T'/T_2}y_j] \right) \nonumber \\ 
    & \leq \sum_{T' \subset T}(\frac{\epsilon_0}{t})^{|T|-|T'|}\sum_{T_2 \subset T'} \left( \Exp_{(x,y) \in \mathfrak{G}}[\prod_{i \in S}x_i\prod_{j \in T_2}y_j^2 \prod_{j \in T'/T_2}y_j]-\Exp_{(x,y) \in \mathfrak{U}}[\prod_{i \in S}x_i\prod_{j \in T_2}y_j^2 \prod_{j \in T'/T_2}y_j] \right)  \nonumber \\ 
    &\leq 2^{T}\sum_{|T'|=0}^{T}(\frac{\epsilon_0}{t})^{|T|-|T'|}\sum_{T_2 \subset T'}2(|S|+|T'|+|T_2|)!\frac{(\epsilon_{0}/t)^{|S|/2+|T'|/2+|T_2|/2}}{N^{m}}  
    \nonumber \\
         & \leq  2^{T}\sum_{|T'|=0}^{T}(\frac{\epsilon_0}{t})^{|T|-|T'|}2^{T'} \frac{4(|S|+|T'|)!(\epsilon_{0}/t)^{|S|/2+|T'|/2}}{N^{m}} \leq 
    \nonumber (\epsilon_{0}/t)^{|S|/2+|T|}\frac{8 \cdot 2^{|T|}|S|!}{N^{m}}.\\
\end{align*} 
From (2) to (3), we use $\left|(\frac{H\Sigma H_j}{t})\right| \leq \frac{\epsilon_0}{t}, |2y| \leq 1.$ (3) to (4), we use $\sum_{T' \subset T} \leq 2^{T} \sum_{T'=0}^{T}$.
In (4) to (5), we use $\sum_{T_2 \subset T'}\frac{(\epsilon_{0}/t)^{|S|/2+|T'|/2+|T_2|/2}}{N^{m}}2(|S|+|T'|+|T_2|)! \leq 2^{T'} 4(|S|+|T'|)!\frac{(\epsilon_{0}/t)^{|S|/2+|T'|/2}}{N^{m}}$ if we pick $t$ large enough. We use a similar reason for $\sum_{T'=0}.$

\end{proof}

Finally, we can bound $(*)$:
\begin{align*}
        &(*)=\left| \Exp_{z=(x,y) \in \mathfrak{G},\rho}[F_{\rho}(2 \Delta x, 2 \Delta y')]-    \Exp_{z=(x,y) \in \mathfrak{U},\rho}[F_{\rho}(2 \Delta x, 2 \Delta y')] \right|  \nonumber \\ &\leq \sum_{V \subset [2N], V=S \cup T}|\hat{F}(V)|2^N(\epsilon_{0}/t)^{|S|/2+|T|}\frac{8 \cdot 2^{|T|}|S|!}{N^{m}} \nonumber \leq O(\frac{\epsilon_0 \cdot l}{t N^{m}}).\\
\end{align*}
Here we can see the dominate term comes from $|V|=2$ and we use $L_{1,2}(F_\rho) \leq l$.

Finally, we can finish the proof of \thm{mainUD} 

\begin{theorem}\label{thm:D7}
Let \( F : \{\pm 1\}^{2N} \to \{\pm 1\} \) be a multilinear function with bounded spectral norm \( L_{1,2}(F) \leq l \), then
\begin{equation}
    \left| \Exp_{(x,y) \sim \mathscr{G}}[F(x,y')]- \Exp_{(x,y) \sim \mathscr{U}}[F(x,y')]\right| \leq O\left(\frac{\epsilon_0 \cdot l}{N^m}\right).
\end{equation}
\end{theorem}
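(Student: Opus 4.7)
The plan is to combine a telescoping hybrid argument with the per-step bound provided by \thm{f7}, and then remove the truncation using \thm{tr} and \lem{r2}. First I would move from $\mathscr{G},\mathscr{U}$ to their untruncated counterparts $\mathfrak{G},\mathfrak{U}$: since $F$ maps $\{\pm 1\}^{2N}$ to $[-1,1]$ and is multilinear, \lem{r1} lets me write $\Exp_{\mathscr{G}}[F(x,y')] = \Exp_{\mathfrak{G}}[F(\operatorname{trnc}(x,y'))]$, and \thm{tr} bounds the untruncated tail by $O(1/N^c)$, which is absorbed into the final $O(\epsilon_0 l/N^m)$ term. The same identity and bound apply on the $\mathscr{U}$ side.

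Next I would set up the hybrid. Fix $t = N^c$ for a large constant $c$ and define, for $i = 0, \dots, t$, the interpolating distribution in which $(x^{(i)},y^{(i)}) = \sum_{k\le i}(\Delta x_k,\Delta y_k^{\mathfrak G}) + \sum_{k>i}(\Delta x_k,\Delta y_k^{\mathfrak U})$, where each $\Delta x_k \sim \N(0,\Sigma/t)$, the correlated increment $\Delta y_k^{\mathfrak G} = H\Delta x_k$ is used in the ``yes'' steps, and the independent $\Delta y_k^{\mathfrak U} \sim \N(0,H\Sigma H/t)$ is used in the ``no'' steps. The $y'$-coordinate is built by accumulating $y^2 - H\Sigma H$ term by term, and the end-points $i=0$ and $i=t$ are exactly $\mathfrak U$ and $\mathfrak G$ respectively. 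Writing the quantity of interest as a telescoping sum $\sum_{i=1}^{t}(\Exp_i[F] - \Exp_{i-1}[F])$, each single summand is precisely the quantity controlled by \thm{f7} applied to the multilinear function $F$ with the current $(x,y)$ held fixed in $[-1/2,1/2]^{2N}$.

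The third step is to invoke \thm{f7} on each hybrid increment to obtain a contribution of $O(\epsilon_0 l/(tN^m))$, using the hypothesis $L_{1,2}(F_\rho)\le l$ through the random-restriction representation of \thm{chat}. Summing over the $t$ steps yields the target bound $O(\epsilon_0 l/N^m)$ for the untruncated expectation gap. Finally, \lem{r2} lets me replace $F$ by $F\circ\operatorname{trnc}$ throughout at the cost of $O(1/N^2)$, which is again dominated by $O(\epsilon_0 l/N^m)$, closing the argument.

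The main obstacle I expect is book-keeping inside the hybrid: the $y'$-coordinate is quadratic in $y$ rather than linear, so each increment introduces a cross term $2y\Delta y$ and a square $\Delta y^2$ in addition to the Gaussian shift, and the $y(i)$ values must remain in $[-1/2,1/2]$ for \thm{chat} to apply. This is precisely what \thm{f7} is tailored for, but invoking it requires that at each hybrid step the ``frozen'' $(x,y)$ is in the small box, which is where the assumption $\max\{\|\Sigma\|_\infty,\|H\Sigma H\|_\infty\}\le \epsilon_0 = 1/(c_1 n)$ and the tail estimates from the preceding subsection must be chained together. The rest is routine: the $\sum_{V}|\hat F(V)|$ pieces coming from the Fourier expansion of $F_\rho$ get absorbed into $L_{1,2}(F_\rho)\le l$, and the $t^{-1}$ per-step gain times $t$ steps cancels, leaving the clean $O(\epsilon_0 l / N^m)$ bound.
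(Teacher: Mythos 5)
Your proposal follows essentially the same route as the paper: translate between truncated and untruncated distributions via \lem{r1} and \thm{tr}, set up a $t = N^c$-step hybrid/mixed-path interpolation between $\mathfrak G$ and $\mathfrak U$, bound each increment by $O(\epsilon_0 l/(tN^m))$ using \thm{f7} together with \thm{chat}, telescope, and re-absorb the truncation cost via \lem{r2}. The one step you flag as an ``obstacle'' but leave implicit is carried out explicitly in the paper: it conditions each hybrid increment on the event $E$ that $z^{i,j}\in[-1/2,1/2]^{2N}$, shows $\Pr[\lnot E]\le 2N^{-c}$ by a per-coordinate Gaussian tail bound and union bound, applies \thm{f7} conditional on $E$, and adds $2\Pr[\lnot E]$ for the complementary event using that $F$ maps into $[-1,1]$ — without this conditioning the appeal to \thm{f7} (which requires the frozen point to lie in the small box) is not literally justified, so you would need to spell that out to turn the sketch into a proof.
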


Let \( t = N^c \) for a sufficiently large constant \( c \), \( p = \frac{1}{\sqrt{t}} \), and define \( z^{i,j} = p\left(\sum_{k=1}^{i} z^k + \sum_{k=i+1}^{i+j} z^k\right) \). For all \( k \leq i \), \( z^{\{k\}} \sim \mathfrak{G} \) and for all \( k > i \), \( z^{\{k\}} \sim \mathfrak{U} \). This forms a mixed path. It is trivial to see that \( z^{t,0} \sim \mathfrak{G} \) and \( z^{0,t} \sim \mathfrak{U} \).

We denote \( z^{i,j} = (x^{i,j}, y^{i,j}) \) and \( y'^{i,j} = (y^{i,j})^2 - \frac{(i+j)(H\Sigma H) }{t} \).

Since the distribution of \( z^{t,0} \) is a multivariate Gaussian distribution with the same expectation and covariance matrix as \( \mathfrak{G} \) and is similar to \( \mathfrak{U} \), it will be sufficient to bound:
\begin{equation}
\left| \Exp[F(\text{trnc}(x^{t,0}, y'^{t,0}))]-\Exp[F(\text{trnc}(x^{0,t}, y'^{0,t}))] \right|.
\end{equation}

We can express this as a difference between mixed paths:
\begin{align*}
   & \left| \Exp_{z \sim \mathscr{G}}[F(z)] - \Exp_{z \sim \mathscr{U}}[F(z)] \right| = \left| \Exp_{z \sim \mathfrak{G}}[\text{trnc}(F)(z)] - \Exp_{z \sim \mathfrak{U}}[\text{trnc}(F)(z)] \right| \\
   & \leq \sum_{i=0}^{t-1} \left| \Exp[F(\text{trnc}(z^{t-i,i}))] - \Exp[F(\text{trnc}(z^{t-i-1,i+1}))] \right|.
\end{align*}

Let $E$ be the event that $z^{i,j} \in [-1/2, 1/2]^{2N}$. It is also clear that if $z^{i,j} \in [-1/2, 1/2]^{2N}$ then $z'^{i,j} \in [-1/2, 1/2]^{2N}$. Each $l$ entry in $z^{i,j}$ is distributed according to $\mathcal{N}(0,p^2 i \epsilon)$ for some $\epsilon \leq \epsilon_0$, and we have:
\begin{equation}
    \Pr[|z^{i, j}_l| \geq \frac{1}{2}] \leq \Pr[|\mathcal{N}(0, \epsilon_0)| \geq \frac{1}{2}] \leq e^{-\frac{1}{8 \epsilon_0}} \leq \frac{1}{N^{c+1}}.
\end{equation}
By the union bound, we have $\Pr[E] \geq 1 - 2N^{-c}$. By \thm{f7}, used with $z_0 = z^{i,j}$, we have that conditioned on the event $E$:
\begin{equation}
    |\Exp[F(x^{i+1,j}, y'^{i+1,j})|E] - \Exp[F(x^{i,j+1}, y'^{i,j+1})|E]| \leq O\left(\frac{\epsilon_0 \cdot l}{t N^m}\right).
\end{equation}
Using \lem{r2}, we have:
\begin{align*}
    |\Exp[F(\text{trnc}(x^{i+1,j}, y'^{i+1,j})) - \Exp[F(x^{i+1,j}, y'^{i+1,j})|E]| & \leq O(N^{-c}), \\
    |\Exp[F(\text{trnc}(x^{i,j+1}, y'^{i,j+1})) - \Exp[F(x^{i,j+1}, y'^{i,j+1})|E]| & \leq O(N^{-c}).
\end{align*}
Now we have:
\begin{align*}
    &|\Exp[F(\text{trnc}(z^{i+1, j}))|E] - \Exp[F(\text{trnc}(x^{i,j+1}, y'^{i,j+1}))|E]| \\
    &\quad \leq |\Exp[F(\text{trnc}(x^{i+1,j}, y'^{i+1,j}))] - \Exp[F(x^{i+1,j}, y'^{i+1,j})|E]| + \\
    &\quad\quad |\Exp[F(\text{trnc}(x^{i,j+1}, y'^{i,j+1}))] - \Exp[F(x^{i,j+1}, y'^{i,j+1})|E]| \\
    &\quad\quad\quad + |\Exp[F(x^{i+1,j}, y'^{i+1,j})] - \Exp[F(x^{i+1,j}, y'^{i+1,j})|E]| \\
    &\quad\quad\quad + |\Exp[F(x^{i,j+1}, y'^{i,j+1})] - \Exp[F(x^{i,j+1}, y'^{i,j+1})|E]| \\
    &\quad \leq O\left(\frac{\epsilon_0 \cdot l}{t N^m}\right) + O(N^{-c}).
\end{align*}
When $E$ does not hold, the difference between $F(\text{trnc}(x^{i+1,j}, y'^{i+1,j}))$ and $F(\text{trnc}(x^{i,j+1}, y'^{i,j+1}))$ is at most 2 since $F$ maps $[-1, 1]^{2N} \to [-1, 1]$.
Thus:
\begin{align*}
   |\Exp[F(\text{trnc}(x^{i+1,j}, y'^{i+1,j}))] - \Exp[F(\text{trnc}(z^{i, j+1}))]| & \leq \\
   |\Exp[F(\text{trnc}(x^{i+1,j}, y'^{i+1,j}))|E] - \Exp[F(\text{trnc}(z^{i, j+1}))|E]| + 2\Pr[\neg E] & \leq \\
   O\left(\frac{\epsilon_0 \cdot l}{t N^m}\right) + O(N^{-c}).
\end{align*}

\noindent By setting $\epsilon_0=1/(c_1n)$, we can have the following corollary. 
\begin{corollary}[{\cite{bassirian2021certified}}]
    If $\Sigma= \Id/(c_1n)$ for large enough $c_1.$ Then $|\Exp_{z \sim \mathscr{G}}F(x,y')-\Exp_{z \sim \mathscr{U}}F(x,y')|$ is at most $\polylog(N)/\sqrt{N}.$  
\end{corollary}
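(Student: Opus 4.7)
The plan is to derive this corollary as a direct instantiation of Theorem~\ref{thm:D7} (equivalently, Theorem~\ref{thm:mainUD}) by checking that $\Sigma = \Id/(c_1 n)$ satisfies the hypotheses with $m = 1/2$ and $\epsilon_0 = 1/(c_1 n)$. First I would verify the three structural conditions on $\Sigma$. Positive semidefiniteness is immediate. For the diagonal/entry bound, $|\Sigma_{ij}| \leq 1/(c_1 n) = \epsilon_0$, and since the Hadamard $H_N$ is unitary with $H_N^2 = \Id$, we have $H_N \Sigma H_N = \Id/(c_1 n)$, so $|(H_N \Sigma H_N)_{ij}| \leq \epsilon_0$ as well. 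The only nontrivial check is the off-diagonal decay $\|\Sigma H_N\|_\infty$: here $\Sigma H_N = H_N/(c_1 n)$, every entry of $H_N$ has magnitude $1/\sqrt{N}$, so $\|\Sigma H_N\|_\infty = 1/(c_1 n \sqrt{N}) = \epsilon_0/\sqrt{N}$, giving the required bound with $m = 1/2$ and $C = 1$.

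Next I would control the Fourier spectral norm of the Boolean circuit. For any $A : \{\pm 1\}^{2N} \to \{\pm 1\}$ of size $\exp(\log^{O(1)}(N))$ and depth $O(1)$, Tal's level-$k$ Fourier bound stated in the excerpt gives $\sum_{|S|=k}|\hat{A}(S)| \leq (c \log s)^{(d-1)k} = \polylog(N)$ for any fixed $k$. In particular, $L_{1,2}(A) = \sum_{|S| \leq 2}|\hat{A}(S)| \leq \polylog(N)$, and since a random restriction of $A$ is still an $\mathsf{AC}^0$ circuit of no greater size/depth, the same bound holds uniformly for $A_\rho$. Thus the hypothesis $L_{1,2}(F_\rho) \leq l$ of Theorem~\ref{thm:D7} holds with $l = \polylog(N)$.

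Having verified both hypotheses, I would apply Theorem~\ref{thm:D7} directly to obtain
\begin{equation*}
\bigl|\Exp_{z \sim \mathscr{G}}[A(x,y')] - \Exp_{z \sim \mathscr{U}}[A(x,y')]\bigr| \leq O\!\left(\frac{\epsilon_0 \cdot l}{N^{1/2}}\right) \leq O\!\left(\frac{\polylog(N)}{\sqrt{N}}\right),
\end{equation*}
since $\epsilon_0 = 1/(c_1 n) = O(1/\log N)$ is absorbed into the $\polylog(N)$ factor. This recovers exactly the Raz--Tal / Bassirian et al.\ bound as a special case.

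The only step requiring care is not a real obstacle but a bookkeeping remark: one must observe that the Boolean circuit $A$ being used as a distinguisher takes values in $\{\pm 1\}$ and therefore its multilinear extension maps $[-1,1]^{2N}$ to $[-1,1]$, so Lemmas~\ref{lem:r1} and~\ref{lem:r2} (and hence the equivalence of the truncated and non-truncated expectations up to $O(N^{-c})$) apply without modification. Given that Theorem~\ref{thm:D7} already folds the truncation error into its conclusion, no additional work is required beyond the two verifications above.
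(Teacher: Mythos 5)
Your proposal is correct and takes the same route the paper intends: the paper's own justification for this corollary is the one-line remark ``by setting $\epsilon_0 = 1/(c_1 n)$,'' i.e., instantiating Theorem~\ref{thm:mainUD}/Theorem~\ref{thm:D7} with $\Sigma = \Id/(c_1 n)$, and you have simply made explicit the checks that the paper leaves implicit (positive semidefiniteness, the entrywise bounds on $\Sigma$ and $H_N\Sigma H_N$, the Hadamard-entry magnitude $1/\sqrt{N}$ giving $\|\Sigma H_N\|_\infty = \epsilon_0/\sqrt{N}$ so that $m = 1/2$ works, and Tal's level-$k$ bound to control $L_{1,2}(A_\rho) \leq \polylog(N)$ for the $\mathsf{AC}^0$ distinguisher and its restrictions).
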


Hence, we can finish the proof \thm{mainUD}. We think this proof can also be simplified by stochastic calculus using \cite{wu2020stochastic} and also a rather useful tool called Gaussian interpolation \cite{bansal2021k}.

\bibliographystyle{alpha}
\bibliography{sample}

\newcommand{\etalchar}[1]{$^{#1}$}
\begin{thebibliography}{BCLK{\etalchar{+}}22}

\bibitem[AA11]{aaronson2011computational}
Scott Aaronson and Alex Arkhipov.
\newblock The computational complexity of linear optics.
\newblock In {\em Proceedings of the forty-third annual ACM symposium on Theory
  of computing}, pages 333--342, 2011.

\bibitem[AA15]{aaronson2015forrelation}
Scott Aaronson and Andris Ambainis.
\newblock Forrelation: A problem that optimally separates quantum from
  classical computing.
\newblock In {\em Proceedings of the forty-seventh annual ACM symposium on
  Theory of computing}, pages 307--316, 2015.

\bibitem[AAB{\etalchar{+}}19]{arute2019quantum}
Frank Arute, Kunal Arya, Ryan Babbush, Dave Bacon, Joseph~C Bardin, Rami
  Barends, Rupak Biswas, Sergio Boixo, Fernando~GSL Brandao, David~A Buell,
  et~al.
\newblock Quantum supremacy using a programmable superconducting processor.
\newblock {\em Nature}, 574(7779):505--510, 2019.

\bibitem[Aar10]{aaronson2010bqp}
Scott Aaronson.
\newblock Bqp and the polynomial hierarchy.
\newblock In {\em Proceedings of the forty-second ACM symposium on Theory of
  computing}, pages 141--150, 2010.

\bibitem[Aar11]{aaronson2011counterexample}
Scott Aaronson.
\newblock A counterexample to the generalized linial-nisan conjecture.
\newblock {\em arXiv preprint arXiv:1110.6126}, 2011.

\bibitem[ABO97]{aharonov1997fault}
Dorit Aharonov and Michael Ben-Or.
\newblock Fault-tolerant quantum computation with constant error.
\newblock In {\em Proceedings of the twenty-ninth annual ACM symposium on
  Theory of computing}, pages 176--188, 1997.

\bibitem[ABOIN96]{aharonov1996limitations}
Dorit Aharonov, Michael Ben-Or, Russell Impagliazzo, and Noam Nisan.
\newblock Limitations of noisy reversible computation.
\newblock {\em arXiv preprint quant-ph/9611028}, 1996.

\bibitem[ADH97]{adleman1997quantum}
Leonard~M Adleman, Jonathan Demarrais, and Ming-Deh~A Huang.
\newblock Quantum computability.
\newblock {\em SIAM Journal on Computing}, 26(5):1524--1540, 1997.

\bibitem[AGL{\etalchar{+}}23]{aharonov2023polynomial}
Dorit Aharonov, Xun Gao, Zeph Landau, Yunchao Liu, and Umesh Vazirani.
\newblock A polynomial-time classical algorithm for noisy random circuit
  sampling.
\newblock In {\em Proceedings of the 55th Annual ACM Symposium on Theory of
  Computing}, pages 945--957, 2023.

\bibitem[AH23]{aaronson2023certified}
Scott Aaronson and Shih-Han Hung.
\newblock Certified randomness from quantum supremacy.
\newblock In {\em Proceedings of the 55th Annual ACM Symposium on Theory of
  Computing}, pages 933--944, 2023.

\bibitem[AIK21]{aaronson2021acrobatics}
Scott Aaronson, DeVon Ingram, and William Kretschmer.
\newblock The acrobatics of {BQP}.
\newblock {\em arXiv preprint arXiv:2111.10409}, 2021.

\bibitem[BB92]{berthiaume1992quantum}
Andr{\'e} Berthiaume and Gilles Brassard.
\newblock The quantum challenge to structural complexity theory.
\newblock In {\em Computational Complexity Conference}, pages 132--137.
  Citeseer, 1992.

\bibitem[BB94]{berthiaume1994oracle}
Andr{\'e} Berthiaume and Gilles Brassard.
\newblock Oracle quantum computing.
\newblock {\em Journal of modern optics}, 41(12):2521--2535, 1994.

\bibitem[BBBV97]{bennett1997strengths}
Charles~H Bennett, Ethan Bernstein, Gilles Brassard, and Umesh Vazirani.
\newblock Strengths and weaknesses of quantum computing.
\newblock {\em SIAM journal on Computing}, 26(5):1510--1523, 1997.

\bibitem[BBF{\etalchar{+}}21]{bassirian2021certified}
Roozbeh Bassirian, Adam Bouland, Bill Fefferman, Sam Gunn, and Avishay Tal.
\newblock On certified randomness from quantum advantage experiments.
\newblock {\em arXiv preprint arXiv:2111.14846}, 2021.

\bibitem[BCLK{\etalchar{+}}22]{bharti2022noisy}
Kishor Bharti, Alba Cervera-Lierta, Thi~Ha Kyaw, Tobias Haug, Sumner
  Alperin-Lea, Abhinav Anand, Matthias Degroote, Hermanni Heimonen, Jakob~S
  Kottmann, Tim Menke, et~al.
\newblock Noisy intermediate-scale quantum algorithms.
\newblock {\em Reviews of Modern Physics}, 94(1):015004, 2022.

\bibitem[BCMS19]{bruzewicz2019trapped}
Colin~D Bruzewicz, John Chiaverini, Robert McConnell, and Jeremy~M Sage.
\newblock Trapped-ion quantum computing: Progress and challenges.
\newblock {\em Applied Physics Reviews}, 6(2):021314, 2019.

\bibitem[BKLW00]{bacon2000universal}
Dave Bacon, Julia Kempe, Daniel~A Lidar, and K~Birgitta Whaley.
\newblock Universal fault-tolerant quantum computation on decoherence-free
  subspaces.
\newblock {\em Physical Review Letters}, 85(8):1758, 2000.

\bibitem[BS21]{bansal2021k}
Nikhil Bansal and Makrand Sinha.
\newblock k-forrelation optimally separates quantum and classical query
  complexity.
\newblock In {\em Proceedings of the 53rd Annual ACM SIGACT Symposium on Theory
  of Computing}, pages 1303--1316, 2021.

\bibitem[BV93]{bernstein1993quantum}
Ethan Bernstein and Umesh Vazirani.
\newblock Quantum complexity theory.
\newblock In {\em Proceedings of the twenty-fifth annual ACM symposium on
  Theory of computing}, pages 11--20, 1993.

\bibitem[CCHL22]{chen2022complexity}
Sitan Chen, Jordan Cotler, Hsin-Yuan Huang, and Jerry Li.
\newblock The complexity of nisq.
\newblock {\em arXiv preprint arXiv:2210.07234}, 2022.

\bibitem[CHHL19]{chattopadhyay2019pseudorandom}
Eshan Chattopadhyay, Pooya Hatami, Kaave Hosseini, and Shachar Lovett.
\newblock Pseudorandom generators from polarizing random walks.
\newblock {\em Theory of Computing}, 15(1):1--26, 2019.

\bibitem[DFN{\etalchar{+}}23]{dong2023computational}
Yangjing Dong, Honghao Fu, Anand Natarajan, Minglong Qin, Haochen Xu, and
  Penghui Yao.
\newblock The computational advantage of {MIP*} vanishes in the presence of
  noise.
\newblock {\em arXiv preprint arXiv:2312.04360}, 2023.

\bibitem[DMN13]{devitt2013quantum}
Simon~J Devitt, William~J Munro, and Kae Nemoto.
\newblock Quantum error correction for beginners.
\newblock {\em Reports on Progress in Physics}, 76(7):076001, 2013.

\bibitem[FR99]{fortnow1999complexity}
Lance Fortnow and John Rogers.
\newblock Complexity limitations on quantum computation.
\newblock {\em Journal of Computer and System Sciences}, 59(2):240--252, 1999.

\bibitem[FSUV12]{fefferman2012beating}
Bill Fefferman, Ronen Shaltiel, Christopher Umans, and Emanuele Viola.
\newblock On beating the hybrid argument.
\newblock In {\em Proceedings of the 3rd Innovations in Theoretical Computer
  Science Conference}, pages 468--483, 2012.

\bibitem[HLS24]{hamoudi2024nisq}
Yassine Hamoudi, Qipeng Liu, and Makrand Sinha.
\newblock The nisq complexity of collision finding.
\newblock {\em Cryptology ePrint Archive}, 2024.

\bibitem[JNV{\etalchar{+}}21]{ji2021mip}
Zhengfeng Ji, Anand Natarajan, Thomas Vidick, John Wright, and Henry Yuen.
\newblock {MIP*=RE}.
\newblock {\em Communications of the ACM}, 64(11):131--138, 2021.

\bibitem[KKY{\etalchar{+}}19]{krantz2019quantum}
Philip Krantz, Morten Kjaergaard, Fei Yan, Terry~P Orlando, Simon Gustavsson,
  and William~D Oliver.
\newblock A quantum engineer's guide to superconducting qubits.
\newblock {\em Applied physics reviews}, 6(2):021318, 2019.

\bibitem[KMT{\etalchar{+}}17]{kandala2017hardware}
Abhinav Kandala, Antonio Mezzacapo, Kristan Temme, Maika Takita, Markus Brink,
  Jerry~M Chow, and Jay~M Gambetta.
\newblock Hardware-efficient variational quantum eigensolver for small
  molecules and quantum magnets.
\newblock {\em nature}, 549(7671):242--246, 2017.

\bibitem[KS69]{krafft1969note}
O~Krafft and N~Schmitz.
\newblock A note on hoeffding's inequality.
\newblock {\em Journal of the American Statistical Association},
  64(327):907--912, 1969.

\bibitem[LB13]{lidar2013quantum}
Daniel~A Lidar and Todd~A Brun.
\newblock {\em Quantum error correction}.
\newblock Cambridge university press, 2013.

\bibitem[LBR17]{lund2017quantum}
Austin~P Lund, Michael~J Bremner, and Timothy~C Ralph.
\newblock Quantum sampling problems, bosonsampling and quantum supremacy.
\newblock {\em npj Quantum Information}, 3(1):15, 2017.

\bibitem[LLZT00]{long2000dominant}
Gui~Lu Long, Yan~Song Li, Wei~Lin Zhang, and Chang~Cun Tu.
\newblock Dominant gate imperfection in grover’s quantum search algorithm.
\newblock {\em Physical Review A}, 61(4):042305, 2000.

\bibitem[LPK{\etalchar{+}}15]{larsen2015semiconductor}
Thorvald~Wadum Larsen, Karl~David Petersson, Ferdinand Kuemmeth, Thomas~Sand
  Jespersen, Peter Krogstrup, Jesper Nyg{\aa}rd, and Charles~M Marcus.
\newblock Semiconductor-nanowire-based superconducting qubit.
\newblock {\em Physical Review Letters}, 115(12):127001, 2015.

\bibitem[MAL19]{muthukrishnan2019sensitivity}
Siddharth Muthukrishnan, Tameem Albash, and Daniel~A Lidar.
\newblock Sensitivity of quantum speedup by quantum annealing to a noisy
  oracle.
\newblock {\em Physical Review A}, 99(3):032324, 2019.

\bibitem[MNBO11]{michalowicz2011general}
JV~Michalowicz, JM~Nichols, F~Bucholtz, and CC~Olson.
\newblock A general isserlis theorem for mixed-gaussian random variables.
\newblock {\em Statistics \& probability letters}, 81(8):1233--1240, 2011.

\bibitem[NC02]{nielsen2002quantum}
Michael~A Nielsen and Isaac Chuang.
\newblock Quantum computation and quantum information, 2002.

\bibitem[O'b07]{o2007optical}
Jeremy~L O'brien.
\newblock Optical quantum computing.
\newblock {\em Science}, 318(5856):1567--1570, 2007.

\bibitem[Pie12]{pietrzak2012cryptography}
Krzysztof Pietrzak.
\newblock Cryptography from learning parity with noise.
\newblock In {\em International Conference on Current Trends in Theory and
  Practice of Computer Science}, pages 99--114. Springer, 2012.

\bibitem[Pre18]{preskill2018quantum}
John Preskill.
\newblock Quantum computing in the nisq era and beyond.
\newblock {\em Quantum}, 2:79, 2018.

\bibitem[RH07]{raussendorf2007fault}
Robert Raussendorf and Jim Harrington.
\newblock Fault-tolerant quantum computation with high threshold in two
  dimensions.
\newblock {\em Physical Review Letters}, 98(19):190504, 2007.

\bibitem[RS08]{regev2008impossibility}
Oded Regev and Liron Schiff.
\newblock Impossibility of a quantum speed-up with a faulty oracle.
\newblock In {\em Automata, Languages and Programming: 35th International
  Colloquium, ICALP 2008, Reykjavik, Iceland, July 7-11, 2008, Proceedings,
  Part I 35}, pages 773--781. Springer, 2008.

\bibitem[RT22]{raz2022oracle}
Ran Raz and Avishay Tal.
\newblock Oracle separation of bqp and ph.
\newblock {\em ACM Journal of the ACM (JACM)}, 69(4):1--21, 2022.

\bibitem[Sal08]{salas2008noise}
Pedro~J Salas.
\newblock Noise effect on grover algorithm.
\newblock {\em The European Physical Journal D}, 46:365--373, 2008.

\bibitem[SBW03]{shenvi2003effects}
Neil Shenvi, Kenneth~R Brown, and K~Birgitta Whaley.
\newblock Effects of a random noisy oracle on search algorithm complexity.
\newblock {\em Physical Review A}, 68(5):052313, 2003.

\bibitem[Sho96]{shor1996fault}
Peter~W Shor.
\newblock Fault-tolerant quantum computation.
\newblock In {\em Proceedings of 37th conference on foundations of computer
  science}, pages 56--65. IEEE, 1996.

\bibitem[Sim97]{simon1997power}
Daniel~R Simon.
\newblock On the power of quantum computation.
\newblock {\em SIAM journal on computing}, 26(5):1474--1483, 1997.

\bibitem[SL19]{streif2019comparison}
Michael Streif and Martin Leib.
\newblock Comparison of qaoa with quantum and simulated annealing.
\newblock {\em arXiv preprint arXiv:1901.01903}, 2019.

\bibitem[Tal17]{tal2017tight}
Avishay Tal.
\newblock Tight bounds on the fourier spectrum of ac0.
\newblock In {\em 32nd Computational Complexity Conference (CCC 2017)}. Schloss
  Dagstuhl-Leibniz-Zentrum fuer Informatik, 2017.

\bibitem[Wat00]{watrous2000succinct}
John Watrous.
\newblock Succinct quantum proofs for properties of finite groups.
\newblock In {\em Proceedings 41st Annual Symposium on Foundations of Computer
  Science}, pages 537--546. IEEE, 2000.

\bibitem[WLT{\etalchar{+}}21]{wu2021concise}
Xiaoling Wu, Xinhui Liang, Yaoqi Tian, Fan Yang, Cheng Chen, Yong-Chun Liu,
  Meng~Khoon Tey, and Li~You.
\newblock A concise review of rydberg atom based quantum computation and
  quantum simulation.
\newblock {\em Chinese Physics B}, 30(2):020305, 2021.

\bibitem[WSB{\etalchar{+}}04]{wallraff2004strong}
Andreas Wallraff, David~I Schuster, Alexandre Blais, Luigi Frunzio, R-S Huang,
  Johannes Majer, Sameer Kumar, Steven~M Girvin, and Robert~J Schoelkopf.
\newblock Strong coupling of a single photon to a superconducting qubit using
  circuit quantum electrodynamics.
\newblock {\em Nature}, 431(7005):162--167, 2004.

\bibitem[Wu20]{wu2020stochastic}
Xinyu Wu.
\newblock A stochastic calculus approach to the oracle separation of bqp and
  ph.
\newblock {\em arXiv preprint arXiv:2007.02431}, 2020.

\bibitem[ZWD{\etalchar{+}}20]{zhong2020quantum}
Han-Sen Zhong, Hui Wang, Yu-Hao Deng, Ming-Cheng Chen, Li-Chao Peng, Yi-Han
  Luo, Jian Qin, Dian Wu, Xing Ding, Yi~Hu, et~al.
\newblock Quantum computational advantage using photons.
\newblock {\em Science}, 370(6523):1460--1463, 2020.

\end{thebibliography}

\appendix

\section{Useful Theorems}\label{app1}

\begin{theorem}[Hoeffding's inequality]\cite{krafft1969note}\label{thm:hoe}
Let $X_1, ..., X_n$ be independent random variables such that $0 \leq X_{i} \leq 1$ almost surely. Consider the sum of these random variables, $S_{n}=X_{1}+\cdots +X_{n}.$
Then we get the inequality
\begin{equation}
{\displaystyle {\begin{aligned}\operatorname {P} \left(S_{n}-\Exp \left[S_{n}\right]\geq t\right)&\leq \exp \left(-{\frac {2t^{2}}{n}}\right).\\
\operatorname {P} \left(\Exp \left[S_{n}\right]-S_n \geq t\right)&\leq \exp \left(-{\frac {2t^{2}}{n}}\right). \\
\operatorname {P} \left(\left|S_{n}-\Exp \left[S_{n}\right]\right|\geq t\right)&\leq 2\exp \left(-{\frac {2t^{2}}{n}}\right).\end{aligned}}}
\end{equation}
for all $t\geq 0.$
\end{theorem}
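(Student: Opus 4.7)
The plan is to prove Hoeffding's inequality via the classical Chernoff bounding approach combined with Hoeffding's lemma, which controls the moment generating function of a bounded centered random variable.

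First, I would apply Markov's inequality to the exponential tilt: for any $s > 0$,
\begin{equation*}
\Pr\bigl(S_n - \Exp[S_n] \geq t\bigr) \leq e^{-st}\, \Exp\bigl[e^{s(S_n - \Exp[S_n])}\bigr],
\end{equation*}
and then use independence of the $X_i$ to factor this as $e^{-st}\prod_{i=1}^n \Exp\bigl[e^{s(X_i - \Exp[X_i])}\bigr]$.

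Next, I would establish Hoeffding's lemma: if $Y \in [a,b]$ almost surely with $\Exp[Y] = 0$, then $\Exp[e^{sY}] \leq \exp\bigl(s^2(b-a)^2/8\bigr)$. The key step uses convexity of $y \mapsto e^{sy}$ on $[a,b]$,
\begin{equation*}
e^{sY} \leq \frac{b-Y}{b-a}\,e^{sa} + \frac{Y-a}{b-a}\,e^{sb},
\end{equation*}
so $\Exp[e^{sY}] \leq (b\, e^{sa} - a\, e^{sb})/(b-a)$. Setting $L(s) := \log\bigl((b\, e^{sa} - a\, e^{sb})/(b-a)\bigr)$, a direct calculation gives $L(0) = L'(0) = 0$ and shows that $L''(s)$ equals the variance of a tilted two-point distribution supported on $\{a,b\}$, so $L''(s) \leq (b-a)^2/4$. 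A second-order Taylor expansion in $s$ then yields $L(s) \leq s^2(b-a)^2/8$.

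Applying the lemma to each centered variable $Y_i = X_i - \Exp[X_i] \in [-\Exp[X_i],\, 1-\Exp[X_i]]$, whose range has length exactly $1$, gives $\Exp[e^{sY_i}] \leq e^{s^2/8}$, so
\begin{equation*}
\Pr\bigl(S_n - \Exp[S_n] \geq t\bigr) \leq \exp\bigl(-st + ns^2/8\bigr).
\end{equation*}
Minimizing the exponent over $s > 0$ at $s = 4t/n$ yields the one-sided bound $\exp(-2t^2/n)$. The lower-tail inequality follows by applying the same argument to $-X_i$ (which still takes values in an interval of length $1$), and the two-sided bound then follows by a union bound over the two tail events. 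The only real obstacle is Hoeffding's lemma, and within it the variance bound $L''(s) \leq (b-a)^2/4$, which reduces to the elementary fact that a random variable supported on an interval of length $b-a$ has variance at most $(b-a)^2/4$ (attained by the balanced two-point distribution on the endpoints). The remaining ingredients---Markov's inequality, the factoring step via independence, and the calibration of $s$---are all routine.
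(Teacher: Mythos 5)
Your proof is correct and is the standard Chernoff-bound argument via Hoeffding's lemma; the paper itself states this as a cited textbook fact (referencing \cite{krafft1969note}) without supplying a proof, so there is no internal argument to compare against. All the steps check out: the exponential Markov bound and factorization by independence, the convexity bound for the moment generating function, the Taylor-expansion argument giving $L(s)\leq s^2(b-a)^2/8$, the optimization at $s=4t/n$ yielding the exponent $-2t^2/n$, and the reflection plus union bound for the lower and two-sided tails.
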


Isserlis' theorem for the expectations of products of jointly Gaussian random variables:
\begin{theorem}[Isserlis' theorem]\cite{michalowicz2011general}\label{thm:is}
If 
${\displaystyle (X_{1},\dots ,X_{n})}$ is a zero-mean multivariate normal random vector, then
\begin{equation}
    {\displaystyle \Exp [\,X_{1}X_{2}\cdots X_{n}\,]=\sum _{p\in P_{n}^{2}}\prod _{\{i,j\}\in p}\Exp [\,X_{i}X_{j}\,]=\sum _{p\in P_{n}^{2}}\prod _{\{i,j\}\in p}\operatorname {Cov} (\,X_{i},X_{j}\,),}
\end{equation}
    where the sum is over all the pairings of $\{1,\ldots,n\}$  i.e. all distinct ways of partitioning $\{1,\ldots,n\}$
 into pairs 
${\displaystyle \{i,j\}}$, and the product is over the pairs contained in $p.$
If $n=2m$, there are $\frac{(2m)!}{m!2^m}$ terms in 
 the $\Sigma \prod$ expression for 
variables.
Moreover If 
${\displaystyle n=2m+1}$ is odd, we have ${\displaystyle \Exp [\,X_{1}X_{2}\cdots X_{2m+1}\,]=0.}$
\end{theorem}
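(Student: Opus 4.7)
The plan is to derive Isserlis' theorem from the moment generating function (MGF) of the multivariate Gaussian. For a zero-mean vector $X\sim \mathcal{N}(0,\Sigma)$ the MGF is the explicit Gaussian integral $M(t)=\Exp[\exp(t^{\top} X)]=\exp(\tfrac12 t^{\top}\Sigma t)$, and each joint moment is read off as $\Exp[X_1X_2\cdots X_n]=\partial^n M/(\partial t_1\cdots\partial t_n)\big|_{t=0}$, which coincides with the coefficient of the multilinear monomial $t_1t_2\cdots t_n$ in the Taylor series of $M$ at the origin. (One could equally well work with the characteristic function $\exp(-\tfrac12 t^{\top}\Sigma t)$; the argument is identical modulo powers of $i$.) Exchanging differentiation and summation is justified because $M$ is entire.

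Next, I expand $M(t)=\sum_{k\ge 0}\frac{1}{k!\,2^{k}}Q(t)^{k}$, where $Q(t):=t^{\top}\Sigma t=\sum_{i,j}\Sigma_{ij}\,t_i t_j$. Since $Q(t)^{k}$ is homogeneous of total degree $2k$, the coefficient of $t_1t_2\cdots t_n$ vanishes whenever $n$ is odd. This immediately gives $\Exp[X_1X_2\cdots X_{2m+1}]=0$, establishing the odd case. For $n=2m$ even, only the term $k=m$ can contribute, so I restrict attention to $Q(t)^m = \sum_{(i_1,j_1),\ldots,(i_m,j_m)}\prod_{l=1}^{m}\Sigma_{i_l j_l}\,t_{i_l}t_{j_l}$, and extract the coefficient of $t_1t_2\cdots t_{2m}$.

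That coefficient equals the sum, over all ordered $m$-tuples of ordered pairs whose indices cover $\{1,\ldots,2m\}$ as a multiset, of the product $\prod_l \Sigma_{i_l j_l}$. I then group ordered tuples by the underlying (unordered) pair partition $p\in P_{2m}^{2}$: each such partition is realised by exactly $m!\cdot 2^m$ ordered tuples (permuting the $m$ pairs among themselves, and independently flipping the two endpoints inside each pair), and by symmetry of $\Sigma$ every one of them contributes the same value $\prod_{\{i,j\}\in p}\Sigma_{ij}$. The combinatorial factor $m!\cdot 2^m$ exactly cancels the series coefficient $1/(2^m m!)$, leaving $\Exp[X_1\cdots X_{2m}]=\sum_{p\in P_{2m}^2}\prod_{\{i,j\}\in p}\Sigma_{ij}$. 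Since $\Exp[X_i]=0$, one has $\Sigma_{ij}=\Exp[X_iX_j]=\operatorname{Cov}(X_i,X_j)$, which gives the second equality. The cardinality $|P_{2m}^2|=(2m)!/(m!\,2^m)$ follows from the same double count (or specialising to $\Sigma=I$, so that the coefficient of $t_1\cdots t_{2m}$ in $(\sum_i t_i^2)^m$ counts the pairings). The only real bookkeeping obstacle is making sure the factor $m!\cdot 2^m$ from ordered/labelled pairings matches the $1/(2^m m!)$ from the exponential series without double counting in the presence of nonzero off-diagonal entries of $\Sigma$; once this is checked, the identity follows directly.
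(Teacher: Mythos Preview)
Your proof is correct and follows the standard moment-generating-function derivation of Isserlis' theorem: expand $M(t)=\exp(\tfrac12 t^\top\Sigma t)$, extract the multilinear coefficient, and match the combinatorial factor $m!\cdot 2^m$ against the series coefficient $1/(m!\,2^m)$. The bookkeeping you flag as the potential obstacle is fine; symmetry of $\Sigma$ ensures each of the $m!\cdot 2^m$ ordered representatives of a pair partition contributes the same product, so there is no double counting.

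As for comparison: the paper does not prove this theorem at all. It is stated in the appendix of auxiliary results with a citation to \cite{michalowicz2011general} and used as a black box in the classical-hardness argument. So there is nothing in the paper to compare your argument against; you have simply supplied a self-contained proof where the paper defers to the literature.
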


\section{Noisy Quantum Circuit Model}\label{app:nisq}
In this section, we review our definition of a noisy quantum circuit model. Most of the definition is similar to \cite{chen2022complexity}. We begin by recalling the single-qubit depolarizing channel, \(D_\lambda\).
\begin{definition}[Single-qubit depolarizing channel]
Given $\lambda \in [0,1]$, we define the \emph{single-qubit depolarizing channel} to be $D_\lambda[\rho] \triangleq (1-\lambda) \rho + \lambda (I / 2)$, where $\rho$ is a single-qubit density matrix.
\end{definition}

\begin{definition}[Depth-$1$ unitary]
Given $n > 0$, an $n$-qubit unitary $U$ is a \emph{depth-$1$ unitary} if $U$ can be written as a tensor product of two-qubit unitaries.
\end{definition}

\noindent 
We define noisy quantum circuits with a noise level $\lambda$ and their output distributions as follows.

\begin{definition}[Output of a noisy quantum circuit]
Let $\lambda \in [0, 1]$ and $n \in \mathbb{N}$. 
    Given $T \in\mathbb{N}$ and a sequence of $T$ depth-$1$ unitaries $U_1, \ldots, U_T$, the output of the corresponding \emph{$\lambda$-noisy depth-$T$ quantum circuit} is a random $n$-bit string $s \in \{0, 1\}^n$ sampled from the distribution
    \begin{equation} \label{eq:prob-NQC}
        p(s) = \bra{s} D_\lambda^{\otimes n}\bigl[U_T \ldots D_\lambda^{\otimes n}\bigl[U_2 D_\lambda^{\otimes n}\bigl[U_1 D_\lambda^{\otimes n}[\ketbra{0^n}{0^n}] U_1^\dagger\bigr] U_2^\dagger \bigr] \ldots U_T^\dagger \bigr] \ket{s} \, ,
    \end{equation}
where every quantum operation is followed by a layer of the single-qubit depolarizing channel. When $\lambda = 0$, we say that this circuit is \emph{noiseless}.
\end{definition}

\begin{definition}[Noisy quantum circuit oracle]
    We define $\mathrm{NQC}_{\lambda}$ to be an oracle that takes in an integer $n$ and a sequence of depth-$1$ $n$-qubit unitary $\{U_k\}_{k=1,\ldots,T}$ for any $T\in\mathbb{N}$ and outputs a random $n$-bit string $s$ according to Eq.~\eqref{eq:prob-NQC}.
    We define the time to query $\mathrm{NQC}_{\lambda}$ with $T$ depth-$1$ $n$-qubit unitaries to be $\Theta(nT)$, which is linear in the time to write down the input to the query.
\end{definition}

\noindent We now define $\BQP_{\lambda}$ algorithms, which are classical algorithms with access to the noisy quantum circuit oracle and $\lambda: \mathbb{N} \to (0,1)$. This provides a formal definition for hybrid noisy quantum-classical computation.

\begin{definition}[$\BQP_{\lambda}$ algorithm]
    A \emph{$\BQP_{\lambda}$ algorithm with access to $\lambda$-noisy quantum circuits} is defined as a probabilistic Turing machine $M$ that can query $\mathrm{NQC}_{\lambda}$ to obtain an output bitstring $s$ for any number of times and is denoted as $A_\lambda \triangleq M^{\mathrm{NQC}_\lambda}$. The runtime of $A_\lambda$ is given by the classical runtime of $M$ plus the sum of the times to query $\mathrm{NQC}_\lambda$.
\end{definition}

\noindent The $\BQP_{\lambda}$ complexity class for decision problems is defined as follows.

\begin{definition}[$\BQP_{\lambda}$ complexity]
    A language $L \subseteq \{0, 1\}^*$ is in $\BQP_{\lambda}$  if there exists a $\BQP_{\lambda}$  algorithm $A_\lambda$ such that decides $L$ in polynomial time, that is, such that
    \begin{itemize}
        \item for all $x \in \{0, 1\}^*$, $A_\lambda$ produces an output in time $\mathrm{poly}(|x|)$, where $|x|$ is the length of $x$;
        \item for all $x \in L$, $A_\lambda$ outputs $1$ with probability at least $2/3$;
        \item for all $x \not\in L$, $A_\lambda$ outputs $0$ with probability at least $2/3$.
    \end{itemize}
\end{definition}

\subsection{Algorithms with oracle access}
\label{subsec:oracle}

\begin{definition}[Classical oracle $\O$]
    A \emph{classical oracle} $\O$ is a function from $\{0, 1\}^n$ to $\{0, 1\}^m$ for some $n, m \in \mathbb{N}$.
    The $(n+m)$-qubit unitary $U_\O$ corresponding to the classical oracle $O$ is given by $U_{\O} \ket{x}\ket{y} = \ket{x} \ket{y \oplus \O(x)}$ for all $x \in \{0, 1\}^n, y \in \{0, 1\}^m$.
\end{definition}

\begin{definition}[Classical algorithm with access to $\O$]
    A \emph{classical algorithm $M^{\O}$ with access to $\O$} is a probabilistic Turing machine $M$ that can query $O$ by choosing an $n$-bit input $x$ and obtaining the $m$-bit output $\O(x)$.
\end{definition}    

\begin{definition}[Quantum algorithm with access to $\O$]
    A \emph{quantum algorithm $Q^{\O}$ with access to $\O$} is a uniform family of quantum circuits $\{U_n\}_n$, where $U_n$ is an $n'$-qubit quantum circuit given by
    \begin{equation}
        U_n \triangleq V_{n, k} (U_{\O} \otimes \Id) \cdots (U_{\O} \otimes \Id) V_{n, 2} (U_{\O} \otimes \Id) V_{n, 1} \, ,
    \end{equation}
    for some integer $k \in \mathbb{N}$ and $n'$-qubit unitaries $V_{n, 1}, \ldots, V_{n, k}$ given as the product of many depth-$1$ unitaries. Here, $\Id$ denotes the identity matrix over $n' - n$ qubits.
\end{definition}

\noindent For fixed function: $\lambda: \mathbb{N} \to \mathbb{R}.$ We now present the definition of $\BQP_{\lambda}$ algorithms with access to the classical oracle ${\O}$, which requires first defining noisy quantum circuit oracles with access to ${\O}$.

\begin{figure}[h]
\centering
\includegraphics[scale=0.3]{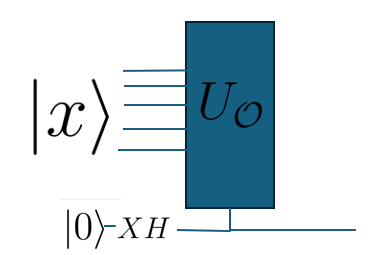}
\caption{From Classical Oracle to Phase Oracle
, We can observe that this control qubit with probability $p_1$, will remain in the $\ket{-}$ state; otherwise, it becomes $\ket{+}$ where $p_1 = (1-\lambda)(1-\lambda/2) + \lambda/2$.}
\label{Fig:Oracle}
\end{figure}

\begin{definition}[Noisy quantum circuit oracle with access to $\O$]
    We define $\mathrm{NQC}^\O_{\lambda}$ to be an oracle that takes in an integer $n'$ and a sequence of $n'$-qubit unitaries $\{U_k\}_{k=1,\ldots,T}$ for any $T\in\N$, where $U_k$ can either be a depth-$1$ unitary or $U_\O \otimes I$, to a random $n$-bit string $s$ sampled according to the distribution
    \begin{equation} \label{eq:prob-NQC-O}
        p(s) = \bra{s} D_\lambda^{\otimes n'}\bigl[U_T \ldots D_\lambda^{\otimes n'}\bigl[U_2 D_\lambda^{\otimes n'}\bigl[U_1 D_\lambda^{\otimes n'}[\ketbra{0^{n'}}{0^{n'}}] U_1^\dagger\bigr] U_2^\dagger \bigr] \ldots U_T^\dagger \bigr] \ket{s} \, .
    \end{equation}
\end{definition}

\begin{definition}[$\BQP_\lambda$ algorithm with access to $\O$]
    Let $\lambda: \mathbb{N} \to \mathbb{R}.$. A $\BQP_\lambda$ algorithm $A_\lambda^\O = (M^{\mathrm{NQC}_{\lambda}})^\O$ with access to $\O$ is a probabilistic Turing machine $M$ that has the ability to classically query $\O$ by choosing the $n$-bit input $x$ to obtain the $m$-bit output $\O(x)$, as well as the ability to query $\mathrm{NQC}^O_{\lambda}$ by choosing $n'$ and $\{U_k\}_{k=1,\ldots,T}$ to obtain a random $n'$-bit string $s$. The runtime of $A^\O_\lambda$ is given by the sum of the classical runtime of $M$, the number of classical queries to $\O$, and the sum of the times to query $\mathrm{NQC}^\O_\lambda$.
\end{definition}

\noindent With this definition in hand, we can extend the usual notions of relativized complexity to $\BQP_{\lambda}$:

\begin{definition}[Relativized $\BQP_{\lambda}$]
    Given a sequence of oracles $\O: \{0,1\}^n \to \{0,1\}^{m(n)}$ parametrized by $n\in\mathbb{N}$, a language $L\subseteq\{0,1\}^*$ is in $\BQP_{\lambda}^\O$ if $\BQP_{\lambda}$ algorithm $A^\O_\lambda$ with access to $\O$ that decides $L$ in polynomial time.
\end{definition}

\textbf{Remark on Forrelation:}\label{rm-f}
We describe a single oracle $\mathcal{O}$ from $\{0, 1\}^n$ to $\{0, 1\}$. It is trivial to see that $U_{\mathcal{O}} \ket{x}\ket{+} = \ket{x} \ket{+}$, and $U_{\mathcal{O}} \ket{x}\ket{-} = (-1)^{\mathcal{O}(x)}\ket{x} \ket{-}$. When we consider the noisy circuit in Fig.~\ref{Fig:Oracle}, using a similar calculation as in \thm{rigor-4}, we can observe that this control qubit, with probability $p_1$, will remain in the $\ket{-}$ state; otherwise, it transitions to $\ket{+}$. Here, $p_1 = (1-\lambda)(1-\lambda/2) + \lambda/2$. Thus, with probability $p_1$, the oracle behaves correctly; otherwise, we can consider the oracle $\mathcal{O}_f$ as an identity.
In the Forrelation problem, if the oracle malfunctions, the accept and reject probabilities will be the same whether $f, g$ are Forrelated or uniformly distributed, since now $f$ can be viewed as identity. Otherwise, we have the correct Fourier sampling circuit. Hence, the advantage still remains but only shrinks by $(1-p_1)$, which is a single constant.

\end{document}